\def\endthebibliography{%
	\def\@noitemerr{\@latex@warning{Empty `thebibliography' environment}}%
	\endlist
}
\newcommand{\rEm}{\mathrm{E}_{\max}}
\newcommand{\Var}{\mathrm{Var}}
\newcommand{\Cov}{\mathrm{Cov}}
\pgfplotsset{compat=newest}
\newtheorem{theorem}{Theorem}
\newtheorem{lemma}{Lemma}
\theoremstyle{definition}
\newtheorem{remark}{Remark}
\theoremstyle{definition}
\newtheorem{definition}{Definition}
\newcommand{\Ac}{\mathcal{A}}
\newcommand{\Bc}{\mathcal{B}}
\newcommand{\Dc}{\mathcal{D}}
\newcommand{\Ec}{\mathcal{E}}
\newcommand{\Gc}{\mathcal{G}}
\newcommand{\Sc}{\mathcal{S}}
\newcommand{\Tc}{\mathcal{T}}
\newcommand{\EE}{\mathbb{E}}
\newcommand{\PP}{\mathbb{P}}
\newcommand{\dmax}{\mathrm{d_{max}}}
\newcommand{\ER}{\mathrm{ER}}
\newcommand{\rE}{\mathrm{E}}
\newcommand{\PV}{\nu}
\newcommand{\kb}{\overline{k}}
\DeclarePairedDelimiter\ceil{\lceil}{\rceil}
\newcommand{\ID}{\mathrm{IND}}
\newcommand{\bigO}{\mathcal{O}}
\newcommand{\barD}{\overline{\mathcal{D}}}
\begin{document}

\title{A Fast Binary Splitting Approach for Non-Adaptive Learning of Erd\H{o}s--R\'enyi Graphs}
%
%

%

\author{Hoang Ta and Jonathan Scarlett

\thanks{This work is supported by the National University of Singapore under the Presidential Young Professorship scheme. \scriptsize (Corresponding author: Hoang Ta).}

\thanks{ Hoang Ta was with the Department of Computer Science, School of Computing, National University of Singapore (NUS), Singapore 117417. He is now with the Department of Computer Science, Hanoi University of Science and Technology, Vietnam (e-mail: hoang.taduy@hust.edu.vn).}
 

\thanks{Jonathan Scarlett is with the Department of Computer Science, Department of Mathematics, and Institute of Data Science, NUS, Singapore 117417 (e-mail: scarlett@comp.nus.edu.sg).}
}


\maketitle

\begin{abstract}
We study the problem of learning an unknown graph via group queries on vertex subsets, where each query reports whether at least one edge is present among the queried vertices. In general, learning arbitrary graphs with \(n\) vertices and \(k\) edges is hard in the non\mbox{-}adaptive setting, requiring \(\Omega \big(\min\{k^2\log n,\,n^2\}\big)\) tests even when a small error probability is allowed. We focus on learning Erd\H{o}s–R\'enyi (ER) graphs \(G\sim\ER(n,q)\) in the non\mbox{-}adaptive setting, where the expected number of edges is \(\kb=q\binom{n}{2}\), and we aim to design an efficient testing–decoding scheme, namely, a non-adaptive test design together with a decoding algorithm, achieving asymptotically vanishing error probability. Prior work (Li–Fresacher–Scarlett, NeurIPS 2019) presents a testing–decoding scheme that attains an order\mbox{-}optimal number of tests \(\bigO(\kb\log n)\) but incurs \(\Omega(n^2)\) decoding time, whereas their proposed sublinear-time algorithm incurs an extra $\log \kb \cdot \log n$ factor in the number of tests. We extend the binary splitting approach—recently developed for non\mbox{-}adaptive group testing—to the ER graph learning setting, and prove that the edge set can be recovered with high probability using \(\bigO(\kb\log n)\) tests while attaining decoding time \(\bigO(\kb^{1+\delta}\log n)\) for any fixed \(\delta>0\).
\end{abstract}

\begin{IEEEkeywords}
Group testing, graph learning, binary splitting
\end{IEEEkeywords}

\IEEEpeerreviewmaketitle

\section{Introduction}
\label{sec:intro}

One of the central problems in learning theory and combinatorial inference is to recover the structure of an unknown graph from indirect observations. In many settings of interest, edges are not observed directly; instead, one has access to a query mechanism. In this paper, we focus on the case that we can query subsets of vertices and learn whether at least one edge is present among them. We refer to this task as \emph{graph learning via edge-detecting queries}. This problem arises in applications such as identifying which chemicals react with each other using tests that only detect whether any reaction occurs~\cite{bouvel2005combinatorial}, and it is closely related to the widely studied group testing problem \cite{aldridge2019group}.

The objective is to design a testing–decoding scheme that minimizes the number of queries while ensuring efficient recovery. Here, a testing–decoding scheme consists of both the test design (i.e., the collection of query subsets) and the decoder that reconstructs the graph from the observed test outcomes.
Two testing models are commonly considered. In the \emph{adaptive} setting, queries are issued in rounds and may be chosen based on previously observed outcomes; at each round, the next query is selected as a function of the outcomes so far until sufficient information has been accumulated for reconstruction. In this setting, the problem is well understood~\cite{johann2002group}, and can be solved using $\bigO(k \log n)$ queries for arbitrary graphs with \(n\) vertices and \(k\) edges. In the \emph{non\mbox{-}adaptive} setting, all query subsets are fixed in advance and evaluated in a single batch, with no dependence on intermediate outcomes. An impossibility result~\cite{abasi2019learning} shows that, in the worst case over graphs with a bounded number of edges, significantly more non\mbox{-}adaptive tests are required:  Any non\mbox{-}adaptive scheme that identifies arbitrary graphs with \(n\) vertices and \(k\) edges must use at least \(\Omega \big(\min\{k^2\log n,\,n^2\}\big)\) tests, even under a small-error criterion.

To circumvent the difficulty of worst-case graphs, a natural starting point is to study the Erd\"os–R\'enyi random graph model \(\ER(n,q)\), in which each possible edge appears independently with probability \(q\). Under this model, prior work~\cite{li2019learning} in the non\mbox{-}adaptive setting presents several designs that achieve asymptotically vanishing error with \(\bigO(\kb\log n)\) tests (where \(\kb = q \binom{n}{2}\) is the expected number of edges), but with decoding time that is not sublinear in \(n\) (e.g., \(\bigO(n^{2}\,\kb\log n)\)). Moreover, a general converse shows that any non\mbox{-}adaptive test design achieving asymptotically vanishing error must use at least \(\Omega \big(\kb\log (n^{2}/\kb)\big)\) tests, implying that the \(\bigO(\kb\log n)\) scaling is information-theoretically optimal in most regimes of \(\kb\). The same work also introduces a testing–decoding scheme based on the GROTESQUE algorithm from group testing~\cite{cai2017efficient}, achieving sublinear decoding time \(\bigO \big(\kb\log^{2}\kb+\kb\log n\big)\), but this comes at the expense of requiring \(\bigO \big(\kb\log \kb \cdot \log^{2} n\big)\) tests.

In this paper, we continue this line of work and propose a testing--decoding scheme that requires $\bigO(\kb \log n)$ tests while achieving a decoding time of $\bigO(\kb^{1+\delta}\log n)$ for any fixed $\delta>0$. This attains (nearly) the best of both objectives in terms of the number of tests and the decoding time. To achieve this goal, we extend the \emph{binary splitting} approach for non\mbox{-}adaptive group testing, introduced in~\cite{price2020fast}, to the graph-learning setting.  Despite the conceptual similarity, the details become significantly different and challenging in this setting for several inter-related reasons, e.g.:
\begin{itemize}
    \item A vertex being in a negative test is not in itself conclusive of anything. Rather, we can only detect \emph{non-edge pairs} in negative tests.
Thus, we need a careful design that ensures most non-edge pairs will be in at least one test \emph{without} any edge pairs, and this consideration of pairs complicates matters compared to that of individual items in standard group testing.
    \item When we group items into ``nodes'' at various levels following~\cite{price2020fast}, internal edges within such a node makes all of its tests positive and thus lacking in information on non-edge pairs.
    \item Our initial solution incurs $\kb^{1.5}$ dependence in the decoding time due to checking $\Omega(\sqrt{\kb})$ tests for each of $\Omega(\kb)$ pairs, and bringing the decoding time down to $\kb^{1+\delta}$ requires further refinements based on carefully splitting the graph learning problem into smaller subproblems.
\end{itemize}

\subsection{Related Work}
\label{sec:related_work}

The problem of learning a graph via edge-detecting queries can be viewed as a form of group testing with structural constraints~\cite[Section~5.8]{aldridge2019group}. In standard group testing, the goal is to identify \(K\) defective items among \(N\) items by querying subsets, and many variants have been studied (e.g., noisy vs.\ noiseless, as well as various pooling constraints). A particularly important distinction is that between the small-error setting, which allows an asymptotically vanishing failure probability (see~\cite{aldridge2019group} for a survey), and the zero-error setting, which requires exact identification in all instances (see~\cite{du1999combinatorial} for a survey). These two settings lead to very different test requirements: in the small-error case \(\bigO(K\log N)\) tests suffice, whereas in the zero-error case \(\Omega(\min\{N,\,K^2\})\) tests are  required. Graph learning with edge-detecting queries can be seen as a form of constrained group testing in which the ``items'' are potential edges and the admissible query sets are restricted by the graph structure. Throughout this work, we focus on the small-error setting and study the learning of Erd\H{o}s–R\'enyi graphs, which is a fundamental subclass of general graphs.

Early work on graph learning with edge-detecting queries first focused on finding a single hidden edge~\cite{aigner1986search,aigner1988searching}, and later on finding multiple edges~\cite{johann2002group} in a more general scenario in which the “defective’’ graph \(G\) is known to be a subgraph of a larger graph \(H\). In particular, an adaptive procedure was proposed that recovers \(G\) using \(k \log_2 \big(\tfrac{|E(H)|}{k}\big) + \bigO(k)\) tests, which is optimal up to the lower-order \(\bigO(k)\) term, where \(k\) is the number of edges of \(G\). Beyond this general setup, several works analyze specific families such as matchings, stars, and cliques~\cite{grebinski1998reconstructing,alon2004learning,alon2005learning}. While the adaptive setting is relatively well understood, the non\mbox{-}adaptive setting~\cite{abasi2019learning,kameli2018non} and adaptive strategies with a small number of stages~\cite{du2006pooling,bshouty2015linear,abasi2019learning} remain more challenging. In the Erd\H{o}s–R\'enyi model \(G\sim\ER(n,q)\), where the expected number of edges is \(\kb=q\binom{n}{2}\), prior work~\cite{li2019learning} in the non\mbox{-}adaptive setting shows that asymptotically vanishing error can be achieved with \(\bigO(\kb\log n)\) tests, but with decoding time that is at least quadratic in \(n\).  As noted above, they also propose a sublinear-time algorithm based on GROTESQUE \cite{cai2017efficient}, but it incurs an extra $\log \kb \cdot \log n$ term in the number of tests.  The problem has also been generalized to hypergraphs; see~\cite{angluin2006learning,angluin2008learning,d2016multistage,abasi2018non}, and an Erd\H{o}s–R\'enyi–type model for hypergraphs has likewise been explored~\cite{austhof2025non}.

The binary splitting approach in group testing was first developed in the adaptive setting~\cite{hwang1972method}, where it achieves near-optimal test complexity by recursively bisecting positive pools to isolate defectives. More recently, this method has been adapted to the non\mbox{-}adaptive setting—retaining the spirit of binary splitting through carefully designed test matrices and efficient decoders—and notably achieves $\bigO(k \log n)$ scaling in both the number of tests and decoding time~\cite{price2020fast,cheraghchi2020combinatorial, price2023fast,wang2023quickly}.

\subsection{Overview of Techniques and Contributions}

We extend the binary splitting approach to the non-adaptive setting of graph learning via edge-detecting queries in the Erd\H{o}s--R\'enyi model
\(\mathrm{ER}(n,q)\), where \(n\) is the number of vertices and \(q=\Theta\bigl(n^{-2(1-\theta)}\bigr)\) for some fixed
\(\theta\in(0,1)\). Let \(\kb\) denote the expected number of edges in \(G\sim \mathrm{ER}(n,q)\), namely \(\kb=q\binom{n}{2}\). By analogy with
group testing, we also refer to edges as defective pairs of vertices. We propose a testing--decoding scheme that uses \(\bigO(\kb\log n)\)
tests and, with high probability, recovers the edge set of \(G\) with decoding time \(\bigO(\kb^{1+\delta}\log n)\) for any fixed
\(\delta>0\). Our method is summarized as follows:

\paragraph{Binary splitting}
Building on the binary splitting idea in~\cite{price2020fast,cheraghchi2020combinatorial}, we organize the
vertices into a binary hierarchy of groups over levels \(\ell=\lceil \log_2 \sqrt{\kb}\,\rceil,\ldots,\log_2 n\), where
level~\(\ell\) has \(g=2^\ell\) blocks of size \(n/2^{\ell}\) vertices, and each level's groups are formed by splitting the previous level's blocks into two equal-size sub-blocks (see Figure \ref{fig:tree_structures}). 
At each level, we perform \(\Theta(\sqrt{\kb})\) independent
repetitions, and in each repetition, each block is assigned uniformly at random to one of \(\Theta(\sqrt{\kb})\) tests. This \(\sqrt{\kb}\)-by-\(\sqrt{\kb}\) design is based on balancing two desired properties:
\begin{itemize}
    \item[(i)] We would like most non-defective pairs to be placed together in at least one test (to give a chance for their non-defectivity to be detected).  Within each repetition this occurs with probability \(\Theta(1/\sqrt{\kb})\), and by having \(\Theta(\sqrt{\kb})\) repetitions, we can raise this to a suitably-designed constant.
    \item[(ii)] We would like the probability of a positive test to be bounded by a suitable constant strictly less than 1, so that the defective pairs do not ``drown out'' the non-defective pairs and make them undetectable.  Since there are $\bigO(\kb)$ defective pairs with high probability, we can readily verify that the \(\sqrt{\kb}\)-by-\(\sqrt{\kb}\) design indeed ensures this.
\end{itemize}
Moreover, the number of tests is controlled to $\bigO(\kb)$ per level, and thus \(\bigO(\kb\log n)\) overall.

The decoder works by successively iterating from coarse levels to fine levels. It maintains a set of possibly defective pairs, eliminates those ruled
out by negative tests, and refines the surviving pairs into child pairs at the next level. At each refinement step, every surviving pair
produces only a constant number of children, while each non-defective pair has a constant-order probability of being eliminated.  By controlling these constants, we can ensure that the `false positive' 
candidates evolve (at least heuristically) in a similar manner as a subcritical branching process: Although refinement creates new candidates, elimination dominates on
average, so the total number of possibly defective pairs remains under control across all levels.   Moreover, additional testing is done at the final level to identify the \emph{exact} set of defective pairs (as opposed to only a superset, which is what we obtain at the earlier levels).

We formalize the above intuition in Theorem~\ref{thm:main}, showing that the scheme recovers all edges with high probability using \(\bigO(\kb\log n)\) tests. The decoding time is \(\bigO(\kb^{1.5}\log n)\) when \(\theta>1/2\), and \(\bigO(\kb^{1.5}\log^{2}\kb\,\log n)\) when \(\theta\le 1/2\). The \(\kb^{1.5}\) factor comes from checking \(\Theta(\sqrt{\kb})\) candidate tests per retained pair.

\paragraph{Partitioning and permutations}
To overcome the \(\kb^{1.5}\) bottleneck in the decoding time, we partition the vertex set into
\(m=\kb^{(1-\gamma)/2}\) balanced parts, where \(\gamma \in \bigl(0,\min\{1,\frac{1-\theta}{3\theta}\}\bigr)\) is a fixed constant,
and apply the binary splitting design separately to each subproblem induced by the union of a pair of parts. Each such subproblem has expected edge count only
\(\bigO(\kb^\gamma)\), so the extra multiplicative term in the per-subproblem decoding cost drops from \(\sqrt{\kb}\) to \(\kb^{\gamma/2}\),
thus yielding a much smaller overall decoding time upon combining the \(\bigO(\kb^{1-\gamma})\) subproblems.

A key difficulty is that the partition must be suitably ``favorable'' for the local decoder to succeed, or at least for our particular analysis to work.  Informally, at the base level of the recursion,
it would be ideal if the blocks were sparse enough such that \emph{no block contains an internal edge}, as this prevents the undesirable scenario that placing a block in a given test automatically makes it positive, allowing us to apply our analysis techniques with sharper bounds.  
However, a naive union bound over all \(\bigO(\kb^{1-\gamma})\) subproblems turns out to be too loose for achieving this favorable property. To address this, we incorporate a
\emph{permutation amplification} step: We sample \(c=c(\gamma)\) independent random permutations from a pairwise independent family and run the
binary splitting design under each permutation. Each permutation is favorable with probability at least \(1 - \bigO(\kb^{-\gamma})\), and over \(c\) independent trials the probability that \emph{none} is favorable for a given subproblem decays polynomially in~\(\kb\). This makes the union bound over all subproblems go through, 
and the total number of tests remains
\(\bigO(\kb\log n)\). As stated in Theorem~\ref{thm:permutation_method}, this yields recovery of the edge set with high probability using \(\bigO(\kb\log n)\) tests and decoding time \(\bigO(\kb^{1+\delta}\log n)\) for any fixed \(\delta>0\) (where the hidden constant also depends on $\delta$).
\par 
Our results, as well as existing results for the Erd\H{o}s--R\'enyi  model, are summarized in Table \ref{tab:results}.

\begin{table*}[t]
  \centering
  \begin{tabular}{|c|c|c|}
    \hline
    \textbf{Reference} & \textbf{Number of tests} & \textbf{Decoding time} \\ \hline
    COMP~\cite{li2019learning} & $\bigO(\kb \cdot \log n)$ & $\Omega(n^2)$ \\ \hline
    GROTESQUE~\cite{li2019learning} & $\bigO(\kb \cdot \log \kb \cdot \log^2 n)$ & $\bigO(\kb \cdot \log^2 \kb + \kb \cdot \log n)$ \\ \hline
    {\bf Our Theorem \ref{thm:main}} & $\bigO(\kb \cdot \log n)$ & $\bigO(\kb^{1.5} \log n)$\footnotemark \\ \hline
    {\bf Our Theorem \ref{thm:permutation_method}} & $\bigO(\kb \cdot \log n)$ & $\bigO(\kb^{1+\delta} \cdot \log n)$ for any fixed $\delta>0$ \\ \hline
  \end{tabular}
   \vspace{0.5em}
  \caption{Overview of existing non-adaptive schemes for learning Erd\H{o}s--R\'enyi graphs $\ER(n,q)$. Here, $n$ is the number of vertices, and $\kb = q\binom{n}{2}$ is the expected number of edges. In the final row, the bound is \(\bigO(\kb^{1+\delta} \cdot \log n)\) for any fixed $\delta > 0$, but the hidden constant may depend on \(\delta\).}
  \label{tab:results}
\end{table*}

\footnotetext{We also reduce this to $\bigO(\kb \log^2 \kb \cdot \log^2 n)$ with ${\rm poly}(n)$-time pre-processing, but we are mainly focused on the case that there is no such pre-processing. }
\section{Problem Setup and Preliminaries}
\label{sec:setup}
We consider the problem of learning the edge structure of an unknown undirected graph \(G=(V,E)\).
The vertex set is \(V= [n] := \{1,2,\dots,n\}\), while the edge set \(E\subseteq \binom{V}{2}\) is random.
Specifically, under the Erd\H{o}s--R\'enyi (ER) model~\cite{li2019learning}, each potential edge \((i,j)\in\binom{V}{2}\) is included independently with probability \(q=q(n)\), so that \(G\sim \ER(n,q)\).
Once drawn, the graph remains fixed but is unknown, and the learner has no direct access to \(E\). 
Information about \(G\) is obtained through \emph{edge-detecting queries}:
given a subset \(S\subseteq V\), a query returns one bit indicating whether there exists an edge of \(E\) fully contained in \(S\). The objective is to design a \emph{non-adaptive} collection of such queries—fixed in advance—together with a decoder that reliably reconstructs \(E\). 

The problem setting is described in more detail as follows. The edge set is to be recovered through a sequence of binary queries.
A query is represented by a vector \(X\in\{0,1\}^n\), where \(X_i=1\) indicates the inclusion of vertex \(i\).
The corresponding outcome is
\[
  Y \;=\; \bigvee_{(i,j)\in E} \bigl( X_i \land X_j \bigr),
\]
that is, the output equals one if at least one edge of \(E\) is entirely contained in the selected set of vertices, and zero otherwise.
In the \emph{non-adaptive} setting, the collection of test vectors
\(X^{(1)},\dots,X^{(t)}\) must be fixed in advance, without access to intermediate outcomes, which renders the reconstruction task more difficult than in the adaptive case.

Given the observed outcomes \(\{Y^{(i)}\}_{i=1}^t\),
a decoder produces an estimate \(\widehat{G}=(V,\widehat{E})\).
The performance is measured by the error probability
\[
  P_e \;\coloneqq\; \PP \big[\,\widehat{E}\neq E\,\big],
\]
where the probability is taken over both the Erd\H{o}s--R\'enyi graph and the randomness of the test design.
The objective is to design a procedure such that \(P_e \to 0\) as \(n \to \infty\).
\medskip
\subsection{Sparsity Regime}
\label{sec:sparsity}
We focus on sparse graphs, which commonly arise in applications such as biological interaction networks, road networks, and sensor graphs.
Sparsity is parameterized by a constant \( \theta \in (0,1) \) via
\[
   q \;=\; \Theta \bigl(n^{-2(1-\theta)}\bigr),
\]
so that the expected number of edges satisfies
\[
   \kb \;=\; q\binom{n}{2} \;=\; \Theta \bigl(n^{2\theta}\bigr).
\]
Equivalently, as \( \theta \) ranges over \( (0,1) \), we have \(    n^{-2} \ll\; q \ll\; 1\), and thus $1 \ll \kb \ll n^2$, 
where \( f(n)\ll g(n) \) abbreviates \( f(n)=o\bigl(g(n)\bigr) \).

These sparsity regimes are well studied in random-graph theory and information-theoretic learning. 
For example, the very sparse case \( \theta \leq \tfrac{1}{2} \) is relevant to connectivity thresholds and percolation~\cite{bollobas2011random}, and the denser range \( \theta>\tfrac{1}{2} \) appears in social-network modeling and large-scale biological systems~\cite{newman2003structure}.
\subsection{Mathematical and Computational Assumptions}
\label{sec:assumption}
Throughout the paper, we work in the unit-cost word-RAM model. With \(n\) vertices and \(T\) tests, reading any integer in \(\{1,\dots,n\}\), performing basic arithmetic on such integers, and retrieving any test outcome (indexed by \(\{1,\dots,T\}\)) each take \(\bigO(1)\) time.

Without loss of generality, we assume that \( n \) is a power of two. When this is not the case, the graph can be augmented by adding up to \( 2^{\lceil \log_2 n \rceil} - n \) dummy vertices so that the total number of vertices equals the next power of two. Since these additional vertices are isolated and introduce no edges, they do not affect the test outcomes or the recovery procedure. Consequently, all subsequent results remain valid under this assumption.

\subsection{Level Graphs}
\label{sec:level_graph}

As we have already hinted, our algorithm considers groups of nodes at various levels that are always tested together, starting with larger groups that are recursively split into smaller groups until the final level containing only singletons.  We will provide further details when describing our algorithm; see in particular Figure \ref{fig:tree_structures} in Section \ref{sec:binary_spliting}.  For now, we only give a brief description to the extent required in the next subsection.


For each \emph{level} \(\ell \in \{\ceil{\log_2 \sqrt{\kb}},\dots,\log_2 n\}\), let \(g \coloneqq 2^{\ell}\), and let \(\{\mathcal{G}_1,\dots,\mathcal{G}_g\}\) be a balanced partition of \(V\) into \(g\) blocks, each of size \(n/g\), where
\[
  \mathcal{G}_i \coloneqq \big\{\, (i-1)\tfrac{n}{g}+1,\ \dots,\ i\tfrac{n}{g} \,\big\}, \qquad i \in [g].
\]
We call a block $\Gc_i$ \emph{defective} if it contains at least one edge of $G$, and we call $(\Gc_i,\Gc_j)$ a \emph{defective pair} if $\Gc_i \cup \Gc_j$ contains at least one edge of $G$.  Then, we introduce the following quantities that we will use regularly:
\begin{itemize}
   \item $\PV_g$: the number of defective blocks among the $g$ blocks.
   \item $G_g=(V_g,E_g)$: the induced graph at level $\ell$, where $g=2^\ell$, $V_g=[g]$, and an edge $(i,j)$ is present if and only if $(\Gc_i,\Gc_j)$ is defective.  We refer to $G_g$ as the \emph{block graph} at level $\ell$. 
   \item $d_g(i)$: the degree of a non-defective block $\Gc_i$ in the block graph $G_g$.
   \item $d(G_g)$: the maximum degree in $G_g$ among all non-defective blocks.
\end{itemize}

\subsection{Typical Graphs}
\label{sec:typical_graphs}

In order to analyze the performance of our decoding procedures, it is useful to restrict attention to graphs that behave in a “regular” manner.  
Specifically, we identify a high-probability subset of Erd\H{o}s--R\'enyi graphs in which the number of edges, as well as various structural quantities across multiple partition levels, remain within controlled bounds.  
These graphs, which we refer to as the \emph{typical set of graphs}, form the basis of our subsequent analysis.

\begin{definition} \label{def:typical}
    Let \( (\epsilon_n)_{n \in \mathbb{N}} \) be a sequence with \( \epsilon_n \to 0 \). We define the \( \epsilon_n \)-typical set of graphs $\Tc(\epsilon_n)$ as the collection of graphs $G$ satisfying the following:  
    
    \begin{enumerate}[label=(\roman*)]
        \item {
        The number of edges is close to its expectation, i.e.,
    \begin{equation}
     (1-\epsilon_n)\kb \leq k \leq (1+\epsilon_n)\kb,  \text{ where }k=|E|\,. \label{eq:k_concentraton}
    \end{equation}
        }
        \item{
        At every level $\ell \in \{\ceil{\log_2 \sqrt{\kb}},\dots,\log_2 n\}$ with $g=2^{\ell}$, the quantities $|E_g|$, $\PV_g$, and $d(G_g)$ (see Section \ref{sec:level_graph}) are bounded as follows:
    \begin{alignat}{2}
      |E_g| &\le \rEm &&\coloneqq 
      \begin{cases}
        4\kb & \theta > \tfrac{1}{2},\\
        2\kb \log^2 \kb & \theta \le \tfrac{1}{2},
      \end{cases} \label{eq:number_edges_level} \\[1em]
      \PV_g & \le \PV_{\max} &&\coloneqq 
      \begin{cases}
        \dfrac{2\kb}{g} & \theta > \tfrac{1}{2},\\
        2\sqrt{\kb} & \theta \le \tfrac{1}{2},
      \end{cases} \label{eq:number_defective_level} \\[1em]
      d(G_g) &\le \mathrm{d}_{\max} &&\coloneqq 
      \begin{cases}
        \dfrac{10\kb}{g} & \theta > \tfrac{1}{2},\\
        8\sqrt{\kb} & \theta \le \tfrac{1}{2}.
      \end{cases} \label{eq:degree_level}
    \end{alignat}
        }
    \end{enumerate}
\end{definition}


Condition (i) ensures concentration of the global edge count, while condition (ii) guarantees regularity across different scales of partition: The number of induced edges, the number of defective blocks, and the maximum degree of non-defective nodes are all bounded in a certain manner depending on the sparsity regime.  
While condition~(i) was adopted in~\cite{li2019learning}, condition (ii) is much more specific to our algorithm.  The following lemma shows that both conditions hold with high probability.

\begin{lemma} \label{lem:typical}
Fix \( \theta \in (0,1) \), and let \( G \sim \mathrm{ER}(n, q) \) with \( q = \Theta(n^{-2(1-\theta)}) \).  
Then there exists a sequence \( \epsilon_n \to 0 \) such that
\[
  \PP[G \in \Tc(\epsilon_n)] \to 1 \quad \text{as } n \to \infty.
\]
\end{lemma}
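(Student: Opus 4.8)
\emph{Proof plan.} The plan is to exhibit a slowly-decaying sequence $\epsilon_n \to 0$ — for concreteness $\epsilon_n := \kb^{-1/3}$, which is valid since $\kb = \Theta(n^{2\theta}) \to \infty$ — and to show $\PP[G \notin \Tc(\epsilon_n)] \to 0$ by a union bound over the defining events of $\Tc(\epsilon_n)$, each controlled via a Chernoff-type tail bound that exploits the independence of disjoint groups of potential edges in $\ER(n,q)$.

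Condition (i) is immediate: $k = |E| \sim \mathrm{Bin}\big(\binom{n}{2},q\big)$ has mean $\kb$, so the multiplicative Chernoff bound gives $\PP[\,|k-\kb| > \epsilon_n\kb\,] \le 2\exp(-\epsilon_n^2\kb/3) = 2\exp(-\Theta(\kb^{1/3})) \to 0$. For the $|E_g|$ part of condition (ii), I would use the deterministic observation that every edge of $G$ lies inside at most one block pair, so $|E_g| \le k$ at every level; hence on the event of (i) we get $|E_g| \le (1+\epsilon_n)\kb \le \rEm$ in both regimes, since $4\kb$ and $2\kb\log^2\kb$ each exceed $2\kb$ for large $n$. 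This contributes no further failure probability.

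The substantive estimates are those on $\PV_g$ and $d(G_g)$. Fix a level with $g = 2^\ell \ge \sqrt{\kb}$. The structural point is that the events $\{\Gc_i \text{ defective}\}_{i\in[g]}$ depend on pairwise-disjoint sets of potential edges (those inside the respective blocks), hence are mutually independent, so $\PV_g$ is a sum of $g$ independent Bernoulli variables with expectation at most $g\cdot q\binom{n/g}{2} \le (1+o(1))\kb/g$; similarly, conditioning on $\Gc_i$ being non-defective (an event about edges inside $\Gc_i$) does not affect the cross-block indicators $\{\exists\ \text{edge between }\Gc_i,\Gc_j\}_{j \ne i}$, which remain mutually independent, so $d_g(i)$ under this conditioning is a sum of $g-1$ independent Bernoulli variables with expectation at most $(g-1)q(n/g)^2 \le (2+o(1))\kb/g$. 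From here I would apply the Chernoff upper tail. When $\theta > 1/2$, we have $\kb/g \ge \kb/n = \Theta(n^{2\theta-1}) \to \infty$ at every level and the thresholds $\PV_{\max} = 2\kb/g$, $\dmax = 10\kb/g$ leave a constant multiplicative slack over the means, so each such event fails with probability $\exp(-\Omega(n^{2\theta-1}))$. When $\theta \le 1/2$ the thresholds are $2\sqrt{\kb}$ and $8\sqrt{\kb}$, which exceed (a fixed constant factor times) the corresponding means because $g \ge \sqrt{\kb}$; here I would split into the case where the mean is at least a constant fraction of $\sqrt{\kb}$ (standard Chernoff, $\PP[X \ge c\mu] \le e^{-\Omega(\mu)} = e^{-\Omega(\sqrt{\kb})}$) and the case where the mean is much smaller (the large-deviation form $\PP[X \ge t] \le (e\mu/t)^t$ with $t = \Theta(\sqrt{\kb})$ and $\mu < \sqrt{\kb}/2$, which yields $e^{-\Omega(\sqrt{\kb})}$ since then $e\mu/t < 1$ is bounded away from $1$), so that in all cases each event fails with probability $\exp(-\Omega(n^{\theta}))$.

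It remains to union-bound: one event from (i), one $\PV_g$-event per level ($O(\log n)$ total), and one $d_g(i)$-event per non-defective block per level ($O(n\log n)$ total). Since both $n^{2\theta-1}$ and $n^\theta$ are positive powers of $n$, the bounds $\exp(-\Omega(n^{2\theta-1}))$ and $\exp(-\Omega(n^\theta))$ decay faster than $1/(n\log n)$, so the total failure probability is $o(1)$ and $\PP[G \in \Tc(\epsilon_n)] \to 1$. The only nonroutine point is the $\theta \le 1/2$ regime for $\PV_g$ and, especially, $d(G_g)$: at the finer levels the relevant means are $o(\sqrt{\kb})$ and can even be $o(1)$, so a crude constant-factor Chernoff bound is insufficient and one must use the Poisson-tail form of the bound and check that it still survives the union bound over up to $n$ blocks at each of $O(\log n)$ levels; carefully pinning down the independence claims above is what makes this step clean.
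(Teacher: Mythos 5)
Your treatment of condition (i), of $\PV_g$, and of $d(G_g)$ matches the paper's argument in substance (independence of block-disjoint edge sets, multiplicative Chernoff when the mean is large, the Poisson-tail form $\PP[X\ge t]\le (e\mu/t)^t$ when it is small, and a union bound over at most $n$ blocks and $O(\log n)$ levels), and those parts are fine. However, there is a genuine gap in your handling of $|E_g|$. Your deterministic claim that ``every edge of $G$ lies inside at most one block pair, so $|E_g|\le k$'' is false under the paper's definition of the block graph: $(\Gc_i,\Gc_j)$ is a defective pair whenever $\Gc_i\cup\Gc_j$ contains an edge, so a single edge \emph{internal} to a block $\Gc_i$ makes $(\Gc_i,\Gc_j)$ an edge of $G_g$ for \emph{every} $j\ne i$, contributing up to $g-1$ edges to $E_g$ from one edge of $G$. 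Consequently $|E_g|$ can vastly exceed $k$ (e.g., at a fine level with $g$ close to $n$ and $\theta<1/2$, one defective block already gives $|E_g|\approx g\gg \kb$), and condition \eqref{eq:number_edges_level} is not an automatic consequence of condition (i). This is exactly the ``internal edge'' complication the paper flags in the introduction, and it is why $\rEm$ is $2\kb\log^2\kb$ rather than $O(\kb)$ in the regime $\theta\le 1/2$ — something your argument could never explain, since you derive a bound of $(1+\epsilon_n)\kb$ with no failure probability at all.

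The correct route, as in the paper, is the inequality $|E_g|\le g\cdot\PV_g + k$: each defective block contributes at most $g$ block-graph edges, and each remaining block-graph edge requires a cross-block edge of $G$. For $\theta>1/2$ this combines with your high-probability bound $\PV_g\le 2\kb/g$ and $k\le 2\kb$ to give $|E_g|\le 4\kb$ with exponentially small failure probability. For $\theta\le 1/2$ a high-probability multiplicative bound on $g\cdot\PV_g$ is not available at fine levels (the mean of $\PV_g$ can be $o(1)$ while a single defective block contributes $g\gg\kb$), so one instead bounds $\EE[|E_g|]\le g\,\EE[\PV_g]+\EE[k]\le 2\kb$ and applies Markov's inequality with the threshold $2\kb\log^2\kb$, paying $O(1/\log^2\kb)$ per level, which still vanishes after the union bound over $O(\log n)$ levels. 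Your proposal is missing this step entirely, and with it the justification of \eqref{eq:number_edges_level}; the rest of your argument can stand as written.
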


The proof is ultimately mainly based on standard concentration inequalities and related tools, but the details are rather technical, so are deferred to Appendix \ref{sec:pf_typical}.


\section{Binary Splitting Approach for Graph Learning}
\label{sec:binary_spliting}

We now present our first algorithm for non-adaptive graph recovery. The method builds upon ideas from hierarchical binary splitting for standard group testing~\cite{price2020fast,cheraghchi2020combinatorial}. It consists of the construction of a sequence of tests, and a decoding procedure that recovers the edge set from the test outcomes. The main idea is to partition the vertex set hierarchically and perform tests on blocks  to determine whether edges exist between them.  Here, placing a block into a test means placing all of its items, essentially creating a ``super-item''.  

As illustrated in Figure \ref{fig:tree_structures}, at each level of the hierarchy, we divide the vertex set into smaller blocks (groups).  We then use randomized test assignments to probe the presence or absence of edges. Negative test outcomes allow us to rule out large numbers of candidate edge pairs efficiently.  When two larger blocks (say $\Gc_1,\Gc_2$ in generic notation) are not ruled out and thus believed to (potentially) have an edge between them, and we split them both into smaller blocks (say $\Gc_{1}^{(L)},\Gc_{1}^{(R)}$ and $\Gc_{2}^{(L)},\Gc_{2}^{(R)}$ of half the size), we treat all combinations of resulting edges as potentially possible: (i) an internal edge in one of the 4 smaller groups; (ii) a ``formerly internal'' edge between $\Gc_{i}^{(L)}$ and $\Gc_{i}^{(R)}$ for $i \in \{1,2\}$; or (iii) a ``cross-group'' edge between some $\Gc_{1}^{(\cdot)}$ and some $\Gc_{2}^{(\cdot)}$.

\begin{figure}[t]
  \centering
  \includegraphics[width=1.0\linewidth]{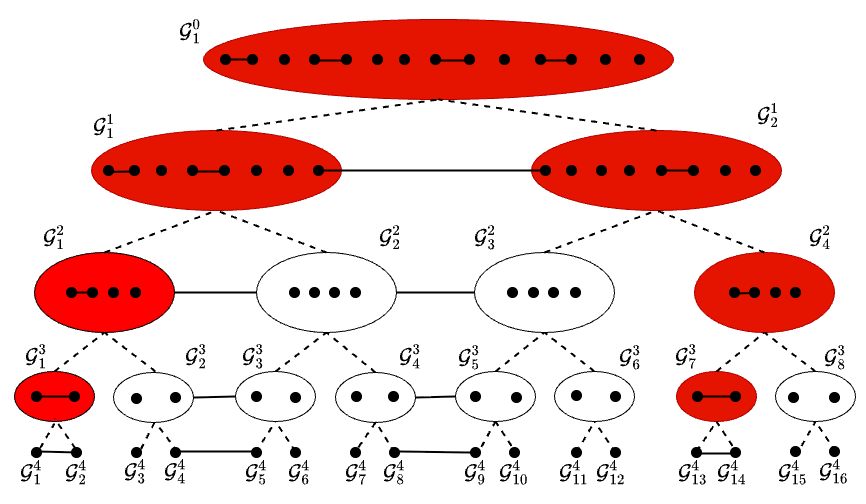} 
 \caption{Example tree structures at levels $\ell\in\{0,1,2,3,4\}$ for graph learning with $n=16$ and $k=4$. Defective blocks are colored red, and edges between blocks are indicated with a solid line.}
  \label{fig:tree_structures}
\end{figure}

\subsection{Testing Procedure}
\label{sec:testing}

As outlined above (and in Section \ref{sec:level_graph}), we construct a binary tree of vertex groups across levels \( \ell = \ceil{\log_2 \sqrt{\bar{k}}}, \ceil{\log_2 \sqrt{\bar{k}}} + 1, \dotsc, \log_2 n \). Each node in the tree corresponds to a block of vertices. At level \( \ell \), there are \( 2^{\ell} \) blocks \( \{\Gc^{(\ell)}_1, \dotsc, \Gc^{(\ell)}_{2^{\ell}}\} \), each containing \( n/2^{\ell} \) vertices.  We perform randomized testing as follows for some constants \( C_1, C_2 > 0 \) to be chosen later:
\begin{itemize}
    \item Create \( C_2 \sqrt{\kb} \) sequences of tests (``repetitions'' or ``iterations''), each of length \( C_1 \sqrt{\kb} \);
    \item Within each repetition, each group is assigned uniformly at random to one of these \( C_1 \sqrt{\kb} \) tests.
\end{itemize}
Intuitively, the reason for choosing $\bigO(\sqrt{\kb})$ tests per repetition and $\bigO(\sqrt{\kb})$ repetitions is that this amounts having $\bigO(\kb)$ tests total and an $\bigO(1/\kb)$ probability of a given test containing a given \emph{pair} of nodes.  These both match the scalings that arise in standard group testing \cite{price2020fast}, in particular ensuring that the probability of a given test being positive is bounded away from 0 and 1, implying non-vanishing entropy per test.  See Algorithm \ref{alg:testing} for a full description of the test design.

\begin{algorithm*}[t]
	\caption{Testing Procedure}
	\label{alg:testing}
	\begin{algorithmic}[1]
		\REQUIRE Number of vertices \( n \), average number of edges \( \kb \), constants $C_1>27,C_2 = C_1^2,C'>3$.
		\STATE Set \( \ell_{\min} \gets \ceil{\log_2 \sqrt{\bar{k}}} \)
		\FOR{each level \( \ell = \ell_{\min}, \dotsc, \log_2 n - 1 \)}
		\FOR{each iteration in \( \{1, \dotsc, C_2 \sqrt{\kb}\} \)} 
		\STATE Initialize a sequence of $C_1 \sqrt{\kb}$ tests
		\FOR{each block \( j = 1, \dotsc, 2^{\ell} \)}
		\STATE Assign block \( \Gc^{(\ell)}_j \) to a randomly chosen test among the $C_1 \sqrt{\kb}$ tests
		\ENDFOR
		\ENDFOR
		\ENDFOR
		\STATE At the final level $\ell = \log_2 n$, each block degenerates to a singleton, i.e., $\Gc_{j}^{(\ell)} = \{j\}$. Repeat steps 3--6 for $C' \log n$ rounds, with each round consisting of $C_1 C_2 \kb$ tests.
	\end{algorithmic}
\end{algorithm*}


\subsection{Decoding Procedure}
\label{sec:decoding}

The decoder is deterministic and proceeds in a coarse-to-fine manner over levels \(\ell= \ell_{\min}, \dots, \log_2 n \). At level \(\ell\), we maintain a set of \emph{possible defectives} \(\mathcal{PD}^{(\ell)}\), whose elements are block pairs \((\Gc_i,\Gc_j)\) that may still contain a true edge at the current resolution. A pair \((\Gc_i,\Gc_j)\) is \emph{discarded} from \(\mathcal{PD}^{(\ell)}\) if there exists a negative test that contains both \(\Gc_i\) and \(\Gc_j\). When no negative test contains both \(\Gc_i\) and \(\Gc_j\) we refine this pair: each block splits into two children, \(\Gc_i \to \{\Gc_i^{(L)},\Gc_i^{(R)}\}\) and \(\Gc_j \to \{\Gc_j^{(L)},\Gc_j^{(R)}\}\), and we add to \(\mathcal{PD}^{(\ell+1)}\) the six child–child pairs
\begin{equation}
\begin{gathered}
(\Gc_i^{(L)},\Gc_j^{(L)}),
(\Gc_i^{(L)},\Gc_j^{(R)}),
(\Gc_i^{(R)},\Gc_j^{(L)}),\\
(\Gc_i^{(R)},\Gc_j^{(R)}),
(\Gc_i^{(L)},\Gc_i^{(R)}),
(\Gc_j^{(L)},\Gc_j^{(R)}).
\end{gathered}
\label{eq:group_split}
\end{equation}

The four ``cross pairs'' capture edges potentially connecting the two blocks, while the two ``sibling pairs'' capture edges internal to either block.  Note that we do not need to explicitly keep track of which groups (potentially) have an internal edge, though those can still have a significant impact on our analysis due to all pairs involving that group being (potentially) defective. 
At the final level, any surviving vertex pair not eliminated by a negative test is declared an edge. The details are presented in Algorithm~\ref{alg:decoding}.

\begin{algorithm*}[t]
	\caption{Decoding Procedure}
	\label{alg:decoding}
	\begin{algorithmic}[1]
		\REQUIRE Outcomes of $t$ non-adaptive tests \( \{Y^{(i)}\}_{i=1}^t \) from Algorithm~\ref{alg:testing}, number of vertices \( n \), average number of edges \( \kb \)
		\STATE Initialize candidate set $\mathcal{PD}^{(\ell_{\min})} 
		= \{ (\Gc^{(\ell_{\min})}_i,\Gc^{(\ell_{\min})}_j) : i,j \in [2^{\ell_{\min}}] ,\, i \neq j \}$
		\FOR{each level \( \ell = \ell_{\min}, \dotsc, \log_2 n - 1 \)}
		\FOR{each pair \( (\Gc_i,\Gc_j) \in \mathcal{PD}^{(\ell)} \)}
		\IF{no negative test contains both \( \Gc_i \) and \( \Gc_j \)}
		\STATE Add all child pairs generated from $\Gc_i$ and $\Gc_j$ to \( \mathcal{PD}^{(\ell+1)} \) (see \eqref{eq:group_split})
		\ENDIF
		\ENDFOR
		\ENDFOR
		\STATE Let \( \widehat{E} \) be the set of pairs in $\mathcal{PD}^{(\log_2 n)}$ that are not included in any negative test at the final level
		\STATE Return graph estimate \( \widehat{G} = (V, \widehat{E}) \)
	\end{algorithmic}
\end{algorithm*}

%
%
\subsection{Algorithmic Guarantees}
\label{sec:algorithmic_guarantees}

In this section, we analyze the correctness and efficiency of our approach. We show that Algorithms~\ref{alg:testing} and~\ref{alg:decoding} use an order-optimal number of tests and, with high probability, recover the entire edge set of the underlying Erd\H{o}s--R\'enyi graph. We also establish the computational complexity of the decoding procedure.

    
\begin{theorem}
\label{thm:main}
Fix \( \theta \in (0,1) \), and let \( G \sim \ER(n, q) \) with \( q = \Theta \big(n^{-2(1 - \theta)}\big) \).
Let \( \kb = q \binom{n}{2} = \Theta \big(n^{2\theta}\big) \) denote the expected number of edges.
There exist constants \( C_1 > 27 \), \( C_2 = C_1^2 \), and \( C' > 3 \) such that the testing (Algorithm~\ref{alg:testing})–decoding (Algorithm~\ref{alg:decoding}) scheme achieves \( P_e \to 0 \) as \( n \to \infty \) with \( \bigO(\kb \log n) \) tests, and the decoding time is as follows, with probability \( 1 - o(1) \):
\begin{enumerate}[label=(\alph*)]
    \item If \( \theta > 1/2 \):
    \begin{itemize}
        \item The decoding time is \( \bigO \left(\kb^{1.5} \log n\right) \).
        \item With $\bigO(n^2 \sqrt{\kb} \log n)$ time pre-processing,\footnote{This refers to computation that can be done before observing any tests results, which is ``offline'' and thus may have more flexibility in taking longer.} the decoding time improves to \( \bigO \left(\kb \cdot \log^2 n\right) \).
    \end{itemize}

    \item If \( \theta \le 1/2 \):
    \begin{itemize}
        \item The decoding time is \( \bigO \left(\kb^{1.5} \log^2 \kb \log n\right) \).
        \item With $\bigO(n^2 \sqrt{\kb} \log n)$ time pre-processing, the decoding time improves to \( \bigO \left(\kb \log^2 \kb \cdot \log^2 n\right) \).
    \end{itemize}
\end{enumerate}
\end{theorem}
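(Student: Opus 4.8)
The plan is: (i) count the tests directly; (ii) observe that the ``no missed edge'' half of correctness is deterministic; (iii) control the size of the candidate set $\mathcal{PD}^{(\ell)}$ at every level via a drift-plus-concentration argument --- this is the crux; (iv) deduce vanishing error from the final level and read off the running time. Throughout, condition on $G\in\Tc(\epsilon_n)$, which holds with probability $1-o(1)$ by Lemma~\ref{lem:typical}, so the only remaining randomness is in the test design. For the test count, each of the $\log_2 n-\ell_{\min}=\bigO(\log n)$ intermediate levels uses $C_2\sqrt{\kb}$ repetitions of $C_1\sqrt{\kb}$ tests (hence $C_1C_2\kb$ tests), and the final level adds $C'\log n$ rounds of $C_1C_2\kb$ tests each, for a total of $\bigO(\kb\log n)$. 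For ``no missed edge'': any pair $(\Gc_i,\Gc_j)$ whose union contains an edge appears in every test containing both blocks, so all such tests are positive and the pair is never discarded; tracing a true edge $\{u,v\}$ down the hierarchy, the cross/sibling pairs carrying it are all of this type, so $\{u,v\}\in\mathcal{PD}^{(\log_2 n)}$ and, lying in no negative test, $\{u,v\}\in\widehat E$. Hence $E\subseteq\widehat E$ always.

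Write $\mathcal{PD}^{(\ell)}=\mathcal{D}^{(\ell)}\cup\mathcal{N}^{(\ell)}$ (disjoint), with $\mathcal{D}^{(\ell)}$ the pairs whose union contains an edge and $\mathcal{N}^{(\ell)}$ the rest. Each pair in $\mathcal{D}^{(\ell)}$ is an edge of the block graph $G_g$, so $|\mathcal{D}^{(\ell)}|\le|E_g|\le\rEm$ by typicality. Now fix $(\Gc_i,\Gc_j)\in\mathcal{N}^{(\ell)}$ and a single repetition: the two blocks collide with probability $1/(C_1\sqrt{\kb})$, and conditioned on a collision the resulting test is \emph{negative} unless (a) a defective block also lands there (probability $\le\PV_{\max}/(C_1\sqrt{\kb})$), (b) a $G_g$-neighbour of $\Gc_i$ or $\Gc_j$ lands there ($\le 2\dmax/(C_1\sqrt{\kb})$), or (c) two further blocks spanning a cross-edge of $G$ both land there ($\le k/(C_1^2\kb)$); by the typical-set bounds and $k\le 2\kb$ all three are $\bigO(1/C_1)$, so for $C_1>27$ the test is negative with probability at least an absolute constant $p_0>0$. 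Since repetitions use independent assignments, the pair survives level $\ell$ with probability at most $(1-p_0/(C_1\sqrt{\kb}))^{C_2\sqrt{\kb}}\le e^{-p_0C_1}=:\rho_0$, and $C_1$ (hence $C_2=C_1^2$) is chosen large enough that $\rho:=6\rho_0<\tfrac12$. As each surviving pair spawns at most $6$ children, conditioning on the history $\mathcal{F}_\ell$ through level $\ell$,
\[
  \EE\big[\,|\mathcal{PD}^{(\ell+1)}|\ \big|\ \mathcal{F}_\ell\,\big]\ \le\ 6|\mathcal{D}^{(\ell)}|+6\rho_0|\mathcal{N}^{(\ell)}|\ \le\ 6\rEm+\rho\,|\mathcal{PD}^{(\ell)}|,
\]
and with the base case $|\mathcal{PD}^{(\ell_{\min})}|=\binom{2^{\ell_{\min}}}{2}=\bigO(\kb)=\bigO(\rEm)$ this drift inequality yields $\EE\,|\mathcal{PD}^{(\ell)}|=\bigO(\rEm)$ for all $\ell$.

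The step I expect to be the main obstacle is promoting this to a \emph{high-probability, per-level} bound $|\mathcal{PD}^{(\ell)}|=\bigO(\rEm)$ holding at all $\bigO(\log n)$ levels simultaneously, since a plain Markov/union argument loses an extra polylogarithmic factor (and hence would not give the claimed decoding time). The difficulty is that the survival indicators of the pairs in $\mathcal{N}^{(\ell)}$ are dependent through the shared level-$\ell$ block-to-test assignment. The resolution I would pursue is to reveal the $C_2\sqrt{\kb}$ repetitions one at a time and track the number of not-yet-killed non-defective pairs as a supermartingale: each repetition removes, in conditional expectation, a $\Theta(1/\sqrt{\kb})$ fraction of the current survivors, and the number actually removed in a repetition is controlled by the number of colliding tracked pairs, which concentrates via negative association of the balls-into-bins occupancies; a Freedman/Bernstein-type bound then gives sub-exponential concentration of $|\mathcal{PD}^{(\ell+1)}|$ about its conditional mean, after which the levels are chained by a union bound. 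The per-repetition removal count being heavy-tailed in the worst case (one test could absorb many blocks) is precisely what has to be ruled out here.

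Given the candidate-set bound, correctness and timing follow. At the final level the blocks are singletons, and by the degree bound in the typical set (the $d(G_g)$ bound at $g=n$) together with $\kb\to\infty$, a colliding test there is negative with probability tending to $1$; hence each non-edge in $\mathcal{PD}^{(\log_2 n)}$ survives all $C'\log n$ rounds with probability at most $e^{-\Omega(C_1 C'\log n)}=n^{-\Omega(1)}$, where $C_1>27$ and $C'>3$ make the exponent exceed $2$, so a union bound over the at most $\binom n2$ candidate non-edges gives $P_e\to0$. For the running time, checking one pair at level $\ell$ costs $\bigO(\sqrt{\kb})$ (scan the $C_2\sqrt{\kb}$ repetitions, look up each block's test, test for a collision, read the outcome), and refining produces $\bigO(1)$ children per surviving pair with a hash set to deduplicate sibling pairs; thus the total decoding time is $\bigO\big(\sqrt{\kb}\sum_\ell|\mathcal{PD}^{(\ell)}|\big)=\bigO(\sqrt{\kb}\cdot\rEm\cdot\log n)$, which is $\bigO(\kb^{1.5}\log n)$ for $\theta>1/2$ (as $\rEm=4\kb$) and $\bigO(\kb^{1.5}\log^2\kb\,\log n)$ for $\theta\le1/2$ (as $\rEm=2\kb\log^2\kb$). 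The pre-processing variant computes offline, for every level and every pair of blocks at that level, the single bit ``some negative test separates them'' --- $\bigO(n^2)$ block pairs across levels at cost $\bigO(\sqrt{\kb}\log n)$ each, i.e.\ $\bigO(n^2\sqrt{\kb}\log n)$ --- after which each pair-check is $\bigO(1)$ and the decoding time drops to $\bigO(\sum_\ell|\mathcal{PD}^{(\ell)}|)$, up to logarithmic factors the stated $\bigO(\kb\log^2 n)$ and $\bigO(\kb\log^2\kb\,\log^2 n)$.
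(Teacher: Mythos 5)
Your test count, the deterministic ``no missed edge'' observation, the drift inequality, and the final-level error/union-bound calculation are all sound and broadly parallel the paper. However, the crux of the argument --- the high-probability, per-level bound $|\mathcal{PD}^{(\ell)}|=\bigO(\rEm)$ holding simultaneously at all $\bigO(\log n)$ levels --- is exactly the step you leave unproven: you correctly note that Markov plus a union bound is too lossy, but then only sketch a program (revealing repetitions one at a time, negative association of occupancies, a Freedman/Bernstein supermartingale bound) without carrying it out, and you yourself flag the heavy-tailed per-repetition removal count as an unresolved obstacle. Whether the survival indicators of tracked pairs are in fact negatively associated is not obvious (each survival event mixes the collision of the two tracked blocks with the outcome contributed by \emph{other} blocks in the same test), so the proposal as written has a genuine gap at its central step. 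The paper closes this gap by a much lighter route: conditionally on the current level, it bounds the covariance of the survival indicators of two non-defective pairs by $\bigO(1/\sqrt{\kb})$ (Lemma~\ref{lem:variance}), so Chebyshev gives a per-level failure probability $\bigO(1/\sqrt{\kb})$ (Lemma~\ref{lem: bounds_size_PD}); since $\kb=\Theta(n^{2\theta})$ is polynomial in $n$, a union bound over the $\bigO(\log n)$ levels costs only $\bigO(\log n/\sqrt{\kb})=o(1)$. In other words, one does not need sub-exponential concentration at all --- a second-moment bound suffices because the required per-level failure probability is merely $o(1/\log n)$, not $n^{-\Omega(1)}$.

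A second, smaller but real flaw is in your pre-processing variant: you propose to precompute, for every block pair at every level, the bit ``some negative test separates them.'' That bit depends on the test \emph{outcomes}, whereas pre-processing is defined (footnote to the theorem) as computation done before any outcomes are observed. The paper instead precomputes the outcome-independent index sets $\ID^{\ell}(u,v)$ of tests containing both blocks, shows $|\ID^{\ell}(u,v)|\le 3\log n$ for all pairs and levels with probability $1-o(1)$ via a Chernoff bound, and then at decoding time checks only those $\bigO(\log n)$ outcomes per retained pair, which is what yields the stated $\bigO(\kb\log^2 n)$ (resp.\ $\bigO(\kb\log^2\kb\cdot\log^2 n)$) decoding times. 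Your version would need to be restated in this outcome-independent form to be admissible.
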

\begin{remark}
Recall that our analysis is carried out under the sparse Erd\H{o}s--R\'enyi regime $q=\Theta(n^{-2(1-\theta)})$, equivalently $\kb=\Theta(n^{2\theta})$, for a fixed \(\theta\in(0,1)\). In particular, the error term arising from the union bound over levels is of order \(\bigO(\frac{\log n}{\sqrt{\kb}})\), which is \(o(1)\) in this regime. We do not claim the same guarantee for very small values of \(\kb\) such as \( \kb = \bigO(\log n) \).
\end{remark}
From Lemma~\ref{lem:typical}, the Erdős–Rényi graph lies in the typical set of graphs (see Section~\ref{sec:typical_graphs}) with probability tending to $1$. 
Therefore, it suffices to prove Theorem~\ref{thm:main} conditioned on an arbitrary typical graph \( G \in \Tc(\epsilon_n) \), where \( \epsilon_n = o(1) \). 
All subsequent analysis proceeds under the implicit assumption that the graph belongs to the typical set.  Before proving Theorem~\ref{thm:main}, we introduce the following notation that will be used throughout its proof.




\begin{itemize}[leftmargin=*]
  \item { Fix a level \(\ell\) and set \(g \coloneqq 2^{\ell}\). For \(v \in [g]\), let \(\Gc_v\) denote the \(v\)-th node, i.e., the \(v\)-th group of vertices in \(G\) at level \(\ell\). A node \(\Gc_v\) is \emph{non-defective} if \(\Gc_v\) contains no (internal) edge. A pair \((\Gc_u,\Gc_v)\) with \(u \neq v\) is \emph{non-defective} if the induced subgraph on \(\Gc_u \cup \Gc_v\) contains no edges. For each node $u$, denote by $h(u) \in \{1,2,\dots,C_1\sqrt{\kb}\}$ the index of the test containing $u$ in a given iteration (out of $C_2\sqrt{\kb}$ iterations). For a single random test, let $Y$ denote the test outcome.}
  
  
  
  \item  {For a non-defective pair \((\Gc_u,\Gc_v)\) at level \(\ell\) with \(u \neq v\), let \(\Ec_{uv}\) be the event that \((\Gc_u,\Gc_v)\) is \emph{not} identified at level \(\ell\) in Algorithm~\ref{alg:decoding}, and let \(\rE_{uv}\) be its indicator random variable.
  The dependence of these quantities on $\ell$ is left implicit. We write $\EE_{\ell}[\cdot]$ for conditional expectation given all test placements at earlier levels.}
  
\end{itemize}

We first estimate the size of $\mathcal{PD}^{(\ell)}$ for all levels $\ell \in \{\ell_{\min},\dots,\log_2 n \}$.

\begin{lemma}
\label{lem:expectation}
Under the preceding setup and definitions, suppose that $|\mathcal{PD}^{(\ell)}| \leq 12 \rEm$, for $\rEm$ given in Eq~\eqref{eq:number_edges_level}. 
Then, for any $C_1 \geq 27$ and $C_2 = C_1^2$, we have
\[
\EE_{\ell} \left[\sum_{u,v} \rE_{uv} \right] \leq \frac{\rEm}{2},
\]
where the sum is over all non-defective pairs in $\mathcal{PD}^{(\ell)}$. 
\end{lemma}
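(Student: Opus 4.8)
The plan is to bound the sum termwise. Conditioning on all earlier-level placements (which is exactly what $\EE_\ell$ does) freezes $\mathcal{PD}^{(\ell)}$ into a deterministic set with $|\mathcal{PD}^{(\ell)}|\le 12\rEm$, while the level-$\ell$ test randomness remains fresh and independent of it; the quantity $\rE_{uv}$ depends only on level-$\ell$ placements. So $\EE_\ell[\sum_{u,v}\rE_{uv}]=\sum_{u,v}\PP[\rE_{uv}=1]\le 12\rEm\cdot\sup_{(\Gc_u,\Gc_v)}\PP[\rE_{uv}=1]$, and it suffices to show $\PP[\rE_{uv}=1]\le\tfrac{1}{24}$ for every fixed non-defective pair.

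Next I would isolate one iteration. In a single iteration the blocks are placed i.i.d.\ uniformly among the $C_1\sqrt{\kb}$ tests, so $\Gc_u$ and $\Gc_v$ collide in a common test with probability exactly $\tfrac{1}{C_1\sqrt{\kb}}$; the pair is identified in that iteration iff they collide \emph{and} the common test is negative. Letting $B$ denote the event that the common test is positive given the collision, the pair survives one iteration with probability $1-\tfrac{1-\PP[B]}{C_1\sqrt{\kb}}$, and since the $C_2\sqrt{\kb}$ iterations are mutually independent,
\[
\PP[\rE_{uv}=1]=\left(1-\frac{1-\PP[B]}{C_1\sqrt{\kb}}\right)^{C_2\sqrt{\kb}}\le\exp\!\big(-(1-\PP[B])\,C_1\big),
\]
using $1-x\le e^{-x}$ and $C_2=C_1^2$. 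Everything thus reduces to showing $\PP[B]$ is bounded away from $1$; I will aim for $\PP[B]\le\tfrac{22}{C_1}+\tfrac{2}{C_1^2}$.

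The crux is the union bound for $\PP[B]$. Conditioned on $\Gc_u,\Gc_v$ lying in test $t$ (both non-defective, so $\Gc_u\cup\Gc_v$ contains no edge), the test is positive only if some other block lands in $t$ that creates an edge, and I would split this into three exhaustive cases: (i) a defective block lands in $t$ — at most $\PV_g$ such blocks, each landing with probability $\tfrac{1}{C_1\sqrt{\kb}}$; (ii) a non-defective block adjacent in $G_g$ to $\Gc_u$ or $\Gc_v$ lands in $t$ — at most $d_g(u)+d_g(v)\le 2\,d(G_g)$ such blocks, each with probability $\tfrac{1}{C_1\sqrt{\kb}}$; (iii) both endpoints of a block-graph edge between two non-defective blocks, neither of which is $\Gc_u$ or $\Gc_v$, land in $t$ — each with probability $\tfrac{1}{C_1^2\kb}$ by independence of the placements. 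The key observation — and this is the main subtlety — is that block-graph edges incident to a defective block need \emph{not} be counted in (iii), since they are already subsumed by (i); hence the edge count relevant to (iii) is only the number of $G_g$-edges between non-defective blocks, which is at most $k$ (each such edge is witnessed by a distinct edge of $G$ joining those two blocks), rather than $|E_g|$, which can be as large as $\Theta(\kb\log^2\kb)$ when $\theta\le\tfrac12$. Substituting the typical-set bounds $\PV_g\le\PV_{\max}$, $d(G_g)\le\dmax$, $k\le 2\kb$, and $g=2^{\ell}\ge\sqrt{\kb}$ (so $\tfrac{\kb}{g}\le\sqrt{\kb}$ in the $\theta>\tfrac12$ regime), both sparsity cases yield $\PP[B]\le\tfrac{2}{C_1}+\tfrac{20}{C_1}+\tfrac{2}{C_1^2}$.

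Finally, for $C_1\ge 27$ this gives $(1-\PP[B])C_1\ge C_1-22-\tfrac{2}{C_1}\ge\ln 24$, hence $\PP[\rE_{uv}=1]\le e^{-\ln 24}=\tfrac1{24}$, and summing over the at most $12\rEm$ non-defective pairs in $\mathcal{PD}^{(\ell)}$ gives $\EE_\ell[\sum_{u,v}\rE_{uv}]\le\tfrac{\rEm}{2}$. The only genuine work is the case analysis for $\PP[B]$ and the separate verification of the typical-set substitutions for $\theta>\tfrac12$ and $\theta\le\tfrac12$; the rest is bookkeeping. I expect the main obstacle to be exactly the observation in case (iii): a naive use of $|E_g|\le\rEm$ would drive $\PP[B]\to 1$ in the $\theta\le\tfrac12$ regime, and one must instead exploit that only non-defective-to-non-defective block edges contribute a new bad event and that there are at most $k$ of them.
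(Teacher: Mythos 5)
Your proposal is correct and follows essentially the same route as the paper: reduce to a per-pair bound $\PP[\rE_{uv}=1]\le\tfrac{1}{24}$, analyze a single iteration (collision probability $\tfrac{1}{C_1\sqrt{\kb}}$ times the probability the shared test is negative), and bound the conditional positivity probability by a union bound over defective blocks ($\le \PV_{\max}$), blocks adjacent to $\Gc_u$ or $\Gc_v$ ($\le 2\dmax$), and cross-pairs of non-defective blocks not involving $u,v$ (at most $k$ of them, each landing with probability $\tfrac{1}{C_1^2\kb}$), exactly mirroring the paper's events $\Bc_1$, $\Ac_1$, $\Bc_2$ and the same typical-set substitutions; your "key observation" that one must count only the $\le k$ cross-edges rather than $|E_g|$ is precisely how the paper handles $\Bc_2$. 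The only differences are cosmetic bookkeeping (your total $\tfrac{22}{C_1}+\tfrac{2}{C_1^2}$ versus the paper's $\tfrac{23}{C_1}$, both sufficient for $C_1\ge 27$).
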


\begin{proof}
It suffices to show that, for any non-defective pair $(\Gc_u,\Gc_v)$,
\begin{equation}
\label{eq:bound_pair_not_identify}
    \PP[\Ec_{uv}] \leq \frac{1}{C} \, ,
\end{equation}
for some constant $C \geq 24$. According to the testing procedure, a given non-defective pair $(\Gc_u,\Gc_v)$ fails to be identified correctly in a given sequence of $C_1\sqrt{\kb}$ tests if either:
\begin{itemize}
  \item $\Gc_u$ and $\Gc_v$ are not assigned to the same test, which occurs with probability $1 - \frac{1}{C_1\sqrt{\kb}}$;
  \item $\Gc_u$ and $\Gc_v$ are assigned to the same test, but the outcome is positive because of other edges.
\end{itemize}
Hence, the probability that $(\Gc_u,\Gc_v)$ is missed in all $C_2\sqrt{\kb}$ rounds, denoted by $\PP[\Ec_{uv}]$, is
\begin{equation}
\label{eq:prob_non_defective_pair}
    \left(1 - \frac{1}{C_1\sqrt{\kb}} + \frac{1}{C_1\sqrt{\kb}} \PP[Y=1 \mid h(u)=h(v)] \right)^{C_2\sqrt{\kb}}.
\end{equation}

Let $\Ac_1$ be the event that there exists some $w$ with $h(w)=h(u)=h(v)$ such that $(\Gc_u,\Gc_w)$ or $(\Gc_v,\Gc_w)$ is defective, and let $\Ac_2$ be the event that some other defective pair (containing neither $u$ nor $v$) is present in the test. Then
\[
\begin{aligned}
\PP[Y=1 \mid h(u)=h(v)]
&\le \PP[\Ac_1 \mid h(u)=h(v)] \\
&\quad + \PP[\Ac_2 \mid h(u)=h(v)] \, .
\end{aligned}
\]

From Lemma~\ref{lem:typical}, each non-defective block $u$ has at most $d_{\max}$ defective neighbors (see~\eqref{eq:degree_level}), so
\[
\begin{aligned}
  &\PP[\Ac_1 \mid  h(u)=h(v)] \\
  &\leq \frac{2d_{\max}}{C_1\sqrt{\kb}} 
  \leq \frac{2}{C_1\sqrt{\kb}} \max \left\{ \frac{10\kb}{g}, 8\sqrt{\kb} \right\} 
  \leq \frac{20}{C_1},
\end{aligned}
\]
since $g \in [\sqrt{\kb},n]$ due to the fact that $\ell \in [\log_2 \sqrt{\kb},\log_2 n]$.

Next, consider one of the \(C_1\sqrt{\kb}\) tests that contain both \(u\) and \(v\). Let $\Bc_1$ be the event that some defective block belongs to the test, and $\Bc_2$ be the event that some defective pair $(\Gc_{w},\Gc_{w'})$ is included in the test with both $\Gc_{w}$ and $\Gc_{w'}$ being non-defective i.e., edges appear only between $\Gc_{w}$ and $\Gc_{w'}$, but not internally. Since there are at most $k$ edges in $G$, there are at most $k \le (1+\epsilon_n)\kb$ such pairs. Moreover, from~\eqref{eq:number_defective_level} used in Lemma~\ref{lem:typical}, there are at most $\PV_{\max}$ defective blocks among $\{\Gc_1,\Gc_2,\dots,\Gc_g \}$. Thus,
\begin{align}
\PP[\Ac_2 \mid  h(u)=h(v)]  
   &\leq \mathbb{P}[\Bc_1] + \mathbb{P}[\Bc_2] \notag\\
   &\leq \PV_{\max} \cdot \frac{1}{C_1 \sqrt{\kb}} 
       + k \cdot \frac{1}{C_1^2 \kb} \notag\\
   &\leq \frac{1}{C_1 \sqrt{\kb}} 
       \cdot \max \left\{ \tfrac{2\kb}{g},\, 2\sqrt{\kb} \right\} 
       + \frac{1+\epsilon_n}{C_1^2} \notag\\
   &\leq \frac{2}{C_1} + \frac{1+\epsilon_n}{C_1^2} 
       \leq \frac{3}{C_1}\, , \label{eq:bound-Y}
\end{align}
where we again used the fact that $g \in [\sqrt{\kb}, n]$. Combining the bounds into~\eqref{eq:prob_non_defective_pair} and recalling the choice $C_2 = C_1^2$, we obtain
\[
\begin{aligned}
    \PP[\Ec_{uv}] &\leq \left(1 - \frac{1}{C_1\sqrt{\kb}} + \frac{23}{C_1^2\sqrt{\kb}} \right)^{C_2\sqrt{\kb}}\\
  &\leq \exp \left(-\frac{C_2(C_1-23)}{C_1^2}\right)\\
  &= \exp\big(-(C_1-23)\big).
\end{aligned}
\]
This establishes the claim, since this is at most $\exp(-4)< \frac{1}{24}$ when $C_1 \geq 27$.
\end{proof}

%
%
\begin{lemma}
\label{lem:variance}
There exist choices of $C_1$ and $C_2$ such that the following holds: Conditioned on the $\ell$-th level having $|\mathcal{PD}^{(\ell)}| \leq 12 \rEm$, we have 
\begin{align*}
   \Var_{\ell} \left[\sum_{u,v} \rE_{uv} \right] \leq \bigO \left( \frac{\rEm^2}{\sqrt{\kb}} \right),
\end{align*} 
where the sum is taken over all non-defective pairs in $\mathcal{PD}^{(\ell)}$.
\end{lemma}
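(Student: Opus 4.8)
Here is the plan. Write $S \coloneqq \sum_{u,v}\rE_{uv}$ for the sum over non-defective pairs in $\mathcal{PD}^{(\ell)}$, and expand
\[
\Var_\ell[S] \;=\; \sum_{(u,v)} \Var_\ell[\rE_{uv}] \;+\; \sum_{(u,v)\neq(u',v')} \Cov_\ell\bigl(\rE_{uv},\rE_{u'v'}\bigr).
\]
Each $\rE_{uv}$ is a Bernoulli indicator, so each diagonal term is at most $\PP[\Ec_{uv}] \le \tfrac{1}{24}$ by the proof of Lemma~\ref{lem:expectation}; since there are at most $|\mathcal{PD}^{(\ell)}| \le 12\rEm$ of them, the diagonal contributes $\bigO(\rEm)$, which is absorbed into $\bigO(\rEm^2/\sqrt{\kb})$ because $\rEm \ge \kb \ge \sqrt{\kb}$. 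There are at most $(12\rEm)^2$ off-diagonal terms, so it suffices to prove $\Cov_\ell(\rE_{uv},\rE_{u'v'}) = \bigO(1/\sqrt{\kb})$ for every pair of distinct non-defective pairs. I would emphasise at the outset that, for an \emph{upper} bound on a variance, only an upper bound on each covariance is needed; negative covariances only help, which simplifies several of the estimates below.

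Next I would reduce to a single iteration using the independence of the $m \coloneqq C_2\sqrt{\kb}$ iterations. Since $\rE_{uv} = \prod_{r} \rE_{uv}^{(r)}$ with the factors i.i.d.\ across $r$, we get $\EE_\ell[\rE_{uv}] = a_{uv}^{m}$ and $\EE_\ell[\rE_{uv}\rE_{u'v'}] = b^{m}$, where $a_{uv} = \EE_\ell[\rE_{uv}^{(1)}]$ and $b = \EE_\ell[\rE_{uv}^{(1)}\rE_{u'v'}^{(1)}]$, hence $\Cov_\ell(\rE_{uv},\rE_{u'v'}) = b^{m} - (a_{uv}a_{u'v'})^{m}$. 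When $b \le a_{uv}a_{u'v'}$ this is nonpositive and harmless; when $b > a_{uv}a_{u'v'}$ I would apply the elementary bound $b^m - a^m \le (b-a)\,m\,b^{m-1}$. Here $b \le a_{uv} = 1 - \tfrac{1}{C_1\sqrt{\kb}}\bigl(1 - \PP[Y=1\mid h(u)=h(v)]\bigr) \le 1 - \tfrac{1-23/C_1}{C_1\sqrt{\kb}}$, using the bound $\PP[Y=1\mid h(u)=h(v)] \le 23/C_1$ established inside the proof of Lemma~\ref{lem:expectation}; since $m = C_1^2\sqrt{\kb}$, the exponent in $b^{m-1} \le \exp\bigl(-(1-23/C_1)(m-1)/(C_1\sqrt{\kb})\bigr)$ is $\Theta(C_1)=\Theta(1)$, so $b^{m-1} = \bigO(1)$. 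Thus $\Cov_\ell(\rE_{uv},\rE_{u'v'}) \le \bigO(\sqrt{\kb})\cdot(b - a_{uv}a_{u'v'})$, and $b - a_{uv}a_{u'v'} = \Cov_\ell(\rE_{uv}^{(1)},\rE_{u'v'}^{(1)})$ is exactly the single-iteration covariance, so it remains to show the latter is $\bigO(1/\kb)$.

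The single-iteration estimate is the heart of the argument. For a non-defective pair, write $\rE_{uv}^{(1)} = 1 - D_{uv}$ with $D_{uv} = Z_{uv}(1-W_{uv})$, where $Z_{uv} = \openone[h(u)=h(v)]$ is the collision indicator and $W_{uv}$ the indicator that the common test is positive due to other edges (the decomposition implicit in the proof of Lemma~\ref{lem:expectation}); then $\Cov_\ell(\rE_{uv}^{(1)},\rE_{u'v'}^{(1)}) = \Cov_\ell(D_{uv},D_{u'v'})$. Since $D_{uv}\le Z_{uv}$, we have $\EE_\ell[D_{uv}D_{u'v'}] \le \EE_\ell[Z_{uv}Z_{u'v'}]$, and the key observation is that under uniform independent placement of blocks into $T = C_1\sqrt{\kb}$ tests the collision indicators of two distinct pairs have product expectation exactly $1/T^2$: this is genuine independence when $\{u,v\}\cap\{u',v'\}=\emptyset$, and when the pairs share a vertex (say $(u,v)$ and $(u,v')$) one still has $\EE_\ell[Z_{uv}Z_{uv'}] = \PP[h(u)=h(v)=h(v')] = 1/T^2 = \EE_\ell[Z_{uv}]\EE_\ell[Z_{uv'}]$. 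On the other hand $\EE_\ell[D_{uv}] = \tfrac{1}{T}\bigl(1 - \PP[Y=1\mid h(u)=h(v)]\bigr) \ge \tfrac{1}{T}(1-23/C_1)$, so $\EE_\ell[D_{uv}]\EE_\ell[D_{u'v'}] \ge (1-23/C_1)^2/T^2$, giving $\Cov_\ell(D_{uv},D_{u'v'}) \le \bigl(1-(1-23/C_1)^2\bigr)/T^2 \le (46/C_1)/(C_1^2\kb) = \bigO(1/\kb)$, as required.

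Assembling: each off-diagonal covariance is $\bigO(1/\sqrt{\kb})$, summing over $\le (12\rEm)^2$ such terms gives $\bigO(\rEm^2/\sqrt{\kb})$, and the diagonal is lower order, so $\Var_\ell[S] = \bigO(\rEm^2/\sqrt{\kb})$. I expect the main obstacle to be the single-iteration covariance bound — in particular recognising that collision indicators of two pairs sharing a common vertex are still uncorrelated, so shared-vertex pairs do not incur an extra $\sqrt{\kb}$ factor, and keeping careful track of which inequalities are one-sided so that the crude bound $\EE_\ell[D_{uv}D_{u'v'}] \le \EE_\ell[Z_{uv}Z_{u'v'}]$ suffices. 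The steps $b^m - a^m \le (b-a)mb^{m-1}$ and $b^{m-1} = \bigO(1)$ are routine once $C_1$ and $C_2 = C_1^2$ are fixed as in Lemma~\ref{lem:expectation}.
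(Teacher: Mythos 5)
Your proposal is correct, and its skeleton matches the paper's: expand the variance into diagonal terms plus covariances, show each off-diagonal covariance is $\bigO(1/\sqrt{\kb})$, sum over at most $(12\rEm)^2$ pairs, and absorb the $\bigO(\rEm)$ diagonal using $\rEm = \Omega(\kb)$. Where you differ is in how the per-pair covariance bound is extracted. The paper works with the exact single-repetition quantities $\gamma = \PP[\barD_{uv}\mid h(u)=h(v)]$ and $\beta = \PP[\barD_{uv}\cap\barD_{u'v'}\mid \cdots]$, splits into the cases of four distinct blocks versus a shared block, tracks an $\bigO(\kb^{-3/2})$ correction from all four blocks colliding, and then compares the two $C_2\sqrt{\kb}$-th powers via the approximation $1-x=\exp(-x+\bigO(x^2))$ and $\exp(x)\le 1+2x$. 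You instead factor $\rE_{uv}=\prod_r \rE_{uv}^{(r)}$ over i.i.d.\ repetitions, apply the telescoping inequality $b^m-a^m\le (b-a)\,m\,b^{m-1}$ with $m=C_2\sqrt{\kb}$ and $b^{m-1}\le 1$, and bound the single-iteration covariance one-sidedly by $\EE[Z_{uv}Z_{u'v'}]-\EE[D_{uv}]\EE[D_{u'v'}]\le \big(1-(1-23/C_1)^2\big)/T^2=\bigO(1/\kb)$, using exactly the $23/C_1$ estimate from Lemma~\ref{lem:expectation} (which, as in the paper, relies on the typicality conditions, so that caveat should be stated). The observation that $\EE[Z_{uv}Z_{u'v'}]=1/T^2$ holds \emph{exactly} even when the two pairs share a block lets you dispense with the paper's case split, and the crude bound $\EE[D_{uv}D_{u'v'}]\le\EE[Z_{uv}Z_{u'v'}]$ silently absorbs the $\bigO(\kb^{-3/2})$ term the paper carries along; the price is that your bound is one-sided only, which, as you correctly note, is all a variance upper bound requires. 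Both routes hinge on the same two probabilistic facts (collision probability $1/T$ per repetition and per-test positivity at most $23/C_1$) and land on the same $\bigO(1/\sqrt{\kb})$ per-pair estimate, so yours is best viewed as a cleaner, more elementary packaging of the paper's argument rather than a new idea.
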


\begin{proof}
	We first upper bound the covariance $\Cov[\rE_{uv}, \rE_{u'v'}]$ for any non-defective pairs $(\Gc_u, \Gc_v)$ and $(\Gc_{u'}, \Gc_{v'})$. We have
	\begin{equation}
            \label{eq:cova_formula}
	    \Cov[\rE_{uv}, \rE_{u'v'}] = \PP[\Ec_{uv} \cap \Ec_{u'v'}] - \PP[\Ec_{uv}] \PP[\Ec_{u'v'}] \,.
	\end{equation}
Consider a non-defective pair $(\Gc_u, \Gc_v)$ at level~$\ell$. Let $\Dc_{uv}$ be the event that the pair $(\Gc_u, \Gc_v)$ is not identified in the first $C_1 \sqrt{\kb}$ tests, and let $\barD_{uv}$ denote its complement. We consider the following cases:
	
	\begin{enumerate}
		\item \textbf{All nodes $u, v, u', v'$ are distinct.} We write
             \begin{equation}\label{eq:lem_cova_prob_Duv_cap}
               \PP[\Dc_{uv} \cup \Dc_{u'v'}] = 1 - \PP[\barD_{uv} \cap \barD_{u'v'}] \, .
             \end{equation}
  %
  
A pair $(u,v)$ is identified when $u$ and $v$ appear together in a test 
with a negative outcome. Therefore, the event $\barD_{uv} \cap \barD_{u'v'}$ 
can only occur when $u$ and $v$ have the same test placement, and similarly 
for $u'$ and $v'$. Letting
\begin{align*}
    \mathcal{C}_1 &= \bigl\{h(u) = h(v),\ h(u') = h(v'),\ h(u) \neq h(u')\bigr\}, \\
    \mathcal{C}_2 &= \bigl\{h(u) = h(v) = h(u') = h(v')\bigr\},
\end{align*}
this yields
\begin{align}
    &\PP\bigl[\barD_{uv} \cap \barD_{u'v'}\bigr] \\
    &= \frac{C_1 \sqrt{\kb}\,(C_1 \sqrt{\kb} - 1)}{(C_1\sqrt{\kb})^{4}} 
       \PP\bigl[\barD_{uv} \cap \barD_{u'v'} \mid \mathcal{C}_1\bigr] \notag \\
    &\quad + \frac{1}{(C_1 \sqrt{\kb})^3} 
       \PP\bigl[\barD_{uv} \cap \barD_{u'v'} \mid \mathcal{C}_2\bigr] \notag \\
    &= \frac{1}{(C_1 \sqrt{\kb})^2}
       \PP\bigl[\barD_{uv} \cap \barD_{u'v'} \mid \mathcal{C}_1\bigr] \notag \\
    &\quad + \bigO\!\left(\frac{1}{\kb^{3/2}}\right).
    \label{eq:lem_cova_Dc_uv_cap}
\end{align}
		
        Set $\gamma \coloneqq \PP\bigl[\barD_{uv} \mid h(u) = h(v)\bigr]$, $\beta \coloneqq \PP\bigl[\barD_{uv} \cap \barD_{u'v'} \mid \mathcal{C}_1\bigr],$
and observe that $\PP[\Dc_{uv} \mid h(u) = h(v)] = 1 - \gamma$.
        Since the event $h(u) \ne h(v)$ directly implies $\Dc_{uv}$, it follows that
\begin{align}
    \PP[\Dc_{uv}] 
    &= 1 - \frac{1}{C_1 \sqrt{\kb}} 
       + \frac{\PP[\Dc_{uv} \mid h(u) = h(v)] }{C_1 \sqrt{\kb}} \notag \\
    &= 1 - \frac{\gamma}{C_1 \sqrt{\kb}}\, ,  
    \label{eq:lem_cova_D_uv} \\[4pt]
    \PP[\Ec_{uv}] 
    &= \left(1 - \frac{\gamma}{C_1 \sqrt{\kb}} \right)^{C_2 \sqrt{\kb}} \,.
    \label{eq:lem_cova_E_uv}
\end{align}
        
		Moreover, from~\eqref{eq:lem_cova_prob_Duv_cap}, we have  
		\begin{equation} \label{eq:lem_cova_D_uv_cup}
		    \PP[\Dc_{uv} \cup \Dc_{u'v'}] = 1 - \frac{\beta}{C_1^2 \kb} + \bigO\left( \frac{1}{\kb^{3/2}} \right) \,.
		\end{equation}

        We now write $\PP[\Dc_{uv} \cap \Dc_{u'v'}] = \PP[\Dc_{uv}] + \PP[\Dc_{u'v'}] 
- \PP[\Dc_{uv} \cup \Dc_{u'v'}]$; then, from~\eqref{eq:lem_cova_D_uv} 
and~\eqref{eq:lem_cova_D_uv_cup}, it follows that
\begin{align}
    \PP[\Dc_{uv} \cap \Dc_{u'v'}] \notag 
    &=  1 - \frac{2\gamma}{C_1 \sqrt{\kb}}  
       + \frac{\beta}{C_1^2 \kb} + \bigO\left( \frac{1}{\kb^{3/2}} \right) \notag \\
    &= 1 - \frac{1}{C_1 \sqrt{\kb}} \left( 2\gamma - \frac{\beta}{C_1 \sqrt{\kb}} 
       + \bigO\left( \frac{1}{\kb} \right) \right) \,.
    \label{eq:lem_cova_Duv_cap_Du'v'}
\end{align}
Denote $\gamma_k \coloneqq \bigO\!\left(\frac{1}{\kb}\right)$. Therefore,
\begin{align*}
    &\PP[\Ec_{uv} \cap \Ec_{u'v'}] 
    = \left( 1 - \frac{1}{C_1 \sqrt{\kb}} \left(2\gamma 
       - \frac{\beta}{C_1 \sqrt{\kb}} + \gamma_k \right)
       \right)^{C_2 \sqrt{\kb}} \,,
\end{align*}
and substituting into~\eqref{eq:cova_formula} gives
\begin{align}
    &\Cov[\rE_{uv}, \rE_{u'v'}] \nonumber \\
    &= \left( 1 - \frac{1}{C_1 \sqrt{\kb}} \left(2\gamma 
       - \frac{\beta}{C_1 \sqrt{\kb}} + \gamma_k \right) 
       \right)^{C_2 \sqrt{\kb}} \nonumber \\
    &\qquad - \left( 1 - \frac{\gamma}{C_1 \sqrt{\kb}} 
       \right)^{2 C_2 \sqrt{\kb}} \nonumber \\
    &= \exp \left(-\frac{1}{C_1 \sqrt{\kb}} \left(2\gamma 
       - \frac{\beta}{C_1 \sqrt{\kb}} + \gamma_k \right) 
       + \gamma_k \right)^{C_2 \sqrt{\kb}} \nonumber \\
    &\qquad - \exp \left(-\frac{\gamma}{C_1 \sqrt{\kb}} 
       + \gamma_k \right)^{2C_2 \sqrt{\kb}} \nonumber \\
    &= \exp \left(-\frac{C_2}{C_1} \left(2\gamma - \frac{\beta}{C_1 \sqrt{\kb}} 
       + \gamma_k \right) 
       + \bigO \left(\frac{1}{\sqrt{\kb}} \right) \right) \nonumber \\
    &\qquad - \exp \left(-\frac{2C_2 \gamma}{C_1} 
       + \bigO \left( \frac{1}{\sqrt{\kb}}\right) \right) \nonumber \\
    &\leq \exp \left(-2C_1 \gamma 
       + \bigO \left(\frac{1}{\sqrt{\kb}}\right) \right) 
       \cdot \left(\exp \left( \frac{\beta}{\sqrt{\kb}}\right) - 1 \right) 
       \nonumber \\
    &\leq \bigO\left( \frac{1}{\sqrt{\kb}} \right), \label{eq:cov_final}
\end{align}
where we used the approximation $1 - x = \exp(-x + \bigO(x^2))$ as $x \to 0$, 
the inequality $\exp(x) \leq 1 + 2x$ for $x \in [0,1]$, the fact that 
$\gamma, \beta \in [0,1]$, and the choice $C_2 = C_1^2$.

		\item \textbf{$u = u'$ but $u \neq v$, $u \neq v'$.} Similar to~\eqref{eq:lem_cova_Dc_uv_cap} in Case 1, we have
		\begin{align*}
			&\PP[\barD_{uv} \cap \barD_{uv'}] \\
            &= \frac{1}{C_1^2 \kb} \PP\left[ \barD_{uv} \cap \barD_{uv'} \,\middle|\, h(u) = h(v) = h(v') \right] \,.
		\end{align*}
		Let $\gamma = \PP[\Dc_{uv} \mid h(u) = h(v)]$ and $\beta = \PP\left[ \barD_{uv} \cap \barD_{uv'} \mid h(u) = h(v) = h(v') \right]$. Then, similar to~\eqref{eq:lem_cova_Duv_cap_Du'v'}, we have  
        \begin{align*}
    \PP[\Dc_{uv} \cap \Dc_{uv'}] 
    &= \PP[\Dc_{uv}] + \PP[\Dc_{uv'}] - \PP[\Dc_{uv} \cup \Dc_{uv'}] \\
    &= 2 \left(1 - \frac{\gamma}{C_1 \sqrt{\kb}} \right) 
       - 1 + \frac{\beta}{C_1^2 \kb} \\
    &= 1 - \frac{1}{C_1 \sqrt{\kb}} \left( 2\gamma 
       - \frac{\beta}{C_1 \sqrt{\kb}} \right) \,,
    \end{align*}
        which implies
\begin{align*}
    \PP[\Ec_{uv} \cap \Ec_{uv'}] 
    &= \left( 1 - \frac{1}{C_1 \sqrt{\kb}} \left( 2\gamma 
       - \frac{\beta}{C_1 \sqrt{\kb}} \right) \right)^{C_2 \sqrt{\kb}} \,.
\end{align*}
Therefore, the same reasoning as \eqref{eq:cov_final} gives
\begin{align*}
    &\Cov[\rE_{uv}, \rE_{uv'}] \\
    &= \left( 1 - \frac{1}{C_1 \sqrt{\kb}} \left( 2\gamma 
       - \frac{\beta}{C_1 \sqrt{\kb}} \right) \right)^{C_2 \sqrt{\kb}} \\
    &\qquad - \left( 1 - \frac{\gamma}{C_1 \sqrt{\kb}} 
       \right)^{2 C_2 \sqrt{\kb}} \\
    &\leq \exp\left( - 2 C_1 \gamma 
       + \bigO \left(\frac{1}{\sqrt{\kb}}\right) \right) 
       \left( \exp\left( \frac{\beta}{\sqrt{\kb}} \right) - 1 \right) \\
    &\leq \bigO\left( \frac{1}{\sqrt{\kb}} \right) \,.
\end{align*}
	\end{enumerate}
	
    Finally, summing over all non-defective pairs gives 
\begin{align*}
    \Var_{\ell}\left[ \sum_{u,v} \rE_{uv} \right] 
    &= \sum_{u,v} \Var_{\ell} \left[\rE_{uv} \right] 
       + \sum_{u,v, u',v'} \Cov \left[ \rE_{uv}, \rE_{u'v'} \right] \\
    &= \bigO(\rEm) + \bigO \left( \frac{\rEm^2}{\sqrt{\kb}}\right) 
    = \bigO \left( \frac{\rEm^2}{\sqrt{\kb}}\right) \, ,
\end{align*}
since $\Var_{\ell}[\rE_{uv}] \leq \PP[\Ec_{uv}] \leq \exp(-(C_1 - 23))$ 
and there are at most $12\rEm$ such pairs by assumption, with 
$\rEm = \Omega(\kb)$ (see~\eqref{eq:number_edges_level}). 
This completes the proof.
\end{proof}


\begin{lemma}
\label{lem: bounds_size_PD}
For \( C_2 = C_1^2 \) and \( C_1 \geq 27 \), conditioned on the \( \ell \)-th level having at most \( 12\rEm \) possibly defective (PD) pairs, the number of PD pairs at the \( (\ell+1) \)-th level is at most \( 12\rEm \) with probability \( 1 - \bigO \big( \tfrac{1}{\sqrt{\kb}} \big) \).
\end{lemma}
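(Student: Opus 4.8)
The plan is to peel the deterministic part of $\mathcal{PD}^{(\ell+1)}$ away from the random part and then apply Chebyshev's inequality via Lemmas~\ref{lem:expectation} and~\ref{lem:variance}. I condition throughout on the level-$\ell$ configuration (all test placements at levels $<\ell$), which in particular fixes $\mathcal{PD}^{(\ell)}$ and, by hypothesis, satisfies $|\mathcal{PD}^{(\ell)}|\le 12\rEm$. By construction every pair of $\mathcal{PD}^{(\ell+1)}$ is one of the six child pairs in~\eqref{eq:group_split} generated from some pair $(\Gc_u,\Gc_v)\in\mathcal{PD}^{(\ell)}$ that is \emph{not} discarded at level $\ell$; since distinct retained pairs may generate a common sibling pair, this gives only $|\mathcal{PD}^{(\ell+1)}|\le 6R_\ell$, where $R_\ell$ denotes the number of retained pairs. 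I then split $R_\ell=D_\ell+N_\ell$, where $D_\ell$ counts the \emph{defective} pairs in $\mathcal{PD}^{(\ell)}$ and $N_\ell$ the retained \emph{non-defective} pairs. Any defective pair is automatically retained (a test containing both of its blocks also contains an edge, hence returns a positive outcome), so all $D_\ell$ of them survive; a non-defective pair survives precisely when $\rE_{uv}=1$, so $N_\ell=\sum_{u,v}\rE_{uv}$ with the sum over non-defective pairs in $\mathcal{PD}^{(\ell)}$, exactly the quantity analyzed in Lemmas~\ref{lem:expectation}--\ref{lem:variance}.

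For the deterministic term, the key observation is that the level-$\ell$ block graph $G_g$ has an edge at $(u,v)$ \emph{exactly} when $(\Gc_u,\Gc_v)$ is a defective pair — and this already includes pairs rendered defective by an internal edge of $\Gc_u$ or $\Gc_v$ — so the total number of defective pairs equals $|E_g|$, and typicality~\eqref{eq:number_edges_level} gives $D_\ell\le |E_g|\le \rEm$. For the random term, Lemma~\ref{lem:expectation} (applicable because $|\mathcal{PD}^{(\ell)}|\le 12\rEm$) gives $\EE_\ell[N_\ell]\le \rEm/2$, and Lemma~\ref{lem:variance} gives $\Var_\ell[N_\ell]=\bigO(\rEm^2/\sqrt{\kb})$. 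Chebyshev's inequality then yields
\[
\PP_\ell\big[N_\ell\ge \rEm\big]\;\le\;\PP_\ell\big[\,N_\ell-\EE_\ell[N_\ell]\ge \tfrac{\rEm}{2}\,\big]\;\le\;\frac{\Var_\ell[N_\ell]}{(\rEm/2)^2}\;=\;\bigO\!\Big(\tfrac{1}{\sqrt{\kb}}\Big).
\]
On the complementary event we have $R_\ell=D_\ell+N_\ell\le \rEm+\rEm=2\rEm$, hence $|\mathcal{PD}^{(\ell+1)}|\le 6R_\ell\le 12\rEm$; averaging over the level-$\ell$ configurations, all of which satisfy the hypothesis, preserves this bound and gives the claimed failure probability $\bigO(1/\sqrt{\kb})$.

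I do not expect a genuine obstacle: the substance is entirely in Lemmas~\ref{lem:expectation} and~\ref{lem:variance}, and what remains is bookkeeping. The one point that must be handled carefully is the clean separation of $\mathcal{PD}^{(\ell)}$ into a ``hard core'' of defective pairs — which persist at every level and are counted by $|E_g|$, so they can only be controlled through the typicality event and not through the test randomness — and the non-defective pairs, which are the only ones the concentration argument is permitted to touch; lumping the two together (e.g.\ trying to bound all of $R_\ell$ in expectation) would fail, since the bound $\EE_\ell[N_\ell]\le \rEm/2$ leaves room for exactly the $\le \rEm$ deterministic defective pairs and no more. It is also worth noting that the factor $6$ (rather than $4$) in~\eqref{eq:group_split} is precisely what produces the constant $12$ in the statement, so that the bound $12\rEm$ reproduces the hypothesis and the induction over levels closes.
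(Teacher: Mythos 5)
Your proposal is correct and follows essentially the same route as the paper: defective pairs in $\mathcal{PD}^{(\ell)}$ are counted via the block graph (at most $\rEm$ by typicality, hence at most $6\rEm$ children), and the non-defective retained pairs are controlled by Chebyshev's inequality using Lemmas~\ref{lem:expectation} and~\ref{lem:variance}, giving another at most $6\rEm$ children with probability $1-\bigO(1/\sqrt{\kb})$. Your write-up merely makes explicit the Chebyshev deviation step and the defective/non-defective bookkeeping that the paper leaves terse.
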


\begin{proof}
Among the PD pairs at the \( \ell \)-th level, at most \( \rEm \) are defective pairs, which generate at most \( 6\rEm \) children at the next level (each PD pair produces four cross-pairs between the two blocks and two intra-block pairs).  
By Lemma~\ref{lem:expectation} and Lemma~\ref{lem:variance}, and applying Chebyshev's inequality, with probability at least \( 1 - \bigO \big( \tfrac{1}{\sqrt{\kb}}\big) \), at most \( \rEm \) non-defective pairs are incorrectly retained as PD.  
These contribute at most another \( 6\rEm \) children at the next level, leading to a total of at most \( 12\rEm \) PD pairs.
\end{proof}

We now prove the main theorem.
\begin{proof}[Proof of Theorem~\ref{thm:main}] The stated claims are inferred as follows.
 As shown in Lemma~\ref{lem:typical}, a random graph from $\ER(n,q)$ lies in the typical set of graphs with probability $1-o(1)$. Therefore, it suffices to analyze the  decoding time and error probability conditioned on an arbitrary typical graph $G \in \Tc(\epsilon_n)$, where $\epsilon_n = o(1)$.
\begin{itemize}
     
	\item \textbf{Decoding time:} 
		For the case \( \theta > \frac{1}{2} \), from~\eqref{eq:number_edges_level} used in Lemma~\ref{lem:typical}, there are at most \(\rEm =  4\kb \) defective pairs at each level. From Lemma~\ref{lem: bounds_size_PD} and by induction, for any given level \( \ell \), we have \( |\mathcal{PD}^{(\ell)}| \leq 12\rEm =   48 \kb\) with conditional probability at least \( 1 - \bigO \big(\frac{1}{\sqrt{\kb}} \big) \). Taking a union bound over \( \log_2 n \) levels, the same bound holds for all levels simultaneously with probability at least \( 1 - \bigO \big( \frac{\log n}{\sqrt{\kb}} \big) \). 
		
		The decoding time is dominated by the outcome checks in the decoding procedure. At each level \( \ell \), for each possible defective pair, we conduct at most \( C_1 \sqrt{\kb} \) outcome checks. This gives a total of \( \bigO \big( \sqrt{\kb} |\mathcal{PD}^{(\ell)}| \big) \) outcome checks at level \( \ell \). Therefore, the total number of outcome checks from levels \( \ceil{\log_2 \sqrt{\kb}} \) to \( \log_2 n - 1 \) is 
		\[
		\bigO \left( \sqrt{\kb} \sum_{\ell = \ceil{\log_2 \sqrt{\kb}}}^{\log_2 n - 1} |\mathcal{PD}^{(\ell)}| \right).
		\]
		At the final level \( \ell  = \log_2 n \), we conduct \( \bigO \big( \sqrt{\kb} \log n \cdot |\mathcal{PD}^{(\ell)}| \big) \) outcome checks.
		
		As shown above, with probability at least \( 1 - \bigO \big( \frac{\log n}{\sqrt{\kb}} \big) \), we have \( |\mathcal{PD}^{(\ell)}| \leq 48\kb \) for all \( \ell \in \{\ceil{\log_2 \sqrt{\kb}}, \dots, \log_2 n\} \). Therefore, the total number of outcome checks in the decoding procedure is at most \( \bigO(\kb^{1.5} \log n) \), with probability at least \( 1 - \bigO \big( \frac{\log n}{\sqrt{\kb}} \big) \).
		
		A similar argument applies to the case \( \theta \leq \frac{1}{2} \), with at most \( \rEm =  2\kb \log^2 \kb \) (see~\eqref{eq:number_edges_level}) defective pairs at each level.

    \item \textbf{Decoding time with pre-processing:} At each level $\ell$, for each pair $(u,v) \in [g]\times[g]$ with $g=2^\ell$, let $\ID^{\ell}(u,v)$ be the set of test indices that contain both $u$ and $v$. Recall that in each round (among $C_2\sqrt{\kb}$ rounds), the probability that $u$ and $v$ appear in the same test is $\frac{1}{C_1\sqrt{\kb}}$. The $C_2\sqrt{\kb}$ iterations are independent, and using $C_2=C_1^2$, we have
     \[
       \EE\!\big[\,|\ID^{\ell}(u,v)|\,\big] \;=\; \frac{C_2\sqrt{\kb}}{C_1\sqrt{\kb}} \;=\; C_1.
     \]
Therefore, by the Chernoff bound (see~\eqref{eq:chernoff2} in Appendix~\ref{appendix:concentration}), we have 
\begin{align*}
    &\PP\big[\,|\ID^{\ell}(u,v)| > 3\log n\,\big] \\
    &\;\le\; \left(\frac{eC_1}{3\log n}\right)^{3\log n} \\
    &\;=\; \bigO\!\left(\frac{1}{\log n}\right)^{\log n}.
\end{align*}
Taking a union bound over at most $\log_2 n$ levels, each with at most $n^2$ pairs, we obtain that for all $\ell \in \{\lceil \log_2 \sqrt{\kb}\rceil,\dots,\log_2 n\}$ and all $(u,v)\in[g]\times[g]$ with $g=2^\ell$, we have
\begin{align}
    &\PP\big[\,|\ID^{\ell}(u,v)| \le 3\log n\,\big] \notag \\
    &\;\ge\; 1 - \bigO\!\left(n^2\log n 
       \cdot \left(\frac{1}{\log n}\right)^{\log n}\right).
    \label{eq:bound_index_tests}
\end{align}
which in turn behaves as $1-o(1)$.

After fixing the test design, at each level $\ell$ and for each pair $(u,v)\in[g]\times[g]$, we can find $\ID^{\ell}(u,v)$ in $\bigO(\sqrt{\kb})$ time, since each block appears in exactly $C_2\sqrt{\kb}$ tests at level $\ell$. Hence, the total time to compute all sets $\ID^{\ell}(u,v)$ over all $\bigO(\log n)$ levels and all $\bigO(n^2)$ pairs is $\bigO(n^2\sqrt{\kb}\log n)$.

With this pre-processing, for every potentially defective pair at each level, the decoder only needs to check at most $3\log n$ test outcomes, rather than $\bigO(\sqrt{\kb})$ of them as we did before. Therefore, the scaling of the decoding time decreases by a factor of $\frac{\sqrt{\kb}}{\log n}$, meaning it becomes $\bigO(\kb\log^2 n)$ for $\theta>1/2$, and $\bigO(\kb\log^2\kb\,\log^2 n)$ for $\theta\le 1/2$.

\item{ \textbf{Error probability:} As shown above (in the analysis of decoding time), we have
\begin{align}
    &\PP\Big[ \max_{\ell_{\min} \le \ell \le \log_2 n} 
       \big|\mathcal{PD}^{(\ell)}\big| \le 12\rEm \Big] \notag \\
    &\;\ge\; 1 - \bigO \left( \frac{\log n}{\sqrt{k}} \right) \, .
    \label{eq:bound_PD_all_levels}
\end{align}
where we recall that $\rEm = 4\kb$ for $\theta>1/2$ and $\rEm = 2\kb \log^2 \kb$ for $\theta\le 1/2$.

At the final level, we conduct $C'\log n$ independent rounds, each consisting of $C_1 C_2 \kb$ tests. We analyze the error probability under the high-probability event that $\big|\mathcal{PD}^{(\log_2 n)}\big|\le 12\rEm$.

For any fixed non-defective pair $(u,v)$ in $\mathcal{PD}^{(\log_2 n)}$ at the final level, over a given sequence of $C_1 C_2 \kb$ tests, following the argument of~\eqref{eq:bound_pair_not_identify} in Lemma~\ref{lem:expectation}, we have
\begin{align*}
    &\PP\big[ (u,v) \text{ is not identified among } 
       C_1 C_2\kb \text{ tests} \big] \\
    &\;\le\; \exp\big( -(C_1-23) \big) \, .
\end{align*}
Since we perform $C'\log n$ repetitions with $C'>3$ independent rounds at the final level, it follows that whenever $C_1 \ge 24$, we have
\begin{align*}
    &\PP\big[ (u,v) \text{ is not identified at the final level} \big] \\
    &\le\; \bigO(n^{-C'}) \, .
\end{align*}

Applying a union bound over the $\big|\mathcal{PD}^{(\log_2 n)}\big|$ non-defective pairs at the final level, we obtain
\begin{align}
    &\PP \big[\, \widehat{E} = E \;\bigm|\; G \in \Tc(\epsilon_n),\ 
       \big|\mathcal{PD}^{(\log_2 n)}\big| \le 12\rEm \big] \notag \\
    &\;\ge\; 1 - \bigO \left( \frac{\rEm}{n^{C'}} \right) \, .
    \label{eq:error_prob}
\end{align}

Combining Lemma~\ref{lem:typical} with~\eqref{eq:bound_PD_all_levels}, for \(G \sim \ER(n,q)\) the estimated edge set \(\widehat{E}\) output by Algorithm~\ref{alg:decoding} matches the true edge set \(E\) with probability at least
\[
1 - o(1) - \bigO \left(\frac{\rEm}{n^{C'}}\right)
= 1 - o(1) \, \text{for any } C'>3.
\]
In other words, \(P_e \coloneqq \PP[\widehat{E} \ne E] \to 0\) as \(n \to \infty\).

}
\item { \textbf{Number of tests: }We used \( C_1 C_2 \kb \) tests at each level from \( \ceil{\log_2 \sqrt{\kb}} \) to \( \log_2 n - 1 \). At the final level, we used \( C_1 C_2 C' \kb \log n \) tests. Adding these leads to a total of  \( \bigO(\kb \log n) \).
	}


\end{itemize}

\end{proof}

\section{Improved Decoding Time via Partitioning and Permutations}
\label{sec:perm-screening-base}

The binary splitting approach in the previous section ensures recovery with an order–optimal number of tests, but the $\kb^{1.5}$ dependence in the decoding time (in the absence of pre-processing) has room for improvement. The bottleneck arises because the decoder needs to check \(\bigO(\sqrt{\kb})\) tests for each pair in the set of possible defectives. To address this, we will partition the vertex set \(V\) into equal-size subsets \(\Sc_1,\dots,\Sc_m\) and apply the binary-splitting approach to each induced subgraph formed by taking the union of pairs of these, i.e., \(G_{ij}=G[\Sc_i\cup \Sc_j]\). On these smaller graphs, the number of edges is much smaller, and thus the per-pair checking time in decoding is significantly reduced. 

The general idea of solving multiple smaller problems has been explored in non-adaptive group testing~\cite{li2024noisy}.  In short, their idea is to create multiple subproblems, and randomly place each item in a constant number of subproblems and ``zero out'' the rest, creating a problem with very few defectives, which is solved using binary splitting.  They decode each item via a combined ``vote'' over the (few) subproblems it was placed in.  However, there are at least two major issues in applying their approach to our setting: (i) It is difficult for us to use \emph{random} placements into subproblems while ensuring that \emph{pairs} of items (rather than just individual items) appear together in sufficiently many of them; and (ii) the error probability of each subproblem is harder to control tightly in our setting, and this lack of tightness can lead to a total exceeding 1 upon applying a union bound over subproblems.



To alleviate such issues, we incorporate a small collection of random permutations into the design: Before applying the non-adaptive tests on each $G_{ij}$, we permute its vertices several times and run the binary splitting algorithm under each permutation. This technique of \emph{permutation amplification} has been employed in various settings to boost the probability of success in randomized algorithms~\cite{motwani1996randomized}. Here it ensures that at least one permutation yields a favorable block structure for each $G_{ij}$, so that a union bound across all subgraphs guarantees high overall success probability. Crucially, the total number of tests remains $\bigO(\kb\log n)$.

We begin by formalizing the subgraph approach and introducing the permutation--based block construction. We then establish concentration bounds for edge counts and block degrees under random permutations. Next, we define the notion of a typical set of graphs with permutations and prove that such typicality holds with high probability. Finally, we present the permutation--based testing and decoding procedures and conclude with the main theorem on its guarantees. 

The notation used throughout this section will be summarized in Table \ref{tab:sec4-notation} below.

\subsection{Subgraphs and Random Permutations} \label{sec:subgraphs}

\textbf{Definitions of subgraphs.} Let \(G\sim \ER(n,q)\) with \(q=\Theta \big(n^{-2(1-\theta)}\big)\) for some \(\theta\in(0,1)\), and let \(\kb= q\binom{n}{2} =\Theta(n^{2\theta})\). 
Throughout this section, we fix a constant
\[
\gamma\in\Bigl(0,\min\Bigl\{1,\frac{1-\theta}{3\theta}\Bigr\}\Bigr)
\]
that will play an important rule in our design and analysis, and note that all asymptotic notation is with respect to \(n \to \infty\) (with \(\kb \to \infty\) simultaneously).
We partition \(V\) into
\[
m \coloneqq \sqrt{\frac{\kb}{\kb^\gamma}} = \kb^{\frac{1-\gamma}{2}}
\]
equal-size subsets \(\mathcal{S}_1,\dots,\mathcal{S}_m\)
. For each \((i,j) \in [m] \times [m] \) with \(i<j\), let \(G_{ij}\) be the subgraph induced by \(\mathcal{S}_i\cup \mathcal{S}_j\), and set
\begin{align*}
  k_{ij} \coloneqq\ |E(G_{ij})| \, , \text{ and } 
  n_{ij} \coloneqq\ |V(G_{ij})| = 2n \cdot \kb^{\frac{\gamma-1}{2}} \, .
\end{align*}
Let \(\kb_{ij} \coloneqq q\binom{n_{ij}}{2} = \Theta(\kb^{\gamma})\) denote the expected number of edges in \(G_{ij}\). Similar to Section~\ref{sec:assumption}, we assume (w.l.o.g.) that \(n_{ij}\) is a power of two for all \((i,j)\in[m] \times [m]\).  

We again use the idea of partitioning in levels (Section~\ref{sec:level_graph}).  We denote the index of the smallest level by $\ell_0$ (and its number of groups by $g_0$), and choose it slightly differently from before:
\begin{align}
\ell_0\coloneqq \ceil{\log_2 \kb^{2\gamma}}, \text{ and } g_0\ \coloneqq\ 2^{\ell_0} \ \le\ n_{ij}, \label{eq:new_l0}
\end{align}
where \(g_0\le n_{ij}\) follows from \(\gamma\le (1-\theta)/(3\theta)\).  This choice is slightly higher in the sense that Section \ref{sec:binary_spliting} would suggest using $\gamma/2$ instead of $2\gamma$ in the exponent.  This is done because one of our typical graph properties (see Lemma \ref{lem:empty-level} below, and the resultant condition (C2) in Definition \ref{def:new_typical}) will rely on $\frac{\kb_{ij}}{g_0} \to 0$.

For each pair \((i,j)\) and level \(\ell \geq \ell_0\), let \(g \coloneqq 2^{\ell}\) with \(g_0 \le g \le n_{ij}\).
Following Section~\ref{sec:typical_graphs}, we partition \(V(G_{ij})\) into \(g\) balanced blocks
\(\mathcal{U}\coloneqq\{U_1,U_2,\dots,U_g\}\), each of size \(n_{ij}/g\), where
\[
  U_v \coloneqq \big\{\, (v-1)\tfrac{n_{ij}}{g}+1,\ \dots,\ v\tfrac{n_{ij}}{g} \,\big\},
  \qquad v \in [g].
\]

\textbf{Pairwise independent permutation family.} Let $N \ge 1$ be an integer. A \emph{permutation} $\pi$ on $[N]$ is a bijection $\pi:[N]\to[N]$. Let $\mathcal{S}_N$ denote the set of all permutations on $[N]$. We say that $\mathcal{F}_N \subseteq \mathcal{S}_N$ is a \emph{pairwise independent permutation family} if, when a random permutation $\pi$ is drawn \emph{uniformly at random} from $\mathcal{F}_N$, the following holds for every distinct $x_1,x_2\in[N]$ and every distinct $y_1,y_2\in[N]$:
\[
  \PP_{\pi\sim \mathrm{Unif}(\mathcal{F}_N)} \big[\pi(x_1)=y_1 \, , \, \pi(x_2)=y_2\big]
  = \frac{1}{N(N-1)}.
\]
That is, the pair $(\pi(x_{1}),\pi(x_{2}))$ is uniformly distributed over  $\binom{N}{2}$ pairs.

When working with a family \(\mathcal{F}_N \subseteq \mathcal{S}_N\), explicitly storing an arbitrary permutation from \(\mathcal{F}_N\) requires \(N\) entries (i.e., \(\Theta(N\log N)\) bits). However, it is possible to design $\mathcal{F}_N$ so that each permutation is described by only a constant number of parameters, yielding an $\bigO(\log N)$-bit representation while still retaining the desired randomness properties (e.g., pairwise independence). In particular, if $N$ is a power of two, one can construct such a family $\mathcal{F}_N$ using affine permutations over the finite field of order $N$, $\mathbb{F}_N$. For $a\in\mathbb{F}_N \setminus \{0\}$ and $b\in\mathbb{F}_N$, define
\[
\pi_{a,b}(x)=ax+b,
\]
where all operations are over $\mathbb{F}_N$. Then each $\pi_{a,b}$ is a permutation of $\mathbb{F}_N$. Fix a bijection between $[N]$ and $\mathbb{F}_N$, which we write as  $[N]\cong\mathbb{F}_N$. Under this bijection, the family $\mathcal{F}_N$ is parameterized by $(a,b)\in (\mathbb{F}_N \setminus \{0\})\times \mathbb{F}_N$.  The construction details are given in Appendix~\ref{appendix:pairwise_independent}. Most importantly for our purposes, (i) the set $\{\pi_{a,b}\}$ forms a pairwise independent permutation family, and (ii) for any input $x$, both $\pi_{a,b}(x)$ and the inverse $\pi_{a,b}^{-1}(x)$ can be computed in $\bigO(1)$ time in the word-RAM model.

To simplify notation, assume (for now) that the vertex set \(V(G_{ij})\) is labeled as \(\{1,2,\dots,n_{ij}\}\). Draw \(c\) i.i.d.\ permutations \(\pi_1,\dots,\pi_c\) from a pairwise independent permutation family on $[n_{ij}]$. For any fixed \(t \in [c]\), define the permuted block family
\[
  \mathcal{U}^{(t)} \coloneqq \big\{\pi_t(U_1),\dots,\pi_t(U_{g})\big\},
\]
where \(\pi_t(U_p) \coloneqq \{\pi_t(v): v \in U_p\}\) for all \(p \in [g]\).
Since \(\pi_t\) is a bijection, \(\mathcal{U}^{(t)}\) is again a balanced partition of \(V(G_{ij})\) into \(g\) equal-size blocks.




Let \(\pi\) be a generic permutation on $[n_{ij}]$. We define the \emph{level-\(\ell\) block graph} \(H_{ij}^{(\pi,\ell)}\) (with \(g \coloneqq 2^{\ell} \in [g_0,n_{ij}]\)) as follows: its vertex set \(V\big(H_{ij}^{(\pi,\ell)}\big)\) consists of the \(g\) level-\(\ell\) blocks (each treated as a single vertex), and its edge set is defined via
\begin{align}
    (r,s)\in E\big(H_{ij}^{(\pi,\ell)}\big) \notag 
    &\iff \exists\, x\in \pi(U_r),\ \exists\, y\in \pi(U_s) \notag \\
    &\hspace{2em} \text{such that } (x,y)\in E(G_{ij}).
    \label{eq:H_edge_set}
\end{align}
Analogous to Section \ref{sec:level_graph}, we introduce the following definitions:
\begin{itemize}
    \item \(\PV_{g}\big(H_{ij}^{(\pi,\ell)}\big)\): the number of \emph{defective} blocks (vertices) in \(H_{ij}^{(\pi,\ell)}\), i.e., blocks having an internal edge.
    \item $d_{g}\big(H_{ij}^{(\pi,\ell)}\big)$: the maximum degree of $H_{ij}^{(\pi,\ell)}$. Here, for convenience, the maximum is taken over all blocks at level~$\ell$, rather than only the non-defective ones as in Section~\ref{sec:level_graph}, since we will ultimately only rely on block graphs that contain no defective blocks at any level. 
\end{itemize}

\begin{figure}[t]
\centering
\includegraphics[width=0.48\textwidth]{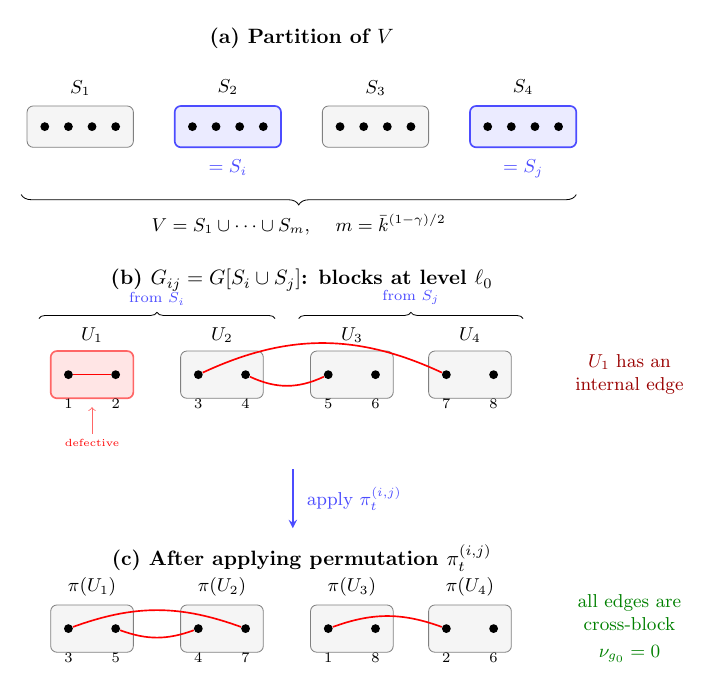}
\caption{Example illustrating the partitioning and permutation approach. (a)~The vertex set $V$ is partitioned into $m = \kb^{(1-\gamma)/2}$ balanced parts; the pair $(S_i, S_j)$ is highlighted. (b)~The induced subgraph $G_{ij} = G[S_i \cup S_j]$ is divided into $g_0 = 4$ blocks at the base level~$\ell_0$. Block $U_1$ contains an internal edge $(1,2)$ and is therefore defective ($\nu_{g_0} \geq 1$). The remaining edges $(3,7)$ and $(4,5)$ are cross-block. (c)~After applying a random permutation $\pi_t^{(i,j)}$, the vertices are redistributed among the blocks: $\pi(U_1) = \{3,5\}$, $\pi(U_2) = \{4,7\}$, $\pi(U_3) = \{1,8\}$, $\pi(U_4) = \{2,6\}$. Under this permutation, all three edges become cross-block and no block contains an internal edge, yielding $\nu_{g_0} = 0$. The binary splitting decoder of Section~\ref{sec:binary_spliting} can then be applied to $G_{ij}$ under this favorable block structure.}
\label{fig:partition-permutation}
\end{figure}

\subsection{High-Probability Graph and Permutation Properties}

In the following lemmas, we establish high-probability properties of the random subgraphs and random permutations, some of which serve as counterparts to those used in Section \ref{sec:binary_spliting}.  Note that we sometimes study the randomness of edges (for a fixed permutation) and sometimes study the randomness of permutations (for a fixed graph); we will accordingly take care when combining these properties in Section \ref{sec:typical_graph_permuation}.

\begin{lemma}
	\label{lem:number_edges_reducing}
	Under the preceding setup, with probability $1 - o(1)$, we have
	\begin{align*}  
		\kb^{\gamma} \leq k_{ij} \coloneqq |E(G_{ij})| \leq 12\kb^\gamma, \quad \forall (i,j) \in [m] \times [m], 
	\end{align*}
	where $m = \kb^{\frac{1-\gamma}{2}}$. 
\end{lemma}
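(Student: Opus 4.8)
The plan is to observe that, for each fixed pair $(i,j)$, the edge count $k_{ij}=|E(G_{ij})|$ is a Binomial random variable, apply a multiplicative Chernoff bound to show it concentrates up to constant factors around its mean $\kb_{ij}=\Theta(\kb^{\gamma})$, and then take a union bound over the $\bigO(m^{2})$ pairs. This union bound is affordable because $m^{2}=\kb^{1-\gamma}$ is only polynomially large in $n$, whereas $\kb^{\gamma}=\Theta(n^{2\theta\gamma})$ also grows polynomially and hence dominates $\log(m^{2})$.

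In more detail: since $G_{ij}=G[\Sc_i\cup\Sc_j]$ is the induced subgraph on $n_{ij}=2n\,\kb^{(\gamma-1)/2}$ vertices, and each of its $\binom{n_{ij}}{2}$ potential edges is present independently with probability $q$, we have $k_{ij}\sim\mathrm{Binomial}\big(\binom{n_{ij}}{2},\,q\big)$ with
\[
\EE[k_{ij}]=\kb_{ij}=q\binom{n_{ij}}{2}=\kb\cdot\frac{n_{ij}(n_{ij}-1)}{n(n-1)}=4\kb^{\gamma}\,(1+o(1)),
\]
where the last equality uses $n_{ij}/n=2\kb^{(\gamma-1)/2}$ together with $n_{ij}\to\infty$ (which holds because $n_{ij}=\Theta(n^{1-\theta(1-\gamma)})$ and $\theta(1-\gamma)<1$). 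In particular, for all sufficiently large $n$ we have $\kb_{ij}\in[3\kb^{\gamma},\,5\kb^{\gamma}]$, so that $\kb^{\gamma}$ and $12\kb^{\gamma}$ sit strictly inside $\big((1-\delta)\kb_{ij},(1+\delta')\kb_{ij}\big)$ for fixed $\delta,\delta'$ bounded away from $0$. Applying the multiplicative Chernoff bound (as in Appendix~\ref{appendix:concentration}) in both directions, with $\delta=2/3$ for the lower tail and $\delta'\ge 7/5$ for the upper tail, gives $\PP\big[k_{ij}\notin[\kb^{\gamma},12\kb^{\gamma}]\big]\le 2\exp(-c\,\kb^{\gamma})$ for some absolute constant $c>0$.

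A union bound over the at most $\binom{m}{2}\le m^{2}=\kb^{1-\gamma}$ pairs then gives total failure probability at most $2\kb^{1-\gamma}\exp(-c\,\kb^{\gamma})=\exp\big(\bigO(\log n)-c\,\Theta(n^{2\theta\gamma})\big)=o(1)$, establishing the claim. I do not anticipate a serious obstacle here; the only points requiring care are (i) confirming $n_{ij}\to\infty$ so that the $1+o(1)$ in the mean computation is legitimate (this is where the constraint $\gamma<(1-\theta)/(3\theta)$, or even just $\gamma<1$, is used), and (ii) keeping the mean $\kb_{ij}\approx 4\kb^{\gamma}$ comfortably in the interior of the target interval $(\kb^{\gamma},12\kb^{\gamma})$ so that a single $\delta$ bounded away from the endpoints works in both Chernoff tails.
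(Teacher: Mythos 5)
Your proposal is correct and follows essentially the same route as the paper: compute $\kb_{ij}=q\binom{n_{ij}}{2}=\Theta(\kb^{\gamma})$ (the paper bounds it in $[2\kb^{\gamma},8\kb^{\gamma}]$ rather than pinning it to $4\kb^{\gamma}(1+o(1))$, but this is immaterial), apply a multiplicative Chernoff bound so each pair fails with probability $\exp(-\Omega(\kb^{\gamma}))$, and union bound over the $m^{2}=\kb^{1-\gamma}$ pairs. No gaps.
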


\begin{proof}
	Each $G_{ij}$ has $n_{ij}$ vertices, so
	\[
	\kb_{ij} \coloneqq \mathbb{E}[k_{ij}] = q \cdot \binom{n_{ij}}{2} \, ,
	\]
	and recalling $n_{ij} =2n \cdot \kb^{\frac{\gamma-1}{2}}$, this implies
	\[ 
	  q \cdot n^2 \cdot \kb^{\gamma-1}  \leq  \kb_{ij} \leq 2q \cdot n^2 \cdot \kb^{\gamma-1} \, .
	\]
	Substituting $ \kb = q \binom{n}{2} \in \big[\frac{1}{4}qn^2,\frac{1}{2}qn^2 \big]$, we obtain
	\begin{equation}
	   2\kb^{\gamma} \leq \kb_{ij} \leq 8\kb^{\gamma}. \label{eq:k_ij_bounds}
	\end{equation}
    
	The edges are independently generated, so by the Chernoff bound, we have
	\[
	\PP\left[ |k_{ij} - \kb_{ij}| \geq  \frac{1}{2} \kb_{ij} \right]
	\leq \exp \big( - \Omega(\kb^{\gamma}) \big)  \, .
	\]
	Therefore, for each $(i,j)$, we have
    \begin{equation}
       \label{eq:bound_edge_component}
       \PP[\kb^{\gamma} \leq k_{ij} \leq 12\kb^{\gamma}] \geq 1- \exp\big( -\Omega(\kb^{\gamma})\big) \,.
    \end{equation}
	There are at most $ \binom{m}{2} \leq m^2 =  \kb^{1-\gamma}$ pairs, so by union bound, we have 
    \begin{align*}
    \PP\left[ \exists (i,j): k_{ij} > 12\kb^{\gamma} 
       \,\text{ or }\, k_{ij}< \kb^{\gamma} \right] 
    &\;\le\; m^2 \exp\big(-\Omega(\kb^{\gamma})\big) \,.
\end{align*}
	Therefore, with probability $1 - o(1)$, we have $\kb^{\gamma} \leq k_{ij} \leq 12\kb^{\gamma}$ for all $(i,j)$.
\end{proof}



Recall that $\Tc(\epsilon_n)$ denotes the typical set of graphs defined in Section~\ref{sec:typical_graphs}, and $\PV_g\big(H_{ij}^{(\pi,\ell)}\big)$ denotes the number of defective vertices (blocks) in $H_{ij}^{(\pi,\ell)}$, with $g = 2^{\ell}$. The following lemma shows that, for any fixed graph $G \in \Tc(\epsilon_n)$ and any fixed pair $(i,j) \in [m]\times[m]$, if \(\pi\) is chosen uniformly at random from a pairwise independent permutation family, then with high probability the block graph $H_{ij}^{(\pi,\ell_0)}$ contains no defective vertices, where $\ell_0 = \ceil{\log_2 \kb^{2\gamma}}$.  

\begin{lemma}\label{lem:empty-level}
Fix $G = (V,E)\in \Tc(\epsilon_n)$ and a pair \((i,j) \in [m] \times [m]\). 
Let \(\pi\) be drawn uniformly from a pairwise independent permutation family on \([n_{ij}]\), and let \(H_{ij}^{(\pi, \ell_0)}\) denote the level–\(\ell_0\) block graph (see \eqref{eq:H_edge_set}), where \(\ell_0 \coloneqq \ceil{\log_2 \kb^{2\gamma}}\) and \(g_0\coloneqq 2^{\ell_0}\). Then
\begin{align*}
  \PP\Big[\PV_{g_0}\big(H_{ij}^{(\pi, \ell_0)}\big)=0 \,\Big|\, G_{ij}\Big] 
  &\ \ge\ 1 - \frac{k_{ij}}{g_0} \, ,
\end{align*}
where $k_{ij} = |E(G_{ij})|$.
Moreover, if \(\pi_1,\ldots,\pi_c\) are i.i.d.~permutations drawn uniformly from a pairwise independent permutation family on \( [n_{ij}]\), then
\begin{align*}
  \PP\Big[\exists\, t \in [c]:\ \PV_{g_0} \big(H_{ij}^{(\pi_t,\ell_0)}\big)=0 
  \,\Big|\, G_{ij}\Big] 
  \;\ge\; 1 - \left(\tfrac{k_{ij}}{g_0}\right)^{c}.
\end{align*}
\end{lemma}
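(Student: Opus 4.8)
The plan is to prove the first bound directly and then obtain the second by independence across the $c$ permutations. For the first bound, I would fix $G \in \Tc(\epsilon_n)$ and the pair $(i,j)$, and note that $\PV_{g_0}\big(H_{ij}^{(\pi,\ell_0)}\big) > 0$ exactly when some edge $(x,y) \in E(G_{ij})$ has both endpoints mapped by $\pi$ into the same level-$\ell_0$ block of the partition $\mathcal{U} = \{U_1,\dots,U_{g_0}\}$ of $V(G_{ij})$. So the strategy is a union bound over the (at most $k_{ij}$) edges of $G_{ij}$: for each edge $(x,y)$, bound $\PP_{\pi}[\pi(x) \text{ and } \pi(y) \text{ lie in the same block}]$.

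The key computation is this per-edge collision probability. For a fixed edge $(x,y)$ with $x \neq y$, pairwise independence of the family guarantees that $(\pi(x),\pi(y))$ is uniformly distributed over all ordered pairs of distinct elements of $[n_{ij}]$. The number of such ordered pairs landing in a common block is $g_0 \cdot \frac{n_{ij}}{g_0}\big(\frac{n_{ij}}{g_0}-1\big)$, out of $n_{ij}(n_{ij}-1)$ total, which gives collision probability $\frac{n_{ij}/g_0 - 1}{n_{ij}-1} \le \frac{1}{g_0}$ (since each block has size $n_{ij}/g_0 \ge 1$ and $\frac{a-1}{b-1}$ is increasing-then-bounded appropriately — more simply, $\frac{n_{ij}/g_0 - 1}{n_{ij}-1} \le \frac{n_{ij}/g_0}{n_{ij}} = \frac{1}{g_0}$). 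Taking a union bound over the $\le k_{ij}$ edges of $G_{ij}$ yields $\PP\big[\PV_{g_0}\big(H_{ij}^{(\pi,\ell_0)}\big) > 0 \mid G_{ij}\big] \le \frac{k_{ij}}{g_0}$, which is the claimed bound.

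For the second statement, the $c$ permutations $\pi_1,\dots,\pi_c$ are i.i.d., so the events $\{\PV_{g_0}\big(H_{ij}^{(\pi_t,\ell_0)}\big) > 0\}$ are mutually independent (all conditioned on the fixed graph $G_{ij}$). Hence $\PP\big[\forall t \in [c]:\ \PV_{g_0}\big(H_{ij}^{(\pi_t,\ell_0)}\big) > 0 \mid G_{ij}\big] = \prod_{t=1}^c \PP\big[\PV_{g_0}\big(H_{ij}^{(\pi_t,\ell_0)}\big) > 0 \mid G_{ij}\big] \le \big(\frac{k_{ij}}{g_0}\big)^c$, and taking the complement gives the stated lower bound on the probability that at least one permutation produces a defect-free level-$\ell_0$ block graph.

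I do not expect a serious obstacle here; the argument is a clean union bound plus independence. The one point requiring minor care is the exact collision-probability count: one must be careful to use the ordered-pair formulation (which is what pairwise independence most directly delivers) and to verify that the factor $g_0$ in the numerator from choosing the common block cancels correctly against the block size, leaving exactly the bound $1/g_0$ per edge rather than something off by a constant. A secondary modelling point to state explicitly is that the relevant partition here is the one on $V(G_{ij}) \cong [n_{ij}]$ at level $\ell_0$ with $g_0 = 2^{\ell_0} \le n_{ij}$ blocks (guaranteed by the choice $\ell_0 = \ceil{\log_2 \kb^{2\gamma}}$ together with $\gamma \le (1-\theta)/(3\theta)$), so that each block is nonempty and the size $n_{ij}/g_0 \ge 1$ bound used above is valid.
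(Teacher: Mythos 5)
Your proposal is correct and follows essentially the same first-moment argument as the paper: the paper bounds the expected number of defective blocks (summing the per-edge, per-block probability $\le 1/g_0^2$ over blocks and edges) and applies Markov, while you take the equivalent route of a union bound over the $\le k_{ij}$ edges with per-edge collision probability $\le 1/g_0$, followed by the same independence argument across the $c$ permutations. The only cosmetic difference is that the paper phrases the collision event via the inverse images $\{\pi^{-1}(x),\pi^{-1}(y)\}\subset U_v$ (matching its definition of the permuted blocks $\pi(U_v)$) whereas you use the forward images; both pairs are uniform over distinct pairs under pairwise independence, so the bound is identical.
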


\begin{proof}
Index the unpermuted blocks by \(v\in[g_0]\), and define the indicator variable that the permuted block \(\pi(U_v)\) is defective:
\begin{align*}
    D_v \coloneqq\ \mathbf{1}\Big\{\exists\, e=(x,y)\in E(G_{ij}) 
    \text{ s.t. } \{x,y\}\subset \pi(U_v)\Big\}.
\end{align*}
For a fixed edge \(e=(x,y)\in E(G_{ij})\), let
\begin{align*}
    X_{v,e} &\coloneqq \mathbf{1}\{\{x,y\}\subset \pi(U_v)\} \\
    &= \mathbf{1}\{\{\pi^{-1}(x), \pi^{-1}(y) \}\subset U_v\}.
\end{align*}
Since \(\pi\) is pairwise independent, the unordered pair \((\pi^{-1}(x),\pi^{-1}(y))\) is uniform over \(\binom{n_{ij}}{2}\) pairs. Hence,
\begin{align*}
  \PP\big[X_{v,e}=1\big]\ =\ \frac{\binom{|U_v|}{2}}{\binom{n_{ij}}{2}}
  \ \le\ \frac{\binom{n_{ij}/g_0}{2}}{\binom{n_{ij}}{2}}
  \ \le\ \frac{1}{g_0^{\,2}}.
\end{align*}
By definition, \(D_v\le \sum_{e\in E(G_{ij})} X_{v,e}\), and hence
\begin{align*}
  \EE[D_v]\ \le\ \sum_{e\in E(G_{ij})}\EE[X_{v,e}]\ \le\ k_{ij}\cdot g_0^{-2}.
\end{align*}
Summing over \(v\) and using linearity of expectation,
\begin{align*}
  \EE\left[\PV_{g_0}\big(H_{ij}^{(\pi,\ell_0)}\big)\right]\ =\ \sum_{v=1}^{g_0} \EE[D_v]\ \le\ g_0\cdot \frac{k_{ij}}{g_0^{2}}\ =\ \frac{k_{ij}}{g_0}.
\end{align*}
By Markov’s inequality, \(\PP[\PV_{g_0}\ge 1]\le k_{ij}/g_0\), which yields the first part of the lemma. The second part follows directly from the independence of \(\{\pi_t\}_{t=1}^{c}\) across $t$.
\end{proof}

The following lemma shows that, for any pair $(i,j) \in [m]\times[m]$, the maximum degree of $H_{ij}^{(\pi,\ell)}$ is at most $\sqrt{\kb_{ij}}$ with high probability, where we recall that $\kb_{ij} = q \binom{n_{ij}}{2} = \Theta(\kb^{\gamma})$.

\begin{lemma}\label{lem:fixed_per_degree}
Fix $\theta \in (0,1)$ and $(i,j) \in [m] \times [m]$. Let $G \sim \ER(n,q)$ for some $q = \Theta(n^{-2(1-\theta)})$, and let $\pi$ be a fixed permutation on $[n_{ij}]$. For each $\ell \in \{\ell_0,\ell_0+1,\dots,\log_2(n_{ij})\}$, let $g \coloneqq 2^\ell$, and let \(H_{ij}^{(\pi,\ell)}\) denote the level-\(\ell\) block graph (see \eqref{eq:H_edge_set}) and $\kb_{ij} \coloneqq q \binom{n_{ij}}{2} = \Theta(\kb^{\gamma})$. Under the preceding setup and definitions, we have
\[
\PP \Big[d_{g}\big(H_{ij}^{(\pi,\ell)}\big)>\sqrt{\kb_{ij}} \Big]\ \le\ n_{ij} \cdot \left( \frac{16e}{\kb^{3\gamma/2}}\right)^{\kb^{\gamma/2}}  \, .
\]
\end{lemma}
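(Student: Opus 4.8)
The plan is to fix a level $\ell$ with $g = 2^\ell$ and a single block $U_r$ (after permutation, $\pi(U_r)$), and bound the probability that its degree in $H_{ij}^{(\pi,\ell)}$ exceeds $\sqrt{\kb_{ij}}$, then take a union bound over the $g \le n_{ij}$ blocks and absorb the $\bigO(\log n_{ij})$ levels (actually just the worst one, since the stated bound has no $\log$ factor — I would check that the per-level, per-block bound is strong enough to swallow a union over all levels too, or simply state it for the worst level as written). Since $\pi$ is \emph{fixed} here, the only randomness is in the edge set of $G_{ij}\sim$ (edges present independently w.p.\ $q$). The degree of block $U_r$ counts the number of \emph{other} blocks $U_s$ ($s\ne r$) such that some edge of $G_{ij}$ joins $\pi(U_r)$ to $\pi(U_s)$. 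I would upper-bound this degree by the cruder quantity $Z_r \coloneqq$ the number of edges of $G_{ij}$ with at least one endpoint in $\pi(U_r)$ (each such edge contributes to at most one neighboring block, and actually over-counts, so $d_g(H_{ij}^{(\pi,\ell)}) \le Z_r$ when restricted to block $r$; more precisely $\deg(U_r)\le Z_r$).

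Next I would bound $Z_r$ by a Binomial-type tail. The block $\pi(U_r)$ has $|U_r| = n_{ij}/g$ vertices; the number of potential edges incident to it is at most $|U_r|\cdot n_{ij} = n_{ij}^2/g$, each present independently with probability $q$, so $\EE[Z_r] \le q n_{ij}^2/g \le 4\kb_{ij}/g$ (using $\kb_{ij} = q\binom{n_{ij}}{2}\ge q n_{ij}^2/4$ for $n_{ij}$ large, modulo constants). Since $g \ge g_0 = 2^{\lceil \log_2 \kb^{2\gamma}\rceil} \ge \kb^{2\gamma}$ and $\kb_{ij} = \Theta(\kb^\gamma)$, we get $\EE[Z_r] = \bigO(\kb^\gamma/\kb^{2\gamma}) = \bigO(\kb^{-\gamma}) = o(1)$, which is far below the threshold $\sqrt{\kb_{ij}} = \Theta(\kb^{\gamma/2})$. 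Then the Chernoff bound in the form $\PP[Z_r > a] \le (e\,\EE[Z_r]/a)^{a}$ (see~\eqref{eq:chernoff2}) with $a = \sqrt{\kb_{ij}} \approx \kb^{\gamma/2}$ and $\EE[Z_r] \lesssim \kb^{-\gamma}$ (times a constant I'd track to land on $16$) gives $\PP[Z_r > \sqrt{\kb_{ij}}] \le \big(16e/\kb^{3\gamma/2}\big)^{\kb^{\gamma/2}}$, matching the stated bound up to the bookkeeping of constants. Finally, a union bound over the at most $n_{ij}$ blocks at level $\ell$ gives the claim.

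I do not expect a serious obstacle here — the computation is routine once the right crude upper bound ($Z_r$) is chosen. The one point requiring a little care is matching constants: one needs $g \ge g_0$ to force $\EE[Z_r]$ small enough that the Chernoff exponent $a = \sqrt{\kb_{ij}}$ really does dominate, and this is exactly why $\ell_0 = \lceil\log_2\kb^{2\gamma}\rceil$ (rather than $\lceil\log_2\sqrt{\kb}\,\rceil$ as in Section~\ref{sec:binary_spliting}) was chosen — the remark after~\eqref{eq:new_l0} flagged this. I would also double-check that the degree-vs-edge-count reduction $\deg(U_r)\le Z_r$ is valid in the permuted graph: since $\pi$ is a bijection, an edge $(x,y)\in E(G_{ij})$ with $x\in\pi(U_r)$ contributes the single neighbor $U_{s}$ where $\pi^{-1}(y)\in U_s$, so indeed $\deg(U_r)$ is at most the number of edges incident to $\pi(U_r)$, which is $Z_r$. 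Everything else is an application of~\eqref{eq:chernoff2} and a union bound over blocks.
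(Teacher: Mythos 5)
Your proposal is correct and follows essentially the same route as the paper: bound the expected degree of a fixed block using $g \ge g_0 \ge \kb^{2\gamma}$, apply the Chernoff bound \eqref{eq:chernoff2} at threshold $\sqrt{\kb_{ij}}$, and take a union bound over the at most $n_{ij}$ blocks. The only cosmetic difference is that the paper writes the degree directly as a sum of independent block-pair indicators (one per other block, each being the event of at least one edge between the two blocks), whereas you upper-bound the degree by the number of edges incident to the block; both give an expectation of order $\kb^{-\gamma}$ and the same tail bound after tracking constants.
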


\begin{proof}
Fix a particular block $U$ (a vertex of $H_{ij}^{(\pi,\ell)}$), and let $Z \coloneqq d_{g}(U)$ denote its degree in $H_{ij}^{(\pi,\ell)}$.  For notational convenience, suppose that $U$ is the $g$-th block, and for each $i \in [g-1]$, let $Z_i$ be the indicator random variable for the event that there is at least one edge between the two blocks $U$ and $U_i$. By definition, 
\begin{align*}
    Z = \sum_{i=1}^{g-1} Z_i \, .
\end{align*}
Since each edge is generated independently, we have
\begin{align*}
    \PP[Z_i = 1] = 1 - (1-q)^{|U|^2} \, .
\end{align*}
Since $|U|=\tfrac{n_{ij}}{g}$, similar to~\eqref{eq:bound_prob_defective}, substituting $n_{ij}=2n\,\kb^{(\gamma-1)/2}$ and $q=\tfrac{\kb}{\binom{n}{2}}$, and using the standard bound $\tfrac{q|U|^{2}}{2}\le 1-(1-q)^{|U|^{2}}\le q|U|^{2}$ (for $0\le q|U|^{2}\le 1$), we obtain 
\begin{align*}
    \PP[Z_i=1] \;&\le\; \frac{16\kb^{\gamma}}{g^{2}},\\
    \EE[Z] \;&\le\; \frac{16\kb^{\gamma}}{g}\, .
\end{align*}

From $g\geq g_{0} \geq \kb^{2\gamma}$, it follows that
$\EE[Z]\le \frac{16\kb^{\gamma}}{g}\le  16 \kb^{-\gamma}$, and we also recall that $\sqrt{\kb_{ij}} \ge \kb^{\gamma}$ (see \eqref{eq:k_ij_bounds}). 
Combining these with the Chernoff bound (\eqref{eq:chernoff2} in Appendix~\ref{appendix:concentration}) gives
\begin{align*}
    \PP \Big[ Z \geq \sqrt{\kb_{ij}} \Big] \leq \Big( \frac{e\EE[Z]}{\sqrt{\kb_{ij}}}\Big)^{\sqrt{\kb_{ij}}}   
    \leq \Big( \frac{16e}{\kb^{3\gamma/2}} \Big)^{\kb^{\gamma/2}}.
\end{align*}
The proof is completed by applying a union bound over the $n_{ij}$ vertices of $G_{ij}$.
\end{proof}

\subsection{Typical Graphs Under Shared Permutations}
\label{sec:typical_graph_permuation}

Recall that \(\ell_0 \coloneqq \ceil{\log_2 \kb^{2\gamma}}\) and \(g_0 \coloneqq 2^{\ell_0}\). For each pair \((i,j) \in [m] \times [m]\), and fix a bijection (labeling map)
\[
  \lambda_{ij}: V(G_{ij}) \to [n_{ij}] \, .
\]
See \eqref{eq:fix_labeling} below for an explicit choice of $\lambda_{ij}$ that suffices for our purposes.

Let \(\Pi = (\pi_1,\ldots,\pi_c)\) denote \(c\) i.i.d.\ permutations drawn uniformly from a pairwise independent permutation family on \([n_{ij}]\).
For a given pair \((i,j)\) and \(t\in[c]\), we \emph{reuse} \(\pi_t\) on \(V(G_{ij})\) by conjugation with the labeling map:
\[
  \pi_t^{(i,j)} \coloneqq \lambda_{ij}^{-1}\circ \pi_t \circ \lambda_{ij} : V(G_{ij}) \to V(G_{ij}).
\]
Note that in Section~\ref{sec:subgraphs}, for notational convenience, we assumed that
\(V(G_{ij})\) is labeled as \(\{1,2,\dots,n_{ij}\}\), in which case \(\lambda_{ij}\) is the identity map.
We now explicitly introduce \(\lambda_{ij}\) in order to drop this assumption.

As in Section~\ref{sec:subgraphs}, we define a partition \(\Sc_1,\dots,\Sc_m\) of \(V\) by
\begin{equation}
     \label{eq:partition_V}
     \Sc_i = \{(i-1)\tfrac{n}{m}+1,\dots, i\tfrac{n}{m} \} \quad \forall i \in [m] \, .
\end{equation}
Then \(\Sc_1,\dots,\Sc_m\) is a partition of \(V\) into \(m\) parts of equal size.
Recall that \(G_{ij} = G[\Sc_i \cup \Sc_j]\) with \(i<j\) and
\(n_{ij} = |V(G_{ij})| = \tfrac{2n}{m}\).
With this construction, the map \(\lambda_{ij}\colon \Sc_i \cup \Sc_j \to [n_{ij}]\) is defined as follows.
For each \(x \in \Sc_i \cup \Sc_j\),
\begin{equation}
 \label{eq:fix_labeling}
    \lambda_{ij}(x) =
\begin{cases}
  p, & \text{if } x = (i-1)\dfrac{n}{m}+p,\\[2mm]
  p+\dfrac{n}{m},  & \text{if } x = (j-1)\dfrac{n}{m}+p \, .
\end{cases}
\end{equation}
In this construction, \(\lambda_{ij}\) is fully determined by \(i\) and \(j\), and both
\(\lambda_{ij}\) and \(\lambda_{ij}^{-1}\) can be computed in time \(\bigO(1)\).

For any level \(\ell\) with \(g = 2^{\ell} \in [g_0, n_{ij}]\), similar to~\eqref{eq:H_edge_set}, we let \(H_{ij}^{(t,\ell)}\) denote the 
level-\(\ell\) block graph obtained from \(G_{ij}\) after applying the permutation  \(\pi_t^{(i,j)}\). Specifically, its vertex set \(V(H_{ij}^{(t,\ell)})\) consists of the \(g\) level-\(\ell\) blocks (each treated as a single vertex), and its edge set is defined via:
\begin{align}
    (r,s)\in E(H_{ij}^{(t,\ell)}) \notag 
    &\iff \exists\, x\in \pi_{t}^{(i,j)}(U_r),\ 
       \exists\, y\in \pi_{t}^{(i,j)}(U_s) \notag \\
    &\hspace{2em} \text{such that } (x,y)\in E(G_{ij}).
    \label{eq:H_edge_set_t}
\end{align}
which matches \eqref{eq:H_edge_set} with $\pi = \pi_{t}^{(i,j)}$.  
In addition, \(\PV_g(\cdot)\) denotes the number of defective blocks at level \(\ell\) with \(g = 2^{\ell}\), and \(d_g(\cdot)\) denotes the maximum degree of the block graph $H_{ij}^{(t,\ell)}$.

\begin{definition} \label{def:new_typical}
    We define the \emph{typical set of graphs with permutations} \(\mathcal{T}(\epsilon_n,c,\gamma)\) to be the collection of all pairs \((G,\Pi)\) such that \(G \in \Tc(\epsilon_n)\) (see Definition \ref{def:typical}) and, for every \((i,j)\in[m]\times[m]\), the following conditions hold:
    \begin{enumerate}
      \item[(C1)] The number of edges in \(G_{ij}\), denoted \(k_{ij}\coloneqq |E(G_{ij})|\), satisfies \(\kb^{\gamma} \le k_{ij} \le 12\,\kb^{\gamma}\) and $k_{ij} \leq 2\kb_{ij}$, where we recall that $\kb_{ij} = q \binom{n_{ij}}{2}$ is the expected number of edges of $G_{ij}$.
      \item[(C2)] There exists \(t\in[c]\) such that \(\PV_{g_0} \big(H_{ij}^{(t,\ell_0)}\big)=0\).
      \item[(C3)] For every level \(\ell\) with \(g=2^{\ell}\in[g_0,n_{ij}]\) and every \(t\in[c]\), we have \(d_{g} \big(H_{ij}^{(t,\ell)}\big)\leq \sqrt{\kb_{ij}}\). 
    \end{enumerate}
\end{definition}

We combine Lemma~\ref{lem:typical}, Lemma~\ref{lem:number_edges_reducing}, Lemma~\ref{lem:empty-level}, and Lemma~\ref{lem:fixed_per_degree} to obtain the following lemma. 
\begin{lemma}\label{lem:typical_permutation}
Fix \(\theta \in (0,1)\) and \(\gamma \in \bigl(0, \min \{1, \tfrac{1-\theta}{3\theta}\bigr) \}\), and let \(c > 1/\gamma\). Let \(\Pi=(\pi_1,\ldots,\pi_c)\) denote \(c\) i.i.d.\ permutations drawn uniformly from a pairwise independent permutation family on \([n_{ij}]\) together with \(G \sim \mathrm{ER}(n,q)\), where $q = \Theta(n^{-2(1-\theta)})$. Under the preceding setup and definitions, there exists a nonnegative sequence \(\epsilon_n \to 0\) such that
\begin{align*}
  \PP \Big[ (G,\Pi) \in \mathcal{T}(\epsilon_n,c,\gamma) \Big]\ \to\ 1 
  \quad \text{as } n \to \infty.
\end{align*}
\end{lemma}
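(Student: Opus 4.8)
The plan is to show that each of the defining conditions (C1)--(C3) of $\mathcal{T}(\epsilon_n,c,\gamma)$, together with the base-level requirement $G \in \Tc(\epsilon_n)$, holds with probability $1-o(1)$, and then to combine them by a union bound. First I would invoke Lemma~\ref{lem:typical} to fix a sequence $\epsilon_n \to 0$ with $\PP[G \in \Tc(\epsilon_n)] \to 1$; all subsequent estimates will be conditioned on the event $\{G \in \Tc(\epsilon_n)\}$ (or on the relevant $G_{ij}$), and I will note that probabilities over the permutation family $\Pi$ are taken for a fixed such $G$, while probabilities involving edge randomness are taken for a fixed permutation --- the two sources of randomness are handled separately and then merged, as flagged in the text preceding the lemmas.

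The core of the argument is the following three estimates, each uniform over the at most $m^2 = \kb^{1-\gamma}$ pairs $(i,j)$. For (C1): Lemma~\ref{lem:number_edges_reducing} already gives $\kb^\gamma \le k_{ij} \le 12\kb^\gamma$ for all $(i,j)$ with probability $1-o(1)$; the additional clause $k_{ij} \le 2\kb_{ij}$ follows from the same Chernoff bound used in that lemma's proof, since $\kb_{ij} \ge 2\kb^\gamma$ by \eqref{eq:k_ij_bounds}, so $k_{ij} \le 12\kb^\gamma \le 6\kb_{ij}$ deterministically on that event --- actually one uses the two-sided Chernoff deviation $|k_{ij}-\kb_{ij}| \le \tfrac12\kb_{ij}$, which yields $k_{ij}\le \tfrac32\kb_{ij}\le 2\kb_{ij}$, again with failure probability $\exp(-\Omega(\kb^\gamma))$ per pair and hence $m^2\exp(-\Omega(\kb^\gamma)) \to 0$ after the union bound. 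For (C2): apply the second part of Lemma~\ref{lem:empty-level} with the reused permutations $\pi_t^{(i,j)}$ (noting that conjugation by the fixed bijection $\lambda_{ij}$ preserves the pairwise-independence property, so the lemma applies verbatim); since $g_0 = 2^{\ell_0} \ge \kb^{2\gamma}$ and $k_{ij} \le 12\kb^\gamma$ on the (C1) event, we get $k_{ij}/g_0 \le 12\kb^{-\gamma} \to 0$, hence the failure probability for a fixed $(i,j)$ is at most $(k_{ij}/g_0)^c \le (12\kb^{-\gamma})^c = \bigO(\kb^{-\gamma c})$, and the union bound over $m^2 = \kb^{1-\gamma}$ pairs gives $\bigO(\kb^{1-\gamma-\gamma c})$, which is $o(1)$ precisely because $c > 1/\gamma$ forces $\gamma c > 1 > 1-\gamma$. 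For (C3): apply Lemma~\ref{lem:fixed_per_degree} for each fixed pair $(i,j)$, each $t \in [c]$, and each of the $\bigO(\log n)$ levels $\ell \in [\ell_0, \log_2 n_{ij}]$; the per-event failure probability is $n_{ij}\big(16e/\kb^{3\gamma/2}\big)^{\kb^{\gamma/2}}$, which is super-polynomially small in $\kb$ (the base tends to $0$ and the exponent to $\infty$), so multiplying by $m^2 \cdot c \cdot \bigO(\log n) = \mathrm{poly}(n)$ still yields $o(1)$.

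Finally I would assemble the pieces: the event $(G,\Pi)\in\mathcal{T}(\epsilon_n,c,\gamma)$ is the intersection of $\{G\in\Tc(\epsilon_n)\}$ with the (C1), (C2), (C3) events, each of which has probability $1-o(1)$, so a union bound over these four (in fact $O(1)$ plus polynomially many sub-events) gives $\PP[(G,\Pi)\in\mathcal{T}(\epsilon_n,c,\gamma)] \ge 1 - o(1)$. The main obstacle --- really the only place where the hypotheses are used sharply rather than through soft concentration --- is condition (C2): its failure probability per pair decays only \emph{polynomially} in $\kb$ (namely $\kb^{-\gamma c}$), not exponentially, so the interplay between the number of subgraphs $m^2 = \kb^{1-\gamma}$ and the number of permutations $c$ is exactly what must be balanced, and it is for this reason that the hypothesis $c > 1/\gamma$ is imposed; one must also be careful that the choice $\ell_0 = \ceil{\log_2 \kb^{2\gamma}}$ (with the factor $2\gamma$ rather than $\gamma/2$) is what makes $k_{ij}/g_0 \to 0$, without which the bound $(k_{ij}/g_0)^c$ would be useless.
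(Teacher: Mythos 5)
Your proposal is correct and follows essentially the same route as the paper's proof: Lemma~\ref{lem:typical} for the base typicality, a Chernoff bound plus union bound for (C1), Lemma~\ref{lem:empty-level} with the $\kb^{-\gamma c}$ versus $m^2=\kb^{1-\gamma}$ trade-off (exactly where $c>1/\gamma$ is needed) for (C2), and Lemma~\ref{lem:fixed_per_degree} with a union bound over permutations, levels, and pairs for (C3). Your conditioning on the (C1) event in the (C2) step is just a minor repackaging of the paper's law-of-total-probability computation, so no substantive difference remains.
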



\begin{proof}
From Lemma~\ref{lem:typical}, with probability at least $1-o(1)$ we have $G \in \Tc(\epsilon_n)$. 

\medskip\noindent
\textbf{For (C1):} By Lemma~\ref{lem:number_edges_reducing}, with probability $1-o(1)$, for all $(i,j)$ we have
\[
  \kb^{\gamma} \le k_{ij} \le 12\,\kb^{\gamma}.
\]
Similar to~\eqref{eq:bound_number_edges}, we have
\[
\PP\big[k_{ij} > 2\kb_{ij}\big] \le \exp \big(-\Omega(\kb_{ij})\big) = \exp \big(-\Omega(\kb^{\gamma})\big) \, .
\]
Taking a union bound over $\binom{m}{2} \le m^2 = \kb^{1-\gamma}$ pairs, we obtain that, with probability at least $1-o(1)$, $k_{ij} \le 2\kb_{ij} \text{ for all }(i,j)\in[m]\times[m] \,$.

\medskip\noindent
\textbf{For (C2):} By Lemma~\ref{lem:empty-level}, 
\begin{align*}
    &\PP_{\Pi}\Big[\exists\,t\in[c]:\ 
       \PV_{g_0}\big(H^{(t,\ell_0)}_{ij}\big)=0\ \Big|\ G\Big] \\
    &\;\ge\; \max \left\{0,\ 1 - \left(\frac{k_{ij}(G)}{g_0}\right)^{c}\right\} \, ,
\end{align*}
where $k_{ij}(G)$ denotes the number of edges in $G_{ij}$. By the law of total probability,  
\begin{align*}
&\PP_{G,\Pi}\left[\exists\,t\in[c]:\, \PV_{g_0}\left(H^{(t,\ell_0)}_{ij}\right)=0\right] \\
&= \EE_{G}\left[\,\PP_{\Pi}\left(\exists\,t\in[c]:\, \PV_{g_0} \left(H^{(t,\ell_0)}_{ij}\right)=0 \,\middle|\, G\right)\right] \\
&\ge \EE_{G} \left[ \max\Big\{0,\ 1 - \Big(\tfrac{k_{ij}(G)}{g_0}\Big)^{c}\Big\} \right] \\
&\ge 1 - \bigO \left( \frac{1}{\kb^{c \gamma}} + \exp(-c'\kb^{\gamma})\right)\, ,
\end{align*}
where in the last inequality we used $g_0 \geq  \kb^{2\gamma}$ and~\eqref{eq:bound_edge_component} in Lemma~\ref{lem:number_edges_reducing}, 
which states that $\PP[\kb^{\gamma} \leq k_{ij}(G) \leq 12\kb^{\gamma}] \geq 1 - \exp \big(-\Omega(\kb^{\gamma}) \big)$. Since $c> 1/\gamma$, taking a union bound over the $ \binom{m}{2} \leq m^2 = \kb^{1-\gamma}$ pairs $(i,j)$ shows that, with probability at least $1-o(1)$, condition (C2) holds.

\medskip\noindent
\textbf{For (C3):} By Lemma~\ref{lem:fixed_per_degree}, for any level $\ell$ and permutation $\pi_t$ (applied to $G_{ij}$ via $\pi_t^{(i,j)}$), we have
\[
\PP_{G}\Big[\,d_g\big(H^{(t,\ell)}_{ij}\big)>\sqrt{\kb_{ij}}\ \Big|\ \pi_t\Big]
\ \le\ n_{ij}\left(\frac{16e}{\kb^{3\gamma/2}}\right)^{\kb^{\gamma}/2}.
\]
Applying a union bound over the $c$ permutations and at most $\log_2 n$ levels $\ell\in\{\ell_0,\ldots, \log_2 n_{ij}\}$ yields
\begin{align*}
    &\PP_{G} \Big[\,\exists\,t\in[c],\ \exists\,\ell\ge\ell_0:\ 
       d_g\big(H^{(t,\ell)}_{ij}\big)>\sqrt{\kb_{ij}}\ \Big|\ \Pi\Big] \\
    &\;\le\; c\log_2 n \cdot n_{ij} 
       \left(\frac{16e}{\kb^{3\gamma/2}}\right)^{\kb^{\gamma}/2}.
\end{align*}
By the law of total probability and $n_{ij} = 2n \cdot \kb^{\frac{\gamma-1}{2}}$, we have
\begin{align*}
    &\PP_{G,\Pi} \Big[\,\exists\,t,\ell:\ d_g \big(H^{(t,\ell)}_{ij}\big)>\sqrt{\kb_{ij}}\Big] \\
&=\EE_{\Pi} \Big[\ \PP_{G} \big(\exists t,\ell:\ d_g>\sqrt{\kb_{ij}}\ \big|\ \Pi\big)\ \Big] \\
 &\le\ c\log_2 n \cdot n_{ij} \left(\frac{16e}{\kb^{3\gamma/2}}\right)^{\kb^{\gamma}/2} \\
  &\le\ c\log_2 n \cdot 2n \cdot \kb^{\frac{\gamma-1}{2}} \left(\frac{16e}{\kb^{3\gamma/2}}\right)^{\kb^{\gamma}/2} \, .
\end{align*}
Taking a union bound over the $ \binom{m}{2} \leq m^2 = \kb^{1-\gamma}$ pairs $(i,j)$ shows that, with probability $1-o(1)$, condition (C3) holds.
\end{proof}



\begin{remark}
The same $\Pi$ is reused for all pairs $(i,j)\in[m]\times[m]$; differences across $(i,j)$ arise solely from the fixed label bijections $\lambda_{ij}$ as defined in~\eqref{eq:fix_labeling}.
\end{remark}

\subsection{Testing and Decoding Procedures}
\label{sec:testing-decoding_permutation}
We now present the testing and decoding procedures. For ease of reference, Table~\ref{tab:sec4-notation} summarizes the main
notation and constants used in this section.

\begin{table*}[t]
\centering
\renewcommand{\arraystretch}{1.1}
\begin{tabular}{|c|l|l|}
\hline
\textbf{Symbol} & \textbf{$\qquad$ Definition / Constraint} & \hspace*{1.4cm}\textbf{Description / Role} \\
\hline
\multicolumn{3}{|l|}{\emph{Partition and subgraph parameters}} \\
\hline
$\gamma$             & $\gamma \in \bigl(0,\, \min\{1,\, \tfrac{1-\theta}{3\theta}\}\bigr)$ & Controls partition size             \\
\hline
$m$                  & $m = \kb^{(1-\gamma)/2}$                                              & Number of subsets in partition            \\
\hline
$S_1,\dots,S_m$      & Balanced partition of $V$                                             & Vertex partition                  \\
\hline
$G_{ij}$             & $G[S_i \cup S_j]$                                                     & Induced subgraph                  \\
\hline
$n_{ij}$             & $2n \cdot \kb^{(\gamma-1)/2}$                                         & Number vertices of $G_{ij}$              \\
\hline
$\kb_{ij}$           & $q\binom{n_{ij}}{2} = \Theta(\kb^{\gamma})$                           & Expected number of edges in $G_{ij}$        \\
\hline
$\ell_0$             & $\lceil \log_2 \kb^{2\gamma} \rceil$                                  & Base level index                  \\
\hline
$g_0$                & $2^{\ell_0}$                                                           & Number of groups at base level              \\
\hline
\multicolumn{3}{|l|}{\emph{Permutation parameters}} \\
\hline
$c$                  & $c > 1/\gamma$                                                         & Number of permutations            \\
\hline
$\pi_1,\dots,\pi_c$  & i.i.d.\ from pairwise indep.\ family on $[n_{ij}]$                   & Global collection of permutations          \\
\hline
$\lambda_{ij}$       & $V(G_{ij}) \to [n_{ij}]$                                              & Labeling map                      \\
\hline
$\pi_t^{(i,j)}$      & $\lambda_{ij}^{-1} \circ \pi_t \circ \lambda_{ij}$                   & Permutation on $V(G_{ij})$        \\
\hline
\multicolumn{3}{|l|}{\emph{Block graph quantities}} \\
\hline
$H_{ij}^{(t,\ell)}$  & See Eq.~\eqref{eq:H_edge_set_t}                                                          & Block graph of $G_{ij}$ under $\pi_t^{(i,j)}$ at level $\ell$ \\
\hline
$\nu_g(\cdot)$       & Number of defective blocks                                             & Counts blocks with internal edges        \\
\hline
$d_g(\cdot)$         & Maximum degree of block graph                                          & Controls block graph regularity            \\
\hline
$\mathcal{PD}_{ij}^{(t,\ell)}$ & Possibly defective pairs at level $\ell$                      & Maintains candidate edge pairs              \\
\hline
\multicolumn{3}{|l|}{\emph{Testing and decoding constants}} \\
\hline
$C_1$                & $C_1 > 27$                                                             & Constant in tests per repetition      \\
\hline
$C_2$                & $C_2 = C_1^2$                                                          & Constant in number of repetitions     \\
\hline
$C_3$                & $C_3 \geq 3e$                                                          & Constant for base level testing       \\
\hline
$c'$                 & $c' > 2/\gamma$                                                        & Number of repetitions (amplification)\\
\hline
$C'$                 & $C' > \frac{8}{1 - \theta + \gamma\theta}$                             & Final level amplification         \\
\hline
\end{tabular}
\vspace{0.5em}
\caption{Summary of notation and constants in this section. Here $\theta \in (0,1)$ is the sparsity parameter, $\kb = q\binom{n}{2} = \Theta(n^{2\theta})$ is the expected number of edges, and $q = \Theta(n^{-2(1-\theta)})$ is the edge probability.}
\label{tab:sec4-notation}
\end{table*}

For each pair \((i,j)\), recall that \(G_{ij}\) is the induced subgraph on \(\mathcal{S}_i\cup \mathcal{S}_j\), and that we defined
\begin{align*}
  k_{ij}\ = |E(G_{ij})|, \qquad \ell_0\ = \ceil{\log_2 \kb^{2\gamma}}, \qquad g_0 = 2^{\ell_0} \, ,
\end{align*}
where \(\gamma\in\big(0, \min \{1, \tfrac{1-\theta}{3\theta} \}\big)\) and \(\kb = q \binom{n}{2} = \Theta(n^{2\theta})\). We draw, once and for all, a global tuple $\Pi=(\pi_1,\dots,\pi_c)$ of i.i.d.\ permutations sampled from a given pairwise independent permutation family; these are reused for \emph{all} pairs $(i,j)$ according to the labelings defined in~\eqref{eq:fix_labeling}. Let $G \sim \ER(n,q)$ with $q = \Theta \big(n^{-2(1-\theta)}\big)$ for $\theta \in (0,1)$.  By Lemma~\ref{lem:typical_permutation}, we have $(G,\Pi) \in \Tc(\epsilon_n,c,\gamma)$ with probability at least $1-o(1)$. In particular, for each pair $(i,j) \in [m]\times[m]$, $G_{ij}$ satisfies the following typical subgraph properties:
\begin{itemize}
    \item The number of edges in $G_{ij}$ satisfies $\kb^{\gamma} \le k_{ij}  \le 12\,\kb^{\gamma}$ and $k_{ij} \le 2\kb_{ij}$, where $\kb_{ij} = q\binom{n_{ij}}{2}$ with $n_{ij} =2n \cdot \kb^{\frac{\gamma-1}{2}}$.
    \item There exists $t\in [c]$ such that $\PV_{g_0} \big(H_{ij}^{(t,\ell_0)}\big)=0$. 
    This implies that for all levels $\ell \ge \ell_0$ we have $\PV_g \big(H_{ij}^{(t,\ell)}\big)=0$, i.e., there are no defective blocks in the block graphs $H_{ij}^{(t,\ell)}$ for $\ell \ge \ell_0$.
    \item For every level $\ell \ge \ell_0$ (with $g=2^{\ell}$) and $t \in [c]$, we have $d_g \big(H_{ij}^{(t,\ell)}\big) \le \sqrt{\kb_{ij}}$.
\end{itemize}
With these properties, we can apply the testing–decoding scheme of Algorithms~\ref{alg:testing} and~\ref{alg:decoding} to $G_{ij}$ using the permutation $\pi_t^{(i,j)}$, with the minor change of starting from a slightly higher initial level $\ell_0$ as discussed following \eqref{eq:new_l0}.  The details are described as follows.

We first describe the testing procedure, which is given in Algorithm~\ref{alg:testing_permutation}.  For each permutation \(\pi_t^{(i,j)}\) with \(t\in[c]\), we apply Algorithm~\ref{alg:testing} to \(G_{ij}\) with an expected number of edges \(\kb_{ij}\), from level \(\ell_0=\lceil \log_2 \kb^{2\gamma}\rceil\) up to \(\log_2 n_{ij}-1\), as follows.
We first perform $5C_3\,\kb_{ij}\log n$ tests on the nodes at level $\ell_0$; as we will see below, these will be used to find an index $t$ with $\PV_{g_0} \big(H_{ij}^{(t,\ell_0)}\big)=0$, where $g_0=2^{\ell_0}$.
Then, at each level $\ell$, as in Algorithm~\ref{alg:testing}, we randomly assign each vertex (block) of the block graph $H_{ij}^{(t,\ell)}$ to one of $C_1 \sqrt{\kb_{ij}}$ tests; i.e., if a vertex $v$ with corresponding block $U_v$ is chosen for a test, then all elements of $\pi_t^{(i,j)}(U_v)$ are placed into that test. 
This procedure is repeated for $C_2 \sqrt{\kb_{ij}}$ iterations. 
At the final level $\log_2 n_{ij}$, we perform $C' \log n_{ij}$ iterations, each consisting of $C_1 C_2 \kb_{ij}$ tests, as in Algorithm~\ref{alg:testing}. In addition, for each $G_{ij}$ and permutation $\pi_t^{(i,j)}$, we repeat the above procedure some number $c'$ of times independently to amplify the success probability.

\begin{algorithm*}[t]
\caption{Test Design Based on Permutations and Smaller Subproblems}
\label{alg:testing_permutation}
\begin{algorithmic}[1]
\REQUIRE
\parbox[t]{0.88\linewidth}{
Number of vertices $n$; number of expected edges $\kb$; parameter
$\gamma \in (0, \min {1, \frac{1-\theta}{3\theta} } )$; number of
permutations $c$; number of design repetitions $c'$; parameter
$C_3 \ge 3e$.
}

\STATE Partition $V$ into $m \gets \kb^{(1-\gamma)/2}$ balanced parts
$\Sc_1,\dots,\Sc_m$ as in~\eqref{eq:partition_V}.

\STATE For each $(i,j)\in[m]\times[m]$, set
$G_{ij} \gets G[\Sc_i \cup \Sc_j]$, \ $n_{ij} \gets |V(G_{ij})|$,
\ $\kb_{ij} \gets q \binom{n_{ij}}{2}$.

\STATE Set $\ell_{0} \gets \lceil \log_2 \kb^{2\gamma} \rceil$ and
$g_0 \gets 2^{\ell_0}$.

\STATE
\parbox[t]{0.93\linewidth}{
Draw \(c\) independent permutations \(\Pi=(\pi_1,\dots,\pi_c)\)
from a pairwise independent permutation family on \([n_{ij}]\);
fix relabelings $\lambda_{ij}: V(G_{ij}) \to  [n_{ij}]$ as
in~\eqref{eq:fix_labeling} and define
$\pi_t^{(i,j)} \gets \lambda_{ij}^{-1} \circ \pi_t \circ \lambda_{ij}$.
}

\FOR{each pair $(i,j)\in[m]\times[m]$}
  \FOR{$t=1,\dots,c$}
    \FOR{$r=1,\dots,c'$}
      \STATE
      \parbox[t]{0.91\linewidth}{
      \textit{\color{blue}// Level-$\ell_0$ tests to identify a permutation
      $t \in [c]$ s.t.~$H_{ij}^{(t,\ell_0)}$
      (see \eqref{eq:H_edge_set_t}) has no defective blocks}
      }

      \FOR{each iteration in \( \{1, \dotsc, 5 \log n \} \)}
        \STATE Initialize a sequence of $C_3 \kb_{ij}$ tests
        \FOR{each block  $U_{\kappa}$ with \( \kappa = 1, \dotsc, g_0 \)}
          \STATE Assign block \( \pi_{t}^{(i,j)}(U_{\kappa}) \) to a randomly chosen test among the $C_3 \kb_{ij}$ tests
        \ENDFOR
      \ENDFOR

      \STATE \textit{\color{blue}// Run Algorithm~\ref{alg:testing} on the smaller subgraph $G_{ij}$}

      \STATE
      \parbox[t]{0.91\linewidth}{
      Apply Algorithm~\ref{alg:testing} to $G_{ij}$, whose expected
      number of edges is $\kb_{ij}$, using the permutation
      $\pi_t^{(i,j)}$, from level $\ell_{0}$ up to $\log_2 n_{ij}$.
      }
    \ENDFOR
  \ENDFOR
\ENDFOR

\end{algorithmic}
\end{algorithm*}


Next, we describe the decoding procedure, which is given in Algorithm~\ref{alg:decoding_permutation}.  We rely on the high-probability event $(G,\Pi) \in \Tc(\epsilon_n,c,\gamma)$, which includes typical subgraph properties for each $G_{ij}$.  
First, we identify an index $t$ such that $\PV_{g_0} \big(H_{ij}^{(t,\ell_0)}\big)=0$ using a procedure described in Lemma~\ref{lem:find_permutation_base} below. 
For the resulting pair $(G_{ij},\pi_t^{(i,j)})$, note that $c'$ independent test designs have been prepared for $(G_{ij},\pi_t^{(i,j)})$.
We execute the decoding procedure of Algorithm~\ref{alg:decoding} on each of these designs, indexed by $r=1,2,\ldots,c'$.
In a given round $r\in[c']$, the decoder proceeds from level $\ell_0$ up to $\log_2 n_{ij}$ and continues only while the possibly defective set $\mathcal{PD}_{ij}^{(t,\ell)}$ remains suitably bounded, namely
\[
  \big|\mathcal{PD}_{ij}^{(t,\ell)}\big| \;\le\; 7\,\kb^{4\gamma}.
\]
If this condition is violated at any level, the current round $r$ is terminated, and the algorithm proceeds to $r+1$.
We consider a round $r$ to be \emph{successful} if decoding completes all levels up to $\log_2 n_{ij}$, and our analysis will only rely on having at least one such success per $(i,j)$ pair.

As we will show in Lemma~\ref{lem:pd_size_permutation_level} below, each round succeeds with probability $1-\bigO\Big(\tfrac{\log n_{ij}}{\kb^{\gamma/2}}\Big)$. 
By independence across the $c'$ designs, the probability that at least one round succeeds among the $c'$ rounds is $1 - \bigO \Big( \big(\tfrac{\log n_{ij}}{\kb^{\gamma/2}}\big)^{c'} \Big)$, which will be useful for applying a union bound over all $(i,j)$ pairs.

\begin{remark}
 Note that the decoding algorithm works on the vertex index set of the block graph $H_{ij}^{(t,\ell)}$, whose vertex set is $\{1,\dots,g\}$ with $g = 2^\ell$. Hence, the possible defective set $\mathcal{PD}_{ij}^{(t,\ell)}$ records only these block-pair indices. The actual edges in the original graph $G_{ij}$ are recovered by mapping each PD index $(r,s)\in [n_{ij}]\times[n_{ij}]$ back to its corresponding edge in $G_{ij}$ via the fixed labeling $\lambda_{ij}$ and the permutation $\pi_t^{(i,j)}$ used during test design. Specifically, when a block $U$ is added to a test, we actually include $\pi_t^{(i,j)}(U)$ in that test. Therefore, in the final level of decoding, we apply $\pi_t^{(i,j)} \circ \lambda_{ij}^{-1}$ to each PD index in order to recover the original edge endpoints in $G_{ij}$ (see Step~14 of Algorithm~\ref{alg:decoding_permutation}).

\end{remark}

\begin{algorithm*}[t]
\caption{Decoder Design Based on Permutations and Smaller Subproblems}
\label{alg:decoding_permutation}
\begin{algorithmic}[1]
\REQUIRE The fixed non-adaptive test designs and outcomes from Algorithm~\ref{alg:testing_permutation} at all levels $\ell \ge \ell_{0}$,
and the global tuple of permutations \(\Pi=(\pi_1,\dots,\pi_c)\) constructed in Algorithm~\ref{alg:testing_permutation}, where each \(\pi_t\) is drawn
from a pairwise independent permutation family.

\STATE Initialize the global edge estimate $\widehat{E} \gets \emptyset$.
\FOR{each pair $(i,j) \in [m] \times [m]$}
  \STATE \textit{\color{blue}// Permutation selection at the base level}
  \STATE Find $t^\star \in \{1,\dots,c\}$ such that $\PV_{g_0} \big(H_{ij}^{(t^\star,\ell_{0})}\big) = 0$, using the procedure described in Lemma~\ref{lem:find_permutation_base}.  \\
  
\textit{\color{blue}// Run Decoding Algorithm~\ref{alg:decoding} for $c'$ times (once per repetition)} \\
  \FOR{$r=1,\dots,c'$}
    \STATE Set $\ell \gets \ell_0$
    \WHILE{$\ell \le \log_2 n_{ij}-1$}
      \STATE Apply the decoding Algorithm~\ref{alg:decoding}'s update at level $\ell$ to obtain $\mathcal{PD}_{ij}^{(t^{\star},\ell+1)}$.
      \IF{$\bigl|\mathcal{PD}_{ij}^{(t^{\star},\ell+1)}\bigr| > 7\kb^{4\gamma}$}
        \STATE \textbf{break and move to the next $r$ value} \hfill \textit{\color{blue}// early stop due to PD-size overflow}
      \ELSE
        \STATE $\ell \gets \ell + 1$
      \ENDIF
    \ENDWHILE
    \STATE $\widehat{E}_{ij} \gets \mathcal{PD}_{ij}^{(t^{\star},\log_2 n_{ij})}$
    \STATE $\widehat{E} \gets  \widehat{E} \cup \{(\pi_{t^\star}^{(i,j)} \circ \lambda_{ij}^{-1}(x),\pi_{t^\star}^{(i,j)} \circ \lambda_{ij}^{-1}(y)): (x,y) \in  \widehat{E}_{ij}\}$ \hfill \textit{\color{blue}// recover the edges of $G_{ij}$}
  \ENDFOR
\ENDFOR
\STATE \textbf{Output:} $\widehat{G} = (V,\widehat{E})$.

\end{algorithmic}
\end{algorithm*}

\begin{lemma}
\label{lem:find_permutation_base}
Fix \((G,\Pi)\in \Tc(\epsilon_n,c,\gamma)\) and a pair \((i,j)\in[m]\times[m]\), meaning the subgraph  \(G_{ij}\) is also fixed.  For some $C_3 \ge 3e$, consider the $5C_3\kb_{ij} \log n$ (random) tests formed in Lines 9--12 in Algorithm \ref{alg:testing_permutation} for a fixed permutation index $t$ and repetition $r$.  Let $H$ denote the corresponding block graph at level $\ell_0$ (implicitly depending on $(i,j,t,r)$; see \eqref{eq:H_edge_set_t}).  Then, there exists a procedure for using the test results to decide whether \(\PV_{g_0}(H)=0\) or \(\PV_{g_0}(H) > 0\) that, with probability \(1 - \bigO \left(\frac{\kb^{2\gamma}}{n^{5}}\right)\), returns the correct answer and runs in time \(\bigO \big(\kb^{2\gamma}\log n\big)\).  The probability is taken over the randomness of the tests.
\end{lemma}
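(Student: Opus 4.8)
The plan is to decode the level-$\ell_0$ tests produced in Lines~9--12 of Algorithm~\ref{alg:testing_permutation} with the following rule. Those lines run $5\log n$ independent iterations and, in each iteration, assign every block $U_\kappa$ (more precisely its image $\pi_t^{(i,j)}(U_\kappa)$), $\kappa\in[g_0]$, to one of $C_3\kb_{ij}$ tests uniformly at random, so each block occupies exactly one test per iteration. The rule inspects, for each $\kappa$, the $5\log n$ tests that contain $U_\kappa$, and outputs ``$\PV_{g_0}(H)=0$'' if and only if every block has at least one \emph{negative} test among those. The direction $\PV_{g_0}(H)>0$ is immediate: a defective block $U^\star$ has an internal edge inside $\pi_t^{(i,j)}(U^\star)$, so every test containing $U^\star$ is positive, the rule finds that $U^\star$ has no negative test, and it outputs ``$\PV_{g_0}(H)>0$'' — correctly, and with no error over the test randomness.

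For the direction $\PV_{g_0}(H)=0$, the rule errs only if some (necessarily non-defective) block $U$ lands in a positive test in all $5\log n$ iterations. I would fix $U$ and one iteration, let $T$ be the test containing $U$, and condition on $U\in T$; then the remaining blocks are placed independently and uniformly, and $T$ is positive only if (i) some $H$-neighbour of $U$ also lands in $T$, or (ii) both endpoint-blocks of some edge of $H$ not incident to $U$ land in $T$. I would bound (i) by a union bound as $d_{g_0}(H)/(C_3\kb_{ij})\le 1/(C_3\sqrt{\kb_{ij}})$, invoking condition (C3) of Definition~\ref{def:new_typical} (valid since $(G,\Pi)\in\Tc(\epsilon_n,c,\gamma)$), and (ii) by $|E(H)|/(C_3\kb_{ij})^2\le k_{ij}/(C_3\kb_{ij})^2=\bigO(\kb^{-\gamma})$, using that every edge of $G_{ij}$ straddles two distinct blocks (so $|E(H)|\le k_{ij}$), together with $k_{ij}\le 12\kb^{\gamma}$ from (C1) and $\kb_{ij}\ge 2\kb^{\gamma}$ from \eqref{eq:k_ij_bounds}. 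Hence $\PP[\,T\text{ positive}\mid U\in T\,]=\bigO(\kb^{-\gamma/2})$, which for the stated constant $C_3\ge 3e$ is below $1/e$ once $\kb$ is large. The $5\log n$ iterations are independent, so the probability that $U$ is always in a positive test is at most $(1/e)^{5\log n}\le n^{-5}$, and a union bound over the $g_0=2^{\ceil{\log_2 \kb^{2\gamma}}}\le 2\kb^{2\gamma}$ blocks gives total error $\bigO(\kb^{2\gamma}/n^5)$.

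For the running time, I would have the test design store, for each block and each iteration, the index of the test containing it, so that this index and the associated outcome bit are read in $\bigO(1)$ time in the word-RAM model. The rule then scans at most $5\log n$ outcomes per block over $g_0=\Theta(\kb^{2\gamma})$ blocks, stopping at the first negative outcome for each block, for a total of $\bigO(\kb^{2\gamma}\log n)$.

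The step I expect to be the main obstacle is exactly the estimate $\PP[\,T\text{ positive}\mid U\in T\,]\to 0$ in the case $\PV_{g_0}(H)=0$. A crude bound that uses only $|E(H)|\le 12\kb^{\gamma}$ against the $\Theta(\kb^{\gamma})$ available tests gives a per-iteration positivity probability that is bounded away from $1$ but does not vanish; after only $5\log n$ iterations this yields an $n^{-\Theta(1)}$ tail whose exponent is too small to absorb the union bound over $\Theta(\kb^{2\gamma})$ blocks. Obtaining a genuinely vanishing per-iteration probability requires the typicality bound $d_{g_0}(H)\le\sqrt{\kb_{ij}}$ from (C3), which makes the dominant term — a single $H$-neighbour of $U$ colliding with $U$ in a test — of size only $\bigO(\kb^{-\gamma/2})$, so that $\Theta(\log n)$ repetitions suffice; this is the place where condition (C3) is indispensable.
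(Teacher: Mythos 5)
Your proposal is correct and follows essentially the same route as the paper: the identical decision rule (declare $\PV_{g_0}(H)=0$ iff every block appears in at least one negative test), the same per-block analysis via a union bound over neighbours and edges of $H$ using (C1) and (C3), the same $5\log n$-fold amplification to $n^{-5}$, the union bound over $g_0=\Theta(\kb^{2\gamma})$ blocks, and the same time accounting. The only substantive differences are cosmetic—you condition on $\PV_{g_0}(H)=0$ and so drop the paper's event $\Bc_1$ (a defective block landing in the test), obtaining a vanishing per-iteration bound where the paper settles for the constant $3/C_3\le 1/e$—and your closing remark that (C3) is indispensable is an overstatement, since a constant per-iteration positivity probability already yields $n^{-5}$ after $5\log n$ rounds, which absorbs the union bound because $\kb^{2\gamma}=n^{4\gamma\theta}\ll n^{5}$ for the allowed range of $\gamma$.
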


\begin{proof}

We proceed in several steps.

\textbf{Decision rule.}
At the base level \(\ell_0\), for each block \(U\) of \(H\), check the \(5\log n\) tests that contain \(U\).
Declare \(U\) non–defective if there exists at least one test containing \(U\) with a negative outcome.
Declare \(\PV_{g_0}(H)=0\) if every block is identified as non–defective; otherwise declare \(\PV_{g_0}(H)>0\).  Note that this procedure will always correctly declare $\PV_{g_0}(H)>0$ for defective blocks (since their tests results are all positive), so we only need to study the non-defective blocks.

\textbf{Analysis.}
Let \(U\) be a non–defective block at level \(\ell_0\).  
We will bound the probability that \(U\) is not identified. We proceed similarly to the proof of~\eqref{eq:bound-Y} in Lemma~\ref{lem:expectation}.
Consider a test that contains \(U\).
Let $\Bc_1$ be the event that some defective block among the $g_0$ blocks is included in the test, 
let $\Bc_2$ be the event that some defective pair $(U',V')$ with both $U'$ and $V'$ being (individually) non\mbox{-}defective is included in the test, 
and let $\Bc_3$ be the event that some block $V'$ among the $g_0$ blocks for which the pair $(U,V')$ is defective is included in the test. We have 
\begin{align}
    &\PP \left[ U \text{ is not identified in } 
       C_3\kb_{ij} \text{ tests} \right] \notag \\
    &\le\PP[\Bc_1] + \PP[\Bc_2] + \PP[\Bc_3]\, .
    \label{eq:U_block_not_identified}
\end{align}

Since \((G,\Pi)\in \Tc(\epsilon_n,c,\gamma)\), we have \(k_{ij}\le 2\kb_{ij}\), where \(k_{ij}\) denotes the number of edges in \(G_{ij}\).
Using the fact that the number of defective blocks among the \(g_0\) blocks is at most \(k_{ij}\), we get
\(\PP[\Bc_1] \le \tfrac{k_{ij}}{C_3\kb_{ij}} \le \tfrac{2}{C_3}\).
Moreover, there are at most \(k_{ij}\) edges in the block graph \(H\), and hence
\(\PP[\Bc_2] \le \tfrac{k_{ij}}{C_3^{2}\kb_{ij}^{2}} \le \tfrac{2}{C_3^{2}\kb_{ij}}\). Since the maximum degree of $H$ is at most $\sqrt{\kb_{ij}}$, we also have $\PP[\Bc_3] \leq \tfrac{\sqrt{\kb_{ij}}}{C_3\kb_{ij}}$. Therefore, by \eqref{eq:U_block_not_identified}, we have for sufficiently large $\kb_{ij}$ that
\[
\PP \left[ U \text{ is not identified in } C_3\kb_{ij} \text{ tests} \right] \le \frac{3}{C_3}\,.
\]
Hence, under the condition \(C_3 \ge 3e\), over \(5\log n\) independent rounds, the probability that a non–defective block \(U\) remains unidentified is at most
\(\Big(\tfrac{3}{C_3}\Big)^{5\log n} \le \tfrac{1}{n^{5}}\). 

\textbf{Running time.}
At level \(\ell_0\), there are \(g_0=\Theta(\kb^{2\gamma})\) blocks.
Each block is included in \(5\log n\) tests.
Thus, the total checking time is \(\bigO \big(\kb^{2\gamma}\log n\big)\).

\textbf{Overall success probability.}
By the bound above, the failure probability for a fixed non–defective block is \(\bigO \big(\tfrac{1}{n^{5}}\big)\).
A union bound over all \(g_0=\Theta(\kb^{2\gamma})\) blocks yields a total failure probability
\(\bigO \big(\tfrac{\kb^{2\gamma}}{n^{5}}\big)\), so the algorithm succeeds with probability at least
\(1-\bigO \big(\tfrac{\kb^{2\gamma}}{n^{5}}\big)\).

\end{proof}
\begin{lemma} \label{lem:pd_size_permutation_level}
Fix $(G,\Pi)\in \Tc(\epsilon_n,c,\gamma)$ along with a pair $(i,j)\in[m]\times[m]$ and $t\in[c]$ such that the block graph $H^{(t,\ell_0)}_{ij}$ (see \eqref{eq:H_edge_set_t}) satisfies
\begin{align*}
    &\PV_{g_0}\bigl(H^{(t,\ell_0)}_{ij}\bigr)=0, \\
    &\,\text{and}\,
       d_{g}\bigl(H^{(t,\ell)}_{ij}\bigr)\le \sqrt{\kb_{ij}} 
       \text{ for all } \ell \in \{\ell_0,\dots,\log_2 n_{ij} \},
\end{align*}
where $\ell_0\coloneqq \lceil \log_2 \kb^{2\gamma}\rceil$, $g_0 \coloneqq 2^{\ell_0}$, and $g = 2^\ell$. Consider the decoding procedure from Steps 5--14 of Algorithm \ref{alg:decoding_permutation} with some permutation $\pi_t^{(i,j)}$ on $G_{ij}$, from level $\ell_0$ up to level $\log_2 n_{ij}$. Let $\mathcal{PD}_{ij}^{(t,\ell)}$ denote the possibly defective set at level $\ell$ produced by these decoding steps. Then:
\begin{enumerate}
  \item Conditioned on the $\ell$-th level containing at most $7\kb^{4\gamma}$ possibly defective pairs, i.e., $|\mathcal{PD}_{ij}^{(t,\ell)}|\le 7\kb^{4\gamma}$, the number of possibly defective pairs at level $\ell+1$ is at most $7\kb^{4\gamma}$ (that is, $|\mathcal{PD}_{ij}^{(t,\ell+1)}|\le 7\kb^{4\gamma}$) with probability at least $1-\bigO \big( \tfrac{1}{\kb^{\gamma/2}}\big)$.
  \item With probability at least $1-\bigO \big( \tfrac{\log_2 n_{ij}}{\kb^{\gamma/2}} \big)$, the decoding time for triplet $(i,j,t)$ is $\bigO(\kb^{4.5\gamma}\log n_{ij})$.
\end{enumerate}
All probability statements are over the randomness of the test design.
\end{lemma}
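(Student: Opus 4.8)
The plan is to transcribe the proofs of Lemma~\ref{lem:expectation}, Lemma~\ref{lem:variance} and Lemma~\ref{lem: bounds_size_PD} to the subgraph $G_{ij}$, with $\kb$ replaced throughout by $\kb_{ij}=\Theta(\kb^{\gamma})$, while exploiting the two structural hypotheses: (a) $\PV_{g_0}(H_{ij}^{(t,\ell_0)})=0$ forces $\PV_g(H_{ij}^{(t,\ell)})=0$ for every $\ell\ge\ell_0$, since refining a block never creates an internal edge; and (b) $d_g(H_{ij}^{(t,\ell)})\le\sqrt{\kb_{ij}}$ at every level, together with $k_{ij}\le 2\kb_{ij}$ from (C1). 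The decisive simplification is (a): with no defective blocks, every defective pair at level $\ell$ is an edge of $H_{ij}^{(t,\ell)}$, so there are at most $k_{ij}\le 12\kb^{\gamma}$ of them; they are never discarded (each contains an edge), and each spawns at most six children, so the defective pairs of $\mathcal{PD}_{ij}^{(t,\ell)}$ contribute at most $72\kb^{\gamma}$ pairs to $\mathcal{PD}_{ij}^{(t,\ell+1)}$. It therefore remains to bound the number $N$ of non-defective pairs of $\mathcal{PD}_{ij}^{(t,\ell)}$ that survive (are not discarded) at level $\ell$. The induction is seeded by the base level, where $|\mathcal{PD}_{ij}^{(t,\ell_0)}|=\binom{g_0}{2}\le 2\kb^{4\gamma}\le 7\kb^{4\gamma}$ by~\eqref{eq:new_l0}.

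First I would redo the per-pair estimate of Lemma~\ref{lem:expectation}. For a non-defective pair $(\Gc_u,\Gc_v)\in\mathcal{PD}_{ij}^{(t,\ell)}$: the event $\Ac_1$ (some $w$ with $h(w)=h(u)=h(v)$ forms a defective pair with $u$ or $v$) has probability at most $\tfrac{2\,d_g(H_{ij}^{(t,\ell)})}{C_1\sqrt{\kb_{ij}}}\le\tfrac{2}{C_1}$ by (b); the event that a defective block lands in a given test is \emph{empty} by (a); and the event that some defective pair of individually non-defective blocks lands in a given test has probability at most $\tfrac{k_{ij}}{C_1^2\kb_{ij}}\le\tfrac{2}{C_1^2}$. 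Substituting $\PP[Y=1\mid h(u)=h(v)]\le\tfrac{3}{C_1}$ into the analogue of~\eqref{eq:prob_non_defective_pair} (with $\kb\to\kb_{ij}$, $C_2=C_1^2$) gives $\PP[\Ec_{uv}]\le\exp(-(C_1-3))\le\tfrac{1}{24}$ for $C_1>27$. Hence, conditioned on $|\mathcal{PD}_{ij}^{(t,\ell)}|\le 7\kb^{4\gamma}$, we obtain $\EE_\ell[N]\le\tfrac{1}{24}\cdot 7\kb^{4\gamma}<\tfrac{1}{3}\kb^{4\gamma}$.

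Next I would repeat the covariance computation of Lemma~\ref{lem:variance} verbatim with $\kb\to\kb_{ij}$ — it used neither defective blocks nor a particular size of $\mathcal{PD}$ — obtaining $\Cov_\ell[\rE_{uv},\rE_{u'v'}]=\bigO(1/\sqrt{\kb_{ij}})=\bigO(\kb^{-\gamma/2})$ for every pair of pairs, with $\Var_\ell[\rE_{uv}]\le 1$. Summing over the at most $(7\kb^{4\gamma})^2$ pairs of pairs and the $\le 7\kb^{4\gamma}$ diagonal terms yields $\Var_\ell[N]=\bigO(\kb^{4\gamma})+\bigO(\kb^{8\gamma}\kb^{-\gamma/2})=\bigO(\kb^{7.5\gamma})$, so Chebyshev gives $\PP[N\ge\kb^{4\gamma}]\le\Var_\ell[N]\big/(\kb^{4\gamma}-\EE_\ell[N])^2=\bigO(\kb^{7.5\gamma}/\kb^{8\gamma})=\bigO(\kb^{-\gamma/2})$; on the complement, $|\mathcal{PD}_{ij}^{(t,\ell+1)}|\le 72\kb^{\gamma}+6\kb^{4\gamma}\le 7\kb^{4\gamma}$ for all large $\kb$ (as $4\gamma>\gamma$), which is claim~(1). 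For claim~(2), a union bound over the $\le\log_2 n_{ij}$ levels shows that, with probability $1-\bigO(\log_2 n_{ij}/\kb^{\gamma/2})$, $|\mathcal{PD}_{ij}^{(t,\ell)}|\le 7\kb^{4\gamma}$ at every level, so the \texttt{while} loop of Algorithm~\ref{alg:decoding_permutation} never early-stops and the round is successful. On this event, each level $\ell<\log_2 n_{ij}$ costs $\bigO(\sqrt{\kb_{ij}})=\bigO(\kb^{\gamma/2})$ outcome checks per retained pair (each block lies in $C_2\sqrt{\kb_{ij}}$ tests at level $\ell$), i.e.\ $\bigO(\kb^{4.5\gamma})$ per level, while the final level carries an extra $\bigO(\log n_{ij})$ factor from its $C'\log n_{ij}$ test rounds; summing over $\bigO(\log n_{ij})$ levels and absorbing the constant $c'$ repetitions as well as the lower-order $\bigO(\kb^{2\gamma}\log n)$ cost of the permutation selection from Lemma~\ref{lem:find_permutation_base} gives total decoding time $\bigO(\kb^{4.5\gamma}\log n_{ij})$.

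The main obstacle is making the Chebyshev step of claim~(1) close: the PD budget $7\kb^{4\gamma}$, which is dictated by the initial set size $\binom{g_0}{2}=\Theta(\kb^{4\gamma})$ (itself a consequence of the raised base level $\ell_0=\lceil\log_2\kb^{2\gamma}\rceil$), is a full factor $\kb^{3\gamma}$ larger than the true edge count $\Theta(\kb^{\gamma})$, so one must verify that the variance $\bigO(|\mathcal{PD}|^2/\sqrt{\kb_{ij}})$ remains negligible against $(\kb^{4\gamma})^2$; this works precisely because the surviving $\sqrt{\kb_{ij}}=\Theta(\kb^{\gamma/2})$ in the denominator leaves the comfortable exponent gap $8\gamma-\tfrac{\gamma}{2}<8\gamma$. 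A secondary point to get right is the bookkeeping that defective pairs, although never discarded, nonetheless spawn only $\bigO(k_{ij})=\bigO(\kb^{\gamma})$ children and so cannot by themselves overflow the budget.
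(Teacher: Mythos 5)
Your proposal is correct and follows essentially the same route as the paper: reuse the expectation and variance arguments of Lemmas~\ref{lem:expectation} and~\ref{lem:variance} with $\kb$ replaced by $\kb_{ij}$, note that $\PV_{g_0}=0$ propagates to all finer levels so the truly defective pairs number at most $k_{ij}\le 12\kb^{\gamma}$ and contribute only $\bigO(\kb^{\gamma})$ children, apply Chebyshev to bound the surviving non-defective pairs by $\kb^{4\gamma}$ with probability $1-\bigO(\kb^{-\gamma/2})$, and then union-bound over levels and count $\bigO(\sqrt{\kb_{ij}})$ outcome checks per retained pair for the time bound. The only difference is that you explicitly re-derive the per-pair miss probability under the no-defective-block and degree hypotheses (getting $\exp(-(C_1-3))$), which the paper compresses into ``by identical reasoning'' — a harmless, indeed slightly more careful, elaboration.
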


\begin{proof}
For the first part, fix any $\ell \in \{\ell_0,\dots,\log_2 n_{ij}-1\}$ and assume that
$T \coloneqq \bigl|\mathcal{PD}_{ij}^{(t,\ell)}\bigr| \le 7\kb^{4\gamma}$. Consider the block graph $H_{ij}^{(t,\ell)}$. Let $\rE_{u,v}$ be the indicator random variable that a non-defective pair $(u,v) \in \mathcal{PD}_{ij}^{(t,\ell)}$ is not identified using the tests at level $\ell$.  Recall that we use the test design in Algorithm~\ref{alg:testing} on $G_{ij}$ with permutation $\pi_t^{(i,j)}$ starting from level $\ell_0$, and the expected number of edges satisfies $\kb_{ij} = q \binom{n_{ij}}{2} = \Theta(\kb^{\gamma})$. By identical reasoning to that of Lemma~\ref{lem:expectation} and Lemma~\ref{lem:variance}, we have
\begin{equation}
  \label{eq:permutation_expectation_variance}
  \begin{split}
    \EE \left[\sum_{u,v} \rE_{u,v}\right] &\le \frac{T}{24} \le \frac{\kb^{4\gamma}}{2}\,,\\
    \Var \left[\sum_{u,v} \rE_{u,v}\right] &\le \bigO \left(\frac{T^2}{\sqrt{\kb_{ij}}}\right) \;=\; \bigO \big(\kb^{7.5\gamma}\big)\,,
  \end{split}
\end{equation}
where the sum is over all non-defective pairs in $\mathcal{PD}_{ij}^{(t,\ell)}$.

We proceed using similar steps to the proof of Lemma~\ref{lem: bounds_size_PD}. Since $(G,\Pi)\in \Tc(\epsilon_n,c,\gamma)$, by Lemma~\ref{lem:typical_permutation} we have $k_{ij} \le 12\kb^{\gamma}$, where $k_{ij}$ denotes the number of edges of $G_{ij}$. Because $\PV_{g_0} \bigl(H^{(t,\ell_0)}_{ij}\bigr)=0$, it follows that $\PV_{g} \bigl(H^{(t,\ell)}_{ij}\bigr)=0$ for all $\ell \ge \ell_0$ (with $g = 2^{\ell}$). Hence, among the possibly defective pairs at level $\ell$, at most $k_{ij}$ are truly defective, which generate at most $6k_{ij} \le 72\kb^{\gamma}$ possibly defective pairs in $\mathcal{PD}_{ij}^{(t,\ell+1)}$. In addition, from~\eqref{eq:permutation_expectation_variance} and by Chebyshev’s inequality, with probability at least $1-\bigO \big( \tfrac{1}{\kb^{\gamma/2}}\big)$, at most $\kb^{4\gamma}$ non-defective pairs are incorrectly retained as possibly defective.  Since each such pair creates 6 children pairs, these contribute at most another $6\kb^{4\gamma}$ possibly defective pairs to $\mathcal{PD}_{ij}^{(t,\ell+1)}$, leading to a total of at most $6\kb^{4\gamma}+72\kb^{\gamma} \le 7\kb^{4\gamma}$ possibly defective pairs at level $\ell+1$ for sufficiently large $\kb$.

For the second part, the argument follows the proof of the decoding time in Theorem~\ref{thm:main}. At level $\ell_0$, we have $|\mathcal{PD}_{ij}^{(t,\ell_0)}| \le g_0^2 \le 4\kb^{4\gamma}$. From part~\textup{(1)}, by induction, for any level $\ell \ge \ell_0$, we have $|\mathcal{PD}_{ij}^{(t,\ell)}| \le 7\kb^{4\gamma}$ with conditional probability at least $1-\bigO \big( \tfrac{1}{\kb^{\gamma/2}}\big)$. Taking a union bound over at most $\log_2 n_{ij}$ levels, the same bound holds simultaneously for all levels with probability at least $1-\bigO \big( \tfrac{\log_2 n_{ij}}{\kb^{\gamma/2}} \big)$. Recall that the decoding time is dominated by the number of outcome checks in the decoding algorithm. At each level $\ell$, for each possibly defective pair in $\mathcal{PD}_{ij}^{(t,\ell)}$ we perform at most $\bigO(\sqrt{\kb_{ij}})=\bigO(\kb^{\gamma/2})$ outcome checks. Therefore, summing over at most $\log_2 n_{ij}$ levels, the total number of outcome checks is $\bigO(\kb^{4\gamma}\cdot \kb^{\gamma/2}\cdot \log n_{ij})=\bigO(\kb^{4.5\gamma}\log n_{ij})$, with probability at least $1-\bigO \big( \tfrac{\log n_{ij}}{\kb^{\gamma/2}}\big)$.
\end{proof}

Lemma~\ref{lem:pd_size_permutation_level} gives a bound on the failure probability for a fixed subproblem in a single repetition. Consequently, after \(c'\) independent repetitions, the failure probability for that fixed subproblem is $\bigO\!\left((\frac{\log n_{ij}}{\kb^{\gamma/2}})^{c'}\right)$. Since there are only \(\bigO(\kb^{1-\gamma})\) subproblems, taking \(c' > 2/\gamma\) ensures that a union bound over all subproblems yields an overall failure probability of \(o(1)\).


Based on the above lemmas, we obtain the following theorem.

\begin{theorem}\label{thm:permutation_method}
Fix \(\theta \in (0,1)\), and let \(G \sim \ER(n,q)\) for some \(q = \Theta\bigl(n^{-2(1-\theta)}\bigr)\). For every fixed constant
\(\gamma \in \bigl(0,\min\{1,\tfrac{1-\theta}{3\theta}\}\bigr)\), there exist constants \(c=c(\gamma)\), \(c'=c'(\gamma)\), \(C'=C'(\gamma,\theta)\), and absolute constants \(C_1,C_2,C_3\), all independent of \(n\) and \(\kb\), such that, when Algorithms~\ref{alg:testing_permutation} and~\ref{alg:decoding_permutation} are run with these parameters (where \(C_1\), \(C_2\), and \(C'\) are the constants
used in the subroutine Algorithm~\ref{alg:testing})\footnote{For the reader’s convenience: \(\gamma\) is the partition
parameter; \(c\) is the number of permutations; \(c'\) is the number of independent design repetitions used for amplification; \(C_1,C_2,C'\) are the constants inherited from the subroutine Algorithm~\ref{alg:testing}; and \(C_3\) is the base-level testing constant used to identify a favorable permutation at level \(\ell_0\). See Algorithm~\ref{alg:testing_permutation}, Definition~\ref{def:new_typical}, and
Lemmas~\ref{lem:typical_permutation},~\ref{lem:find_permutation_base}, and~\ref{lem:pd_size_permutation_level}, and recall also the notation summary in Table \ref{tab:sec4-notation}.} %
the algorithms use \(\bigO(\kb \log n)\) tests and, with probability \(1-o(1)\), guarantee the following:
\begin{enumerate}
  \item Algorithm~\ref{alg:decoding_permutation} returns \(\widehat{E} = E\).
  \item The decoding time is \(\bigO \big( \kb^{1+3.5\gamma} \log n \big)\). 
\end{enumerate}
\end{theorem}

\begin{proof}
By Lemma~\ref{lem:typical_permutation}, if \(c>1/\gamma\), then with probability \(1-o(1)\) as \(n\to\infty\) the pair \((G,\Pi)\)—where \(\Pi\) consists of \(c\) i.i.d.\ permutations drawn from a pairwise independent permutation family on \([n_{ij}]\)—belongs to \(\mathcal{T}(\epsilon_n,c,\gamma)\) (see Definition \ref{def:new_typical}, which also implies $G \in \mathcal{T}(\epsilon_n)$ from Definition \ref{def:typical}).
All subsequent analysis proceeds under the condition that \((G,\Pi) \in \Tc(\epsilon_n,c,\gamma)\).

\paragraph{Decoding time.}
Since \((G,\Pi) \in \Tc(\epsilon_n,c,\gamma)\), for each \((i,j) \in [m]\times[m]\) there exists \(t^{\star}\in[c]\) such that \(\PV_{g_0} \big(H^{(t^{\star},\ell_0)}_{ij}\big)=0\) and \(d_{g} \big(H^{(t^{\star},\ell)}_{ij}\big)\le \sqrt{\kb_{ij}}\) for all $\ell \in \{\ell_0,\dots,\log_2 n_{ij} \}$.
By Lemma~\ref{lem:find_permutation_base}, with probability at least \(1 - \bigO \left(\frac{\kb^{2\gamma}}{n^{5}}\right)\), we can find \(t^{\star}\) in time \(\bigO(\kb^{2\gamma} \log n)\).  Note that this probability scaling remains unchanged even after a union bound over $c$ permutations, because $c$ is constant.

Fix such \((i,j)\) and \(t^{\star}\).
For each test design for \(G_{ij}\) using \(\pi_{t^{\star}}^{(i,j)}\), Lemma~\ref{lem:pd_size_permutation_level} shows that the decoder succeeds with probability at least \(1 - \bigO \big( \tfrac{\log n_{ij}}{\kb^{\gamma/2}}\big)\) and runs in time \(\bigO(\kb^{4.5\gamma}\log n_{ij})\).
Algorithm~\ref{alg:decoding_permutation} then performs \(c'\) decoding executions on the \(c'\) given test designs for \(G_{ij}\) using \(\pi_{t^{\star}}^{(i,j)}\).
Therefore, with probability at least \(1-\bigO \bigl((\tfrac{\log n_{ij}}{\kb^{\gamma/2}})^{c'}\bigr)\), the decoding procedure on the pair \((i,j)\) with permutation \(\pi_{t^{\star}}^{(i,j)}\) succeeds with decoding time \(\bigO(\kb^{4.5\gamma}\log n_{ij})\). Recall also that we enforce an early-stopping rule: if, in any round \(r \in [c']\), the size of the possible defective set ever exceeds \(7\kb^{4\gamma}\), we immediately discard that round.

Finally, in Step~14 of Algorithm~\ref{alg:decoding_permutation}, we apply $\pi_{t^{\star}}^{(i,j)} \circ \lambda_{ij}^{-1}$ to the indices in the possible defective set to recover the edges of $G_{ij}$. Because the possible defective set at the final level has size at most \(7\kb^{4\gamma}\) (i.e., \(|\mathcal{PD}_{ij}^{(t^{\star},\log_2 n_{ij})}| \le 7\kb^{4\gamma}\)), this step runs in \(\bigO(\kb^{4\gamma})\) time: applying \(\pi_{t^{\star}}^{(i,j)} \circ \lambda_{ij}^{-1}\) to each element takes \(\bigO(1)\) time, since we can choose the pairwise independent permutation family constructed in Appendix~\ref{appendix:pairwise_independent}, which admits \(\bigO(1)\)-time evaluation and inversion on the word-RAM model.  Recall also that evaluating the relabeling map $\lambda_{ij}$ (or its inverse) from \eqref{eq:fix_labeling} takes $\bigO(1)$ time.

In summary, with probability at least
\(1-\bigO \bigl((\tfrac{\log n_{ij}}{\kb^{\gamma/2}})^{c'}\bigr)
 - \bigO \bigl(\tfrac{ \cdot \kb^{2\gamma}}{n^{5}}\bigr)\), the decoding procedure on the pair \((i,j)\) succeeds, yielding a total time per $(i,j)$ pair of
\[
\bigO\big(\kb^{4.5\gamma}\log n_{ij} \,+\, \kb^{4\gamma}\big) + \bigO \big(\kb^{2\gamma} \log n \big) \, .
\]

Aggregating over all \(\binom{m}{2}\leq \kb^{1-\gamma}\) block pairs gives and recalling that $n_{ij} = 2n\kb^{\frac{\gamma-1}{2}}$, the total decoding time is at most
\begin{align*}
    &\bigO\big(\kb^{1-\gamma}\cdot \kb^{4.5\gamma}\log n_{ij}\big) 
     + \bigO\big(\kb^{1-\gamma}\cdot \kb^{2\gamma}\log n\big) \\
    &\;=\; \bigO\big(\kb^{\,1+3.5\gamma}\log n\big),
\end{align*}
with probability at least \(1-o(1)\) when \(c>1/\gamma\) and \(c' > 2/\gamma\).

\paragraph{Error probability.}
For each $(i,j) \in [m]\times[m]$, let $\mathcal{PD}_{ij}^{(\log_2 n_{ij})}$ denote the possibly defective set obtained by the successful decoder on $G_{ij}$. 
Similar to~\eqref{eq:error_prob} in Theorem~\ref{thm:main}, we have
\begin{align*}
    &\PP \Big[\, E_{ij} = \widehat{E}_{ij}\ \bigm|\ 
       (G,\Pi) \in \Tc(\epsilon_n,c,\gamma), \big|\mathcal{PD}_{ij}^{(\log_2 n_{ij})}\big| 
       \le 7\kb^{4\gamma} \Big] \\
    &\;\ge\; 1 - \bigO \left( \frac{\kb^{4\gamma}}{n_{ij}^{C'}} \right),
\end{align*}
where $C' > 0$ is a parameter to Algorithm \ref{alg:testing}. 
We thus have the following high-probability behavior for each $(i,j) \in [m]\times[m]$, with similar probability terms to those in the decoding time analysis above: 
\begin{align*}
    &\PP \Big[\, \big|\mathcal{PD}_{ij}^{(\log_2 n_{ij})}\big| 
       \le 7\kb^{4\gamma}\ \bigm|\ 
       (G,\Pi) \in \Tc(\epsilon_n,c,\gamma) \Big] \\
    &\;\ge\; 1 
       - \bigO \left(\left(\frac{\log n_{ij}}{\kb^{\gamma/2}}\right)^{c'}\right) 
       - \bigO \left( \frac{\kb^{2\gamma}}{n^5} \right),
\end{align*}
for some $c' > 2/\gamma$. 
Combining these two findings and aggregating over all $\binom{m}{2} \le \kb^{1-\gamma}$ block pairs gives
\begin{align*}
    &\PP \big[\widehat{E} = E \,\bigm|\,
       (G,\Pi) \in \Tc(\epsilon_n,c,\gamma)\big] \\
    &\;\ge\; 1 - \bigO\left( \kb^{1-\gamma} \cdot \left( 
       \left(\frac{\log n_{ij}}{\kb^{\gamma/2}}\right)^{c'} 
       + \frac{\kb^{2\gamma}}{n^5} 
       + \frac{\kb^{4\gamma}}{n_{ij}^{C'}} \right) \right).
\end{align*}
By choosing $c' > 2/\gamma$ and $C' > \tfrac{8}{1 - \theta+ \gamma\theta}$, and recalling that $n_{ij} = 2n\,\kb^{(\gamma-1)/2}$ and $\kb = \Theta(n^{2\theta})$, we obtain
\[
\PP \big[\widehat{E} = E \,\bigm|\,(G,\Pi) \in \Tc(\epsilon_n,c,\gamma)\big] = 1 - o(1).
\] 
Since the event $(G,\Pi) \in \Tc(\epsilon_n,c,\gamma)$ holds with probability $1-o(1)$ (Lemma \ref{lem:typical_permutation}), it follows that 
\[
\PP \big[\widehat{E} = E\big] = 1 - o(1).
\]


\paragraph{Number of tests.} 
In Algorithm~\ref{alg:testing_permutation}, for each \((i,j)\in[m]\times[m]\), we apply the non-adaptive design of Algorithm~\ref{alg:testing} to \(G_{ij}\) under permutation \(\pi_t^{(i,j)}\), repeating it \(c'\) times.  At the base level \(\ell_0\), we first use \(5C_3\,\kb_{ij}\log n = \bigO(\kb^{\gamma} \log n) \) tests to determine whether the block graph \(H_{ij}^{(t,\ell_0)}\) contains no defective blocks. Then, at every level $\ell_0 \le \ell \le \log_2 n_{ij}-1$, the procedure uses $C_1C_2\kb_{ij} = \bigO(\kb^{\gamma})$ tests; at the final level $\log_2 n_{ij}$ it uses an additional $C_1C_2C'\kb_{ij}\log n_{ij} = \bigO(\kb^{\gamma}\log n_{ij})$ tests, where $C' > 8/(1-\theta +\gamma \theta)$ (with $\theta$ and $\gamma$ are fixed) is chosen to amplify the success probability. 
In addition, we have $c > 1/\gamma$ permutations, and for each $G_{ij}$ with permutation $\pi_t^{(i,j)}$ we use Algorithm~\ref{alg:testing} to design $c' > 2/\gamma$ independent test designs. 
Therefore, each pair $(i,j)$ leads to $\bigO \big(c\,c'\,\kb^{\gamma}\log n_{ij}\big) + \bigO(c \, c'\,\kb^{\gamma} \log n) \;=\; \bigO(\kb^{\gamma}\log n_{ij}+ \kb^{\gamma} \log n)$ tests, where we used the fact that $c,c'=\Theta(1/\gamma)$ and $\gamma$ is fixed. Summing over $\binom{m}{2} \le \kb^{1-\gamma}$ pairs and recalling $n_{ij} = 2n\,\kb^{(\gamma-1)/2}$, the total number of tests in Algorithm~\ref{alg:testing_permutation} is at most
\[
 \kb^{1-\gamma} \cdot \bigO \big(\kb^{\gamma}\log n_{ij} + \kb^{\gamma} \log n\big)
\;=\; \bigO \big(\kb \log n\big).
\]

\end{proof}

\section{Conclusion}

We have extended the fast binary splitting approach in group testing for non-adaptive learning of Erd\H{o}s--R\'enyi graphs, achieving a number of tests $\bigO(\kb \log n)$ and a decoding time of $\bigO(\kb^{1+\delta}\log n)$ for any fixed small $\delta>0$. This attains (nearly) the best of both objectives in terms of the number of tests and the decoding time. This suggests several possible directions for future research, including: (i) improving the decoding time, ideally reducing it to $\bigO(\kb \log n)$; (ii) attaining (significantly) improved constants in the number of tests (e.g., see \cite{wang2023quickly} in the context of group testing); (iii) extending this approach to the problem of learning random hypergraphs, as studied earlier in~\cite{austhof2025non}; and (iv) considering noisy versions of the problem of learning random graphs and hypergraphs.


\appendix
 \section{Concentration Inequalities}
\label{appendix:concentration}
In this section, we present the Chernoff bounds that are used throughout the paper. 
Let $X_1, X_2, \dots, X_n$ be a sequence of independent $\mathrm{Bernoulli}(q)$ random variables. 
Denote $X = \sum_{i=1}^{n} X_i$ and $\mu \coloneqq qn$. 
We then have the following (e.g., see \cite[Ch.~2]{boucheron2003concentration}):
\begin{enumerate}
    \item For any $\delta > 0$, 
    \begin{equation}
      \PP \left[X \ge (1+\delta)\mu\right] 
      \leq \exp \left(-\frac{\delta^2 \mu}{2+\delta} \right).
      \label{eq:chernoff1}
    \end{equation}
    \item For any $t \geq \mu$, 
    \begin{equation}
      \PP[X \ge t] 
      \leq \left( \frac{e\mu}{t} \right)^{t}.
      \label{eq:chernoff2}
    \end{equation}
\end{enumerate}

\section{Proof of Lemma \ref{lem:typical} (Typical Set Probability)} \label{sec:pf_typical}

The first condition in the definition of $\Tc(\epsilon_n)$ concerns the total number of edges $k = |E|$.  
Since $k$ follows a binomial distribution with mean $\kb = \binom{n}{2}q$, a standard concentration inequalities argument yields that \cite{li2019learning}
\[
  (1-\epsilon_n)\kb \leq k \leq (1+\epsilon_n)\kb
\]
with probability $1-o(1)$ as $n \to \infty$. Hence, condition (i) holds. For later use, we also note that, by applying the Chernoff bound (see~\eqref{eq:chernoff1} in Appendix~\ref{appendix:concentration}), we obtain
\begin{equation}
\label{eq:bound_number_edges}
    \PP[k > 2\kb] \leq \exp \left( - \frac{\kb^2}{6}\right) = \exp \left( - \Omega(n^{4\theta}) \right) \, 
\end{equation}
since $\kb = \Theta(n^{2\theta})$.

\medskip
We now turn to condition (ii).  
For each level $\ell$, let $g=2^{\ell}$ and consider the quantities $\PV_g$, $|E_g|$, and $d(G_g)$.  
These are random variables determined by $G$, and we prove that they satisfy the bounds in~\eqref{eq:number_edges_level}--\eqref{eq:degree_level} with probability at least $1-o(1)$ as $n \to \infty$.

{\bf Analysis of the final level.} For $\ell=\log_2 n$, we have $g=2^{\ell}=n$, and each $\Gc_{j}=\{j\}$ is a singleton. In this case, the block graph $G_g$ coincides  with graph $G$. We observe that~\eqref{eq:number_edges_level} and \eqref{eq:number_defective_level} hold with probability $1-o(1)$, with the former using \eqref{eq:bound_number_edges} and the latter being trivial since $\PV_g=0$.

For the maximum degree, as shown in~\cite{li2019learning}, with probability $1-o(1)$ we have
\[
  d(G_g)\le \dmax(G)\coloneqq
  \begin{cases}
     2nq, & \theta>\tfrac{1}{2},\\
     \log n, & \theta\le \tfrac{1}{2}.
  \end{cases}
\]
This is no higher than $\dmax$ defined in \eqref{eq:degree_level}, since $\frac{10 \kb}{g} \ge \frac{10 \kb}{n} = \frac{10 q}{n} {n \choose 2} \ge 5 nq$ (and $\sqrt{\kb} = \omega(\log n)$).  Therefore,~\eqref{eq:degree_level} holds with probability $1-o(1)$, which completes the analysis of the final level.  In the rest of the analysis, we consider~\eqref{eq:number_edges_level}--\eqref{eq:degree_level} for the other levels, i.e., for $g = 2^\ell$ with $ \ell \in \{\ceil{\log_2 \sqrt{\kb}}, \dots ,\log_2 n-1 \}$.

\smallskip
\textbf{Bounding $\PV_g$ and $|E_g|$.}  
For each $i \in [g]$, let $Z_i$ be the indicator that $\Gc_i$ is defective.  
By definition, $\PV_g = \sum_{i=1}^g Z_i$.  
Since $|\Gc_i| = n/g$, the number of possible edges inside $\Gc_i$ is $\binom{n/g}{2}$. We have

\[
\PP[Z_i = 1] =  1 - (1-q)^{\binom{n/g}{2}} \, .
\]
We observe (via $\frac{n}{2} \ge g \ge \sqrt{\kb}$ and $\kb = q{n \choose 2}$) that $q \binom{n/g}{2} = \kb/\binom{n}{2} \cdot \binom{n/g}{2} = \frac{\kb(n-g)}{g^2(n-1)}$. Therefore,  $ \frac{\kb}{2g^2} \le q \binom{n/g}{2} \le \frac{\kb}{g^2} \le 1$, which helps us bound $\PP[Z_i = 1]$ via the identity \( \frac{qT}{2} \leq 1 - (1-q)^{T} \leq qT  \) for any $0\leq qT \leq 1$. Specifically, combining this with $q \binom{n/g}{2} = \frac{\kb(n-g)}{g^2(n-1)}$ and $\frac{1}{2} \le \frac{n-g}{n-1} \le 1$, we obtain
\begin{equation}
     \label{eq:bound_prob_defective}
     \frac{\kb}{4g^2} \leq \PP[Z_i=1] \leq \frac{\kb}{g^2},
\end{equation}
and hence
\begin{equation}
  \label{eq:expected_PV}
     \frac{\kb}{4g} \leq \EE[\PV_g] \leq \frac{\kb}{g} \, .
\end{equation}
From the construction of $G_g$ and the fact that $G$ contains at most $k$ edges, we also have
\begin{equation}
  \label{eq:number_defective_pairs}
  |E_g| \leq g \cdot \PV_g + k. 
\end{equation}
We consider two cases:
\begin{itemize}
    \item {
\textbf{Case 1: $\theta > \tfrac{1}{2}$.}
Since $\kb=\Theta(n^{2\theta})$ and $g\in[\sqrt{\kb},\,n/2]$, there exist constants $c_1,c_2>0$ such that, for all sufficiently large $n$,
\[
  \frac{\kb}{g}\in\big[c_1\,n^{2\theta-1},\,c_2\,n^{\theta}\big].
\]
By the Chernoff bound (~\eqref{eq:chernoff1} in Appendix~\ref{appendix:concentration}), we obtain
\begin{align}
    \PP \left[\PV_g>\tfrac{2\kb}{g}\right]
    &\le \exp \Big(- \Omega \Big(\frac{\kb}{g} \Big)\Big) \notag \\
    &\le \exp \left(-\Omega(n^{2\theta-1}) \right)\, .
    \label{eq:PV_g_theta_1}
\end{align}

Consequently, from \eqref{eq:number_defective_pairs}, and $\eqref{eq:bound_number_edges}$, we have
\[
  \PP[|E_g| > 4\kb] \leq \exp \left( -\Omega(n^{2\theta-1}) \right) + \exp \left( - \Omega(n^{4\theta}) \right) \, .
\]
    }
    \item{
    \textbf{Case 2: $\theta \leq \tfrac{1}{2}$.}  
From \eqref{eq:number_defective_pairs} and the bound on $\EE[\PV_g]$ in \eqref{eq:expected_PV}, we obtain
\[
  \EE[|E_g|] \leq \EE[\PV_g]\cdot g + \EE[k] \leq 2\kb.
\]
Hence, by Markov’s inequality,
\[
  \PP[|E_g| > 2\kb\log^2 \kb] \leq \frac{1}{\log^2 \kb} = \bigO \left(\tfrac{1}{\log^2 n}\right).
\]
For $\PV_g$, at the final level with $g=2^{\ceil{\log_2 \sqrt{\kb}}}$, it is immediate that $\PV_g \leq g < 2\sqrt{\kb}$.  
For the other levels with $g \in [2\sqrt{\kb}, n/2]$, the Chernoff bound (~\eqref{eq:chernoff2} in Appendix~\ref{appendix:concentration}) along with~\eqref{eq:expected_PV} gives
\begin{align}
\PP \left[\PV_g > 2\sqrt{\kb}\right]
  &= \PP \left[\PV_g > 2\cdot \frac{g}{\sqrt{\kb}} \cdot \frac{\kb}{g}\right] \nonumber \\
  &\leq \PP \left[\PV_g > 2\cdot \frac{g}{\sqrt{\kb}} \cdot \EE[\PV_g]\right] \nonumber \\
  &\leq \left(\frac{e\sqrt{\kb}}{2g}\right)^{2\sqrt{\kb}}  \nonumber \\
  &\leq \left(\frac{e}{4}\right)^{2\sqrt{\kb}} \nonumber \\
  &= \exp \left(-\Omega(n^{\theta})\right). \label{eq:PV_g_theta_2}
\end{align}
    }
\end{itemize}

We notice that all of the above bounds decay to zero (significantly) faster than $\frac{1}{\log n}$, meaning that they are still $o(1)$ after a union bound over at most $\log_2 n$ values of $\ell$.  Thus, with probability $1-o(1)$,~\eqref{eq:number_edges_level} and \eqref{eq:number_defective_level} hold for all $\ell$.


\medskip
\textbf{Bounding $d(G_g)$.} For the first level with $g =  2^{\ceil{\log_2 \sqrt{\kb}}}$, it is immediate that $d(G_g) \leq g$, which we see is at most $\min \{2\sqrt{\kb}, \frac{10\kb}{g} \}$ by substituting this $g$ value.  It remains to consider the remaining levels after the first one.

Fix a level $\ell \in \{ \ceil{\log_2 \sqrt{\kb}}+1,\dots, \log_2 n -1 \}$, and consider a given block $\Gc_u$.  If this block is defective (i.e., has internal edges), then by definition it does not affect $d(G_g)$; thus, for our purposes here, we may condition on the event that $\Gc_u$ is non-defective.   
For each $i \in [g]$, let $Z_i$ be the indicator that $(\Gc_u,\Gc_i)$ is defective.  
Then $d_g(u) = \sum_{i=1}^g Z_i$.  
Having conditioned (implicitly) on $\Gc_u$ being non-defective, the number of possible edges in $\Gc_u \cup \Gc_i$ is $\tfrac{n^2}{g^2} + \binom{n/g}{2}$. We thus have 
\[
\PP[Z_i = 1] = 1 - (1-q)^{\tfrac{n^2}{g^2} + \binom{n/g}{2}} \, .
\] 
By similar reasoning to~\eqref{eq:bound_prob_defective}, for $n$ sufficiently large and $g \in [2\sqrt{\kb},n/2]$, we have
\[
  \frac{5\kb}{4g^2} \leq \PP[Z_i=1] \leq \frac{4\kb}{g^2}.
\]
Therefore, $\frac{5\kb}{4g} \leq \EE[d_g(u)] \leq \frac{4\kb}{g}$. We now consider two cases:
\begin{itemize}
    \item {
    \textbf{Case 1: $\theta > \tfrac{1}{2}$.} By the Chernoff bound, similar to~\eqref{eq:PV_g_theta_1}, we have
\[
  \PP \left[d_g(u) > \tfrac{10\kb}{g}\right] \leq \exp \left(-\Omega(n^{2\theta-1}) \right).
\]
Applying a union bound over at most $g \leq n$ nodes, we obtain
\[
  \PP[d(G_g) > \tfrac{10\kb}{g}] \leq n \cdot \exp \left(-\Omega(n^{2\theta-1}) \right).
\]
    }
\item{
\textbf{Case 2: $\theta \leq \tfrac{1}{2}$.}  
Analogous to~\eqref{eq:PV_g_theta_2}, we have
\[
  \PP[d_g(u) > 8\sqrt{\kb}] \leq \exp \left( - \Omega(n^\theta)\right).
\]
and taking a union bound over $g \leq n$ nodes gives
\[
  \PP[d(G_g) > 8\sqrt{\kb}] \leq n \cdot \exp \left( - \Omega(n^\theta)\right) \, .
\]
}
    
\end{itemize}


Since the above bounds decay to zero strictly faster than $\frac{1}{\log n}$, we may take a further union bound over all levels $\ell$ (with at most $\log_2 n$ levels).  It follows that with probability $1-o(1)$,~\eqref{eq:degree_level} holds for all $\ell$.

\medskip
Combining the above results, we conclude that all conditions in the definition of $\Tc(\epsilon_n)$ hold with probability at least $1-o(1)$ as $n \to \infty$.  This completes the proof.

\section{Construction of a Pairwise Independent Permutation Family}
\label{appendix:pairwise_independent}

In this appendix, we provide our required pairwise independent hash function that can be evaluated in $\bigO(1)$ time.  Throughout our paper, we consider a number of vertices equaling a power of two (rounding up via ``dummy vertices'' with no edges if needed), but it is instructive to first consider permuting a \emph{prime number} $p$ of elements, say labeled $\{0,1,\dotsc,p-1\}$.  In this case, a straightforward choice of random permutation is
\begin{equation*}
    \pi(x) = ax + b \quad (\mathrm{mod}~p),
\end{equation*}
where $(a,b)$ are uniformly random over the $p(p-1)$ combinations with $a \ne 0$.  Under this choice, it is straightforward to verify that for any $x\ne x'$, the pair $(\pi(x),\pi(x'))$ is uniform over the $p(p-1)$ possible pairs.  Specifically, this is seen by subtracting two equations $ax+b=y$ and $ax'+b=y'$ and using the invertibility of $x-x'$ (mod $p$) to determine $a$ from $(y,y')$, which in turn determines $b$, thus establishing that all $p(p-1)$ pairs $(y,y')$ are equally likely.

Powers of two can be handled in essentially the same manner, but using more general finite field arithmetic, namely, $\mathbb{F}_{2^{m}}$ instead of $\mathbb{F}_p$. While these ideas are not new, we provide a self-contained treatment for completeness.

\paragraph{Setting.}
Let \(N=2^{m}\) for some integer \(m\ge 1\). Fix the finite field \(\mathbb{F}_{2^{m}}\) via a monic irreducible polynomial \(f(t)\in \mathbb{F}_2[t]\) of degree \(m\), so that
\(\mathbb{F}_{2^{m}}\cong \mathbb{F}_2[t]/(f(t))\) with the polynomial basis \(\{1,t,\dots,t^{m-1}\}\).
Let \(\varphi:[N]\to \mathbb{F}_{2^{m}}\) be the bijection that maps the binary expansion
\[
x=\sum_{i=0}^{m-1} c_i 2^{i},\qquad c_i\in\{0,1\},
\]
to the corresponding polynomial-basis representation
\[
\varphi(x)\;=\;\sum_{i=0}^{m-1} c_i t^{i}\pmod{f(t)}.
\]
We write \(\varphi^{-1}\) for the inverse map: if
\(z=\sum_{i=0}^{m-1} c_i t^{i}\) with \(c_i\in\{0,1\}\), then
\(\varphi^{-1}(z)=\sum_{i=0}^{m-1} c_i 2^{i}\).
For background on the isomorphism \(\mathbb{F}_{2^m}\cong \mathbb{F}_2[t]/(f(t))\), see~\cite{mceliece2012finite}.

\paragraph{Construction.}
Let \(\mathbb{F}_{2^{m}}^{\times}\coloneqq \mathbb{F}_{2^{m}}\setminus\{0\}\).
For each \(a\in\mathbb{F}_{2^{m}}^{\times}\) and \(b\in\mathbb{F}_{2^{m}}\), define
\[
\widetilde{\pi}_{a,b}:\mathbb{F}_{2^{m}}\to \mathbb{F}_{2^{m}},\qquad
\widetilde{\pi}_{a,b}(z)=az+b,
\]
with operations taken in \(\mathbb{F}_{2^{m}}\).
Transporting to \([N]\) via \(\varphi\), define
\[
\pi_{a,b}\;:=\;\varphi^{-1}\circ \widetilde{\pi}_{a,b}\circ \varphi \;:\; [N]\to [N].
\]
Now set
\[
\mathcal{F}_N \;:=\; \{\pi_{a,b}\,:\, a\in \mathbb{F}_{2^{m}}^{\times},\; b\in \mathbb{F}_{2^{m}}\}.
\]
Since \(a\neq 0\), each \(\widetilde{\pi}_{a,b}\) is a bijection on \(\mathbb{F}_{2^{m}}\); hence \(\pi_{a,b}\) is a bijection on \([N]\).
Thus \(\mathcal{F}_N\subseteq \mathcal{S}_N\) and \(|\mathcal{F}_N|=N(N-1)\).

\begin{lemma}\label{lem:affine_pairwise_binary}
If \(\pi\) is sampled uniformly from \(\mathcal{F}_N\), then for all distinct \(x_1,x_2\in [N]\) and all distinct \(y_1,y_2\in [N]\),
\[
\PP_{\pi\sim \mathrm{Unif}(\mathcal{F}_N)} \bigl[\pi(x_1)=y_1 \, , \, \pi(x_2)=y_2\bigr]
\;=\; \frac{1}{N(N-1)}.
\]
As a consequence, \(\mathcal{F}_N\) is a pairwise independent family of permutations on \([N]\).
\end{lemma}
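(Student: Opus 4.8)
The plan is to push everything through the field isomorphism $\varphi$ and reduce the statement to counting solutions of a $2\times 2$ affine system over $\mathbb{F}_{2^{m}}$. First I would set $\xi_i\coloneqq\varphi(x_i)$ and $\eta_i\coloneqq\varphi(y_i)$ for $i\in\{1,2\}$; since $\varphi$ is a bijection, the hypothesis $x_1\neq x_2$ gives $\xi_1\neq\xi_2$, and $y_1\neq y_2$ gives $\eta_1\neq\eta_2$. By the definition $\pi_{a,b}=\varphi^{-1}\circ\widetilde{\pi}_{a,b}\circ\varphi$ and the bijectivity of $\varphi$, the event $\{\pi_{a,b}(x_1)=y_1,\ \pi_{a,b}(x_2)=y_2\}$ is \emph{exactly} the event $\{a\xi_1+b=\eta_1,\ a\xi_2+b=\eta_2\}$, with all arithmetic in $\mathbb{F}_{2^{m}}$. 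So the transport via $\varphi$ is purely formal, and the real content concerns $\widetilde{\pi}_{a,b}$.

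Next I would solve this linear system for $(a,b)$. Subtracting the two equations gives $a(\xi_1-\xi_2)=\eta_1-\eta_2$; since $\xi_1\neq\xi_2$, the element $\xi_1-\xi_2$ is a unit in $\mathbb{F}_{2^{m}}$, so $a=(\eta_1-\eta_2)(\xi_1-\xi_2)^{-1}$ is uniquely determined, and then $b=\eta_1-a\xi_1$ is uniquely determined. The one point that needs checking is that this unique solution actually lies in the parameter set $\mathbb{F}_{2^{m}}^{\times}\times\mathbb{F}_{2^{m}}$: this holds because $\eta_1\neq\eta_2$ forces $a\neq 0$. Hence there is exactly one admissible pair $(a,b)$ realizing the prescribed images $y_1,y_2$. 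Since $(a,b)$ is drawn uniformly over the $N(N-1)$ pairs in $\mathbb{F}_{2^{m}}^{\times}\times\mathbb{F}_{2^{m}}$, the probability of this event is $\tfrac{1}{N(N-1)}$, which is the claimed identity.

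The ``as a consequence'' clause is then immediate: the displayed equation is precisely the definition of a pairwise independent permutation family given earlier in the excerpt, namely that $(\pi(x_1),\pi(x_2))$ is uniform over the $N(N-1)$ ordered pairs of distinct elements of $[N]$. If one wishes, the marginal $\PP_{\pi\sim\mathrm{Unif}(\mathcal{F}_N)}[\pi(x_1)=y_1]=\tfrac1N$ also drops out by summing the joint probability over $\eta_2\neq\eta_1$.

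I do not expect a genuine obstacle; the only items requiring care are bookkeeping. I must remember that $\mathbb{F}_{2^{m}}$ has characteristic $2$, so ``subtraction'' coincides with addition and signs may be dropped throughout — this is cosmetic but should be stated so the solving step is unambiguous. I should also be explicit that it is the bijectivity of $\varphi$ (and of $\varphi^{-1}$) that lets me translate ``distinct in $[N]$'' into ``distinct in $\mathbb{F}_{2^{m}}$'' in both the domain and the range, and that the affine map $\widetilde{\pi}_{a,b}$ is a bijection precisely because $a\neq 0$, which is already part of the construction.
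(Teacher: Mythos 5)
Your proposal is correct and follows essentially the same argument as the paper: transport the event through \(\varphi\), subtract the two affine equations over \(\mathbb{F}_{2^m}\), and use \(\xi_1\neq\xi_2\) and \(\eta_1\neq\eta_2\) to show there is exactly one admissible \((a,b)\in\mathbb{F}_{2^m}^{\times}\times\mathbb{F}_{2^m}\), giving probability \(\tfrac{1}{N(N-1)}\) under the uniform draw. Your added remarks (characteristic-2 bookkeeping, explicitly noting \(a\neq 0\)) are fine but do not change the substance.
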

\begin{proof}
The proof is standard and can be found in most texts on the construction of pairwise independent hash families (e.g. see~\cite{luby2006pairwise}). We include it here for the reader’s convenience. Let \(z_i=\varphi(x_i)\) and \(w_i=\varphi(y_i)\) for \(i=1,2\). Then
\[
\pi(x_1)=y_1,\qquad \pi(x_2)=y_2
\]
are equivalent to
\[
\widetilde{\pi}_{a,b}(z_1)=w_1,\qquad \widetilde{\pi}_{a,b}(z_2)=w_2,
\]
i.e.,
\begin{equation}\label{eq:lin_system_binary}
a z_1 + b = w_1,\qquad a z_2 + b = w_2 \quad \text{in } \mathbb{F}_{2^{m}}.
\end{equation}
Because \(x_1\ne x_2\) we have \(z_1\ne z_2\), and because \(y_1\ne y_2\) we have \(w_1\ne w_2\).
Subtracting the two equations in \eqref{eq:lin_system_binary} gives
\[
a(z_1 - z_2) = w_1 - w_2,
\]
which has the unique solution
\[
a = \frac{w_1 - w_2}{\,z_1 - z_2\,} \in \mathbb{F}_{2^{m}}^{\times}.
\]
Then \(b\) is uniquely determined by \(b = w_1 - a z_1\). Thus there is exactly one pair
\((a,b)\in \mathbb{F}_{2^{m}}^{\times}\times \mathbb{F}_{2^{m}}\) that satisfies \eqref{eq:lin_system_binary}. Since \(|\mathcal{F}_N|=N(N-1)\) and the sampling is uniform, the probability equals \(1/(N(N-1))\).

\end{proof}

\paragraph{Evaluation and storage.}
We work in the word\mbox{-}RAM model: reading a single integer in \(\{0,\dots,N-1\}\) from memory and performing arithmetic on such integers each takes constant time.
Since elements of \(\mathbb{F}_{2^m}\) have bit\mbox{-}length \(m=\Theta(\log N)\), we can encode each element in \(\bigO(1)\) machine words under this model.
Each \(\pi_{a,b}\) is specified by \((a,b)\) (and we store \(a^{-1}\) once), so the description uses \(\bigO(\log N)\) bits.

\emph{Cost of \(\varphi\) and \(\varphi^{-1}\).}
Packing/unpacking the \(m\) base\mbox{-}2 digits of \(x\) (or the coefficient vector) and converting to/from the polynomial\mbox{-}basis representation modulo \(f(t)\) takes \(\bigO(1)\) time in this model.
Writing \(T(\cdot)\) for the computation time on a single input under the word\mbox{-}RAM model (unit\mbox{-}cost arithmetic on integers in \(\{0,\dots,N-1\}\)), it follows that
\[
T(\varphi)=T(\varphi^{-1})=\bigO(1).
\]

\emph{Field operations over \(\mathbb{F}_{2^m}\).}
With the above encoding, coefficient\mbox{-}wise XOR (addition) and one field multiplication followed by reduction modulo the fixed irreducible \(f(t)\) are each counted as \(\bigO(1)\) time on the word\mbox{-}RAM.
For any \(x\in[N]\),
\begin{align*}
    \pi_{a,b}(x) &= \varphi^{-1}\big(a\cdot \varphi(x)+b\big), \\
    \pi_{a,b}^{-1}(x) &= \varphi^{-1}\big(a^{-1}\cdot(\varphi(x)-b)\big).
\end{align*}
Each query performs one \(\varphi\) operation, one field multiplication, one field addition, and one \(\varphi^{-1}\) operation; hence,
\[
T\big(\pi_{a,b}(x)\big)=T\big(\pi_{a,b}^{-1}(x)\big)=\bigO(1).
\]

\emph{Storage.}
We can store \((a,b,a^{-1})\) using \(\bigO(\log N)\) bits.
The irreducible polynomial \(f(t)\in\mathbb{F}_2[t]\) of degree \(m\) requires \(m\) bits for its non\mbox{-}leading coefficients, i.e., \(\bigO(\log N)\) bits.  



%
%
%
%
%
%
%
%
%
%
%

\bibliographystyle{IEEEtran}
\bibliography{bibliofile}

@article{li2019learning,
  title={Learning {Erd{\H{o}}s--R\'enyi} random graphs via edge detecting queries},
  author={Li, Zihan and Fresacher, Matthias and Scarlett, Jonathan},
  journal={Advances in Neural Information Processing Systems},
  volume={32},
  year={2019}
}

@article{price2020fast,
  title={A Fast Binary Splitting Approach to Non-Adaptive Group Testing},
  author={Price, Eric and Scarlett, Jonathan},
  journal={International Conference on Randomization and Computation (RANDOM)},
  year={2020}
}

@incollection{boucheron2003concentration,
  Title                    = {Concentration Inequalities: A Nonasymptotic Theory of Independence},
  Author                   = {Boucheron, S. and Lugosi, G. and Massart, P.},
  Publisher                = {OUP Oxford},
  Year                     = {2013}
}

@INPROCEEDINGS{austhof2025non,
  author={Austhof, Bethany and Reyzin, Lev and Tani, Erasmo},
  booktitle={IEEE International Symposium on Information Theory (ISIT)}, 
  title={Non-Adaptive Learning of Random Hypergraphs with Queries}, 
  year={2025},
  volume={},
  number={},
  pages={1-6},
  keywords={Lower bound;Standards;Testing;Information theory},
  doi={10.1109/ISIT63088.2025.11195344}}

@inproceedings{li2024noisy,
  title={Noisy Nonadaptive Group Testing with Binary Splitting: New Test Design and Improvement on {Price--Scarlett--Tan}'s Scheme},
  author={Li, Xiaxin and Mazumdar, Arya},
  booktitle={IEEE International Symposium on Information Theory (ISIT)},
  pages={1--6},
  year={2025},
}

@inproceedings{bouvel2005combinatorial,
  title={Combinatorial search on graphs motivated by bioinformatics applications: A brief survey},
  author={Bouvel, Mathilde and Grebinski, Vladimir and Kucherov, Gregory},
  booktitle={International Workshop on Graph-Theoretic Concepts in Computer Science},
  pages={16--27},
  year={2005},
  organization={Springer}
}

@inproceedings{abasi2019learning,
  title={On learning graphs with edge-detecting queries},
  author={Abasi, Hasan and Nader, Bshouty},
  booktitle={Algorithmic Learning Theory},
  pages={3--30},
  year={2019},
  organization={PMLR}
}

@article{johann2002group,
  title={A group testing problem for graphs with several defective edges},
  author={Johann, Petra},
  journal={Discrete applied mathematics},
  volume={117},
  number={1-3},
  pages={99--108},
  year={2002},
  publisher={Elsevier}
}

@article{cai2017efficient,
  title={Efficient algorithms for noisy group testing},
  author={Cai, Sheng and Jahangoshahi, Mohammad and Bakshi, Mayank and Jaggi, Sidharth},
  journal={IEEE Transactions on Information Theory},
  volume={63},
  number={4},
  pages={2113--2136},
  year={2017},
  publisher={IEEE}
}

@article{aldridge2019group,
  title={Group testing: an information theory perspective},
  author={Aldridge, Matthew and Johnson, Oliver and Scarlett, Jonathan},
  journal={Foundations and Trends{\textregistered} in Communications and Information Theory},
  volume={15},
  number={3-4},
  pages={196--392},
  year={2019},
  publisher={Now Publishers, Inc.}
}

@book{du1999combinatorial,
  title={Combinatorial group testing and its applications},
  author={Du, Ding-Zhu and Hwang, Frank Kwang-ming},
  volume={12},
  year={1999},
  publisher={World Scientific}
}

@article{aigner1986search,
  title={Search problems on graphs},
  author={Aigner, Martin},
  journal={Discrete Applied Mathematics},
  volume={14},
  number={3},
  pages={215--230},
  year={1986},
  publisher={Elsevier}
}

@article{aigner1988searching,
  title={Searching for an edge in a graph},
  author={Aigner, Martin and Triesch, Eberhard},
  journal={Journal of graph theory},
  volume={12},
  number={1},
  pages={45--57},
  year={1988},
  publisher={Wiley Online Library}
}

@article{alon2005learning,
  title={Learning a hidden subgraph},
  author={Alon, Noga and Asodi, Vera},
  journal={SIAM Journal on Discrete Mathematics},
  volume={18},
  number={4},
  pages={697--712},
  year={2005},
  publisher={SIAM}
}

@article{alon2004learning,
  title={Learning a hidden matching},
  author={Alon, Noga and Beigel, Richard and Kasif, Simon and Rudich, Steven and Sudakov, Benny},
  journal={SIAM Journal on Computing},
  volume={33},
  number={2},
  pages={487--501},
  year={2004},
  publisher={SIAM}
}

@article{grebinski1998reconstructing,
  title={Reconstructing a Hamiltonian cycle by querying the graph: Application to DNA physical mapping},
  author={Grebinski, Vladimir and Kucherov, Gregory},
  journal={Discrete Applied Mathematics},
  volume={88},
  number={1-3},
  pages={147--165},
  year={1998},
  publisher={Elsevier}
}

@article{abasi2018non,
  title={Non-adaptive learning of a hidden hypergraph},
  author={Abasi, Hasan and Bshouty, Nader H and Mazzawi, Hanna},
  journal={Theoretical Computer Science},
  volume={716},
  pages={15--27},
  year={2018},
  publisher={Elsevier}
}

@article{angluin2006learning,
  title={Learning a Hidden Hypergraph.},
  author={Angluin, Dana and Chen, Jiang and Warmuth, Manfred},
  journal={Journal of Machine Learning Research},
  volume={7},
  number={10},
  year={2006}
}

@article{angluin2008learning,
  title={Learning a hidden graph using ${O}(\log n)$ queries per edge},
  author={Angluin, Dana and Chen, Jiang},
  journal={Journal of Computer and System Sciences},
  volume={74},
  number={4},
  pages={546--556},
  year={2008},
  publisher={Elsevier}
}

@inproceedings{d2016multistage,
  title={On multistage learning a hidden hypergraph},
  author={D'yachkov, Arkadii G and Vorobyev, Ilya V and Polyanskii, NA and Shchukin, V Yu},
  booktitle={IEEE International Symposium on Information Theory (ISIT)},
  pages={1178--1182},
  year={2016}
}

@article{kameli2018non,
  title={Non-adaptive group testing on graphs},
  author={Kameli, Hamid},
  journal={Discrete Mathematics \& Theoretical Computer Science},
  volume={20},
  number={Combinatorics},
  year={2018},
  publisher={Episciences. org}
}

@inproceedings{bshouty2015linear,
  title={Linear time constructions of some-restriction problems},
  author={Bshouty, Nader H},
  booktitle={International Conference on Algorithms and Complexity},
  pages={74--88},
  year={2015},
  organization={Springer}
}

@book{du2006pooling,
  title={Pooling designs and nonadaptive group testing: important tools for DNA sequencing},
  author={Hwang, Frank Kwang-Ming and Du, Ding-Zhu},
  volume={18},
  year={2006},
  publisher={World Scientific}
}

@article{hwang1972method,
  title={A method for detecting all defective members in a population by group testing},
  author={Hwang, Frank K},
  journal={Journal of the American Statistical Association},
  volume={67},
  number={339},
  pages={605--608},
  year={1972},
  publisher={Taylor \& Francis}
}

@article{price2023fast,
  title={Fast splitting algorithms for sparsity-constrained and noisy group testing},
  author={Price, Eric and Scarlett, Jonathan and Tan, Nelvin},
  journal={Information and Inference: A Journal of the IMA},
  volume={12},
  number={2},
  pages={1141--1171},
  year={2023},
  publisher={Oxford University Press}
}

@inproceedings{wang2023quickly,
  title={Quickly-decodable group testing with fewer tests: {Price--Scarlett}’s nonadaptive splitting with explicit scalars},
  author={Wang, Hsin-Po and Gabrys, Ryan and Guruswami, Venkatesan},
  booktitle={IEEE International Symposium on Information Theory (ISIT)},
  pages={1609--1614},
  year={2023}
}

@incollection{bollobas2011random,
  title={Random graphs},
  author={Bollob{\'a}s, B{\'e}la},
  booktitle={Modern graph theory},
  pages={215--252},
  year={2011},
  publisher={Springer}
}

@article{newman2003structure,
  title={The structure and function of complex networks},
  author={Newman, Mark EJ},
  journal={SIAM Review},
  volume={45},
  number={2},
  pages={167--256},
  year={2003},
  publisher={SIAM}
}

@article{motwani1996randomized,
  title={Randomized algorithms},
  author={Motwani, Rajeev and Raghavan, Prabhakar},
  journal={ACM Computing Surveys (CSUR)},
  volume={28},
  number={1},
  pages={33--37},
  year={1996},
  publisher={ACM New York, NY, USA}
}

@inproceedings{cheraghchi2020combinatorial,
  title={Combinatorial group testing and sparse recovery schemes with near-optimal decoding time},
  author={Cheraghchi, Mahdi and Nakos, Vasileios},
  booktitle={IEEE Annual Symposium on Foundations of Computer Science (FOCS)},
  pages={1203--1213},
  year={2020}
}

@book{mceliece2012finite,
  title={Finite fields for computer scientists and engineers},
  author={McEliece, Robert J},
  volume={23},
  year={2012},
  publisher={Springer Science \& Business Media}
}

@article{luby2006pairwise,
  title={Pairwise independence and derandomization},
  author={Luby, Michael and Wigderson, Avi},
  journal={Foundations and Trends{\textregistered} in Theoretical Computer Science},
  volume={1},
  number={4},
  pages={237--301},
  year={2006},
  publisher={Now Publishers, Inc.}
}

\
\begin{IEEEbiographynophoto}{Hoang Ta}
received the Ph.D. degree in Computer Science from the École Normale Supérieure de Lyon, France, in 2022. He was a research fellow with the National University of Singapore, Singapore, from 2023 to 2025. He is currently a faculty member at Hanoi University of Science and Technology, Vietnam. His research interests include theoretical computer science and quantum information theory.
\end{IEEEbiographynophoto}

\begin{IEEEbiographynophoto}{Jonathan Scarlett}
 received the B.Eng.~degree in electrical engineering and the B.Sci.~degree in  computer science from the University of Melbourne, Australia. From October 2011 to August 2014, he was a Ph.D. student in the Signal Processing and Communications Group at the University of Cambridge, United Kingdom. From September 2014 to September 2017, he was post-doctoral researcher with the Laboratory for Information and Inference Systems at the \'Ecole Polytechnique F\'ed\'erale de Lausanne (EPFL), Switzerland. Since January 2018, he has been with the Department of Computer Science and Department of Mathematics at the National University of Singapore, where he is currently an Associate Professor. His research interests are in the areas of information theory, machine learning, signal processing, and high-dimensional statistics.
\end{IEEEbiographynophoto}

\end{document}